\newcommand{\bsc}{Barcelona Supercomputing Center, Plaça Eusebi G\"uell, 1-3, 08034 Barcelona, Spain}
\newcommand{\ub}{Universitat de Barcelona, 08007 Barcelona, Spain}
\newcommand{\aqa}{{$\langle aQa^L\rangle$} Applied Quantum Algorithms, Universiteit Leiden, Netherlands}
\newcommand{\lorentz}{Lorentz Instituut, Universiteit Leiden, Niels Bohrweg 2, 2333 CA Leiden, Netherlands}
\newcommand{\hri}{Honda Research Institute Europe GmbH, Carl-Legien-Strasse 30, 63073 Offenbach am Main, Germany}
\renewcommand{\vec}[1]{\boldsymbol{#1}}
\newcommand{\Es}[1]{\ensuremath{\mathbb E_{\ket\psi \in S}\left[ #1 \right]}}
\newcommand{\Ex}[1]{\ensuremath{\mathbb E_{x \in \mathcal{X}}\left[ #1 \right]}}
\newcommand{\muts}[2][t]{\ensuremath{\mu_{#1}(#2, S)}}
\newcommand{\mutbars}[2][t]{\ensuremath{\bar\mu_{#1}(#2, S)}}
\newcommand{\mutx}[1][t]{\ensuremath{\mu_{#1}\left(\hat Z_{y}^{(b)}, \mathcal X\right)}}
\newcommand{\sigmax}[1][]
{\ensuremath{\sigma^{#1}\left(\hat Z_{y}^{(b)}, \mathcal X\right)}}
\newcommand{\mutbarx}[1][t]{\ensuremath{\bar\mu_{#1}\left(\hat Z_{y}^{(b)}, \mathcal X\right)}}
\newcommand{\muthaar}[2][t]{\ensuremath{\mu_{#1}(\hat O)}}
\newcommand{\mutbarhaar}[2][t]{\ensuremath{\bar\mu_{#1}(\hat O)}}
\newcommand{\zx}{\ensuremath{z(\vec x)}}
\newcommand{\zxvar}{\ensuremath{z_{\vec\theta}(\vec x)}}
\newcommand{\xbd}{\ensuremath{\mathcal X_{\mathcal B, \mathcal D} }}
\newcommand{\xf}{\ensuremath{\mathcal X_{{\rm F}, \vec\theta}}}
\newcommand{\xru}{\ensuremath{\mathcal X_{{\rm RU}, \vec\theta}}}
\newcommand{\prob}[1]{\ensuremath{\operatorname{Prob}\left( #1 \right)}}
\newcommand{\Zdlp}{\hat{Z}_s}
\newtheorem{definition}{Definition}
\newtheorem{theorem}{Theorem}
\begin{document}

\title{The role of data-induced randomness in quantum machine learning classification tasks}
\author{Berta Casas}
\affiliation{\bsc}
\affiliation{\ub}
\author{Xavier Bonet-Monroig}
\affiliation{\aqa}
\affiliation{\lorentz}
\affiliation{\hri}
\author{Adrián Pérez-Salinas}
\affiliation{\aqa}
\affiliation{\lorentz}

\begin{abstract}
Quantum machine learning (QML) has surged as a prominent area of research with the objective to go beyond the capabilities of classical machine learning models.
A critical aspect of any learning task is the process of data embedding, which directly impacts model performance.
Poorly designed data-embedding strategies can significantly impact the success of a learning task.
Despite its importance, rigorous analyses of data-embedding effects are limited, leaving many cases without effective assessment methods.
In this work, we introduce a metric for binary classification tasks, the \textit{class margin}, by merging the concepts of average randomness and classification margin.
This metric analytically connects data-induced randomness with classification accuracy for a given data-embedding map.
We benchmark a range of data-embedding strategies through \textit{class margin}, demonstrating that data-induced randomness imposes a limit on classification performance.
We expect this work to provide a new approach to evaluate QML models by their data-embedding processes, addressing gaps left by existing analytical tools.
\end{abstract}
\maketitle

The promising pace at which theoretical and experimental quantum computing is progressing has motivated scientists world-wide to search for new applications of this technology.
While initial demonstrations of the power of quantum computers will most likely come from specific problems with proven quantum speed-ups, e.g. quantum simulations or Shor's algorithm~\cite{shor1997polynomialtime, feynman1982simulating}, exploring broader applications is of great importance in order to justify the investment.

Quantum Machine Learning (QML) is among the most promising of such unconventional applications.
The field of machine learning encapsulates a class of computational methods that seeks to uncover hidden patterns in data.
Similarly, QML aims at predicting properties of data by using quantum computing pipelines.
There have been promising results demonstrating the power of quantum data~\cite{huang2021power} processed on a quantum computer.
Further results have been able to show that for highly structured data sets, QML can learn properties more efficiently than classical methods~\cite{liu2021rigorous, molteni2024exponential, gyurik2023exponential, wakeham2024inference,gil-fuster2024relation}.
In contrast, when the data does not exhibit any apparent structure, the use of variational approaches have been popularized in order to find potential hidden properties by optimizing controllable parameters with respect to a cost function~\cite{schuld2021supervised, havlicek2019supervised, perez-salinas2020data}.
The heuristic nature of variational classical and quantum machine learning models hinders rigorous complexity-theoretical analysis, yet the experience from classical machine learning has taught us to study the power of heuristic methods.

Thus, it is pivotal to develop new machinery to characterize QML models, for instance using statistical tools.
Perhaps the most notable of such features is the barren plateaus (BPs)~\cite{mcclean2018barren} phenomenon, which refers to the hardness of optimizing a variational model due to exponentially vanishing gradients.
In this context, vanishing gradients are a consequence of parameter-dependent states approximating $t$-designs, i.e., resembling a Haar-random distribution \cite{holmes2022connecting}, which has been dubbed as \textit{expressibility}~\cite{sim2019expressibility, cerezo2021cost, larocca2023theory, larocca2022diagnosing}.

In this work, we propose a method to assess the suitability of the data-embedding map to conduct classification.
To this end, we establish a connection between data-induced randomness and the performance of QML models for binary classification tasks by combining average randomness~\cite{bonet-monroig2024verifying} and the concept of margin in classification tasks \cite{bartlett1996fatshattering}.
We define a new metric, \textit{class margin}, which quantifies classification accuracy.
The concept of margin used in this manuscript is also prominent in generalization learning theory~\cite{vapnik2000nature}, where the so-called fat-shattering dimension arises from it.
However, here we define class margin for characterizing data embeddings.

The main contribution of our manuscript is a set of analytical results showing that the classification accuracy of these models is limited by data-induced randomness, that is, if the quantum states generated by the data-embedding process are approximately drawn from a Haar-random distribution.
We support our analytical findings with three examples: (i) a learning problem with provable quantum advantage that encodes the Discrete Logarithm Problem (DLP)~\cite{liu2021rigorous}, (ii) a tailored task to identify bias in the observable, and (iii) a numerical comparison between variational QML models based on feature maps~\cite{havlicek2019supervised} and data re-uploading \cite{perez-salinas2020data}.

This paper is organized as follows.
In~\Cref{sec.background} we introduce the concept of average randomness and its application to variational models.
Our main result is presented in~\Cref{sec:av_randomness_qml}, where we analytically show how data-induced randomness affects the performance of a classification task.
The three examples are presented in~\Cref{sec:applications}.
In~\Cref{sec.conclusions} we give our conclusions and open questions regarding the effects of data-induced randomness for QML tasks.

\section{Background} \label{sec.background}
\subsection{Binary classification in quantum machine learning}

We focus on QML for binary classification tasks.
In our framework, any QML algorithm comprises two components: (i) an embedding map that transforms data into quantum states, and (ii) an observable used to measure the expectation values of the data-induced quantum states.
The most prominent examples of such framework are variational QML models~\cite{cerezo2021variational, bharti2022noisy}.
In linear models, the data is loaded into quantum states via a fixed embedding map, followed by a parameterized quantum circuit.
This circuit can be interpreted as a variational change of basis.
Data re-uploading models \cite{perez-salinas2020data, schuld2021effect}, on the other hand, can be viewed as employing a tunable embedding map while maintaining a fixed observable.
We refer the reader to \Cref{fig:sphere-data} for a practical visualization of these concepts.

Kernel methods can also be interpreted as linear classification models~\cite{liu2021rigorous}.
In this learning framework, a quantum kernel function that quantifies the similarity between data points, is used as an input for a support vector machine~\cite{cortes1995supportvector} to perform a classification task.
Therefore, there exist a direct equivalence between kernel-based and linear classifications through the representer theorems~\cite{scholkopf2001generalized}.

\subsection{Average randomness}
Throughout this manuscript, we are going to make use of the statistical properties of sets of states $S = \left\{ \ket\psi\right\}$~\cite{bonet-monroig2024verifying}.
The properties of these states are  measured with respect to a given quantum observable $\hat O$, with known spectrum.
In particular, we take advantage of the notion of $\hat{O}$-shadowed statistical moments.
These are the statistical moments of an observable $\hat O$ calculated over the set of states $S$,
\begin{equation}
   \muts{\hat O} = \Es{\bra{\psi} \hat{O}\ket\psi ^t},
\end{equation}
or more conveniently, their centered moments,
\begin{equation}
    \mutbars{\hat O} = \Es{\left(\bra\psi \hat{O}\ket\psi - \mu_1(\hat O, S)\right)^t},
\end{equation}
for $t>1$.
Note that the second moment, $t=2$, $\mutbars[2]{\hat O} \equiv \sigma^2(\hat O, S)$ is exactly the variance of the observable. 
The statistical moments can be used to compare a distribution of quantum states $S$ to the Haar-random distribution.
The difference between these distributions of states, as seen through a given observable $\hat O$, can be quantified through the average anti-randomness
\begin{equation}
    \mathcal{A}_t^{(\hat{O})}(S) = \left\vert\mutbars{\hat O} - \bar\mu_t(\hat{O})\right\vert,
    \label{eq:anti_randomness_def}
\end{equation}
where $\bar\mu_t(\hat O)$ assumes averaging over the Haar-random distribution.
Since the standardized moments are identically $0$ for $t = 1$, first order anti-randomness will become
\begin{equation}
    \mathcal{A}_1^{(\hat{O})}(S) = \left\vert\muts[1]{\hat O} - \mu_1(\hat{O})\right\vert.
\end{equation}
Following this argumentation, if $\mathcal{A}_t^{(\hat{O})}(S) = 0$, then the $t$-th statistical moment of $\bra\psi \hat O \ket\psi$ for $\ket\psi \in S$ is indistinguishable from that of the Haar-random distribution.
\begin{figure}[t!]
    \centering
\includegraphics[width=\linewidth]{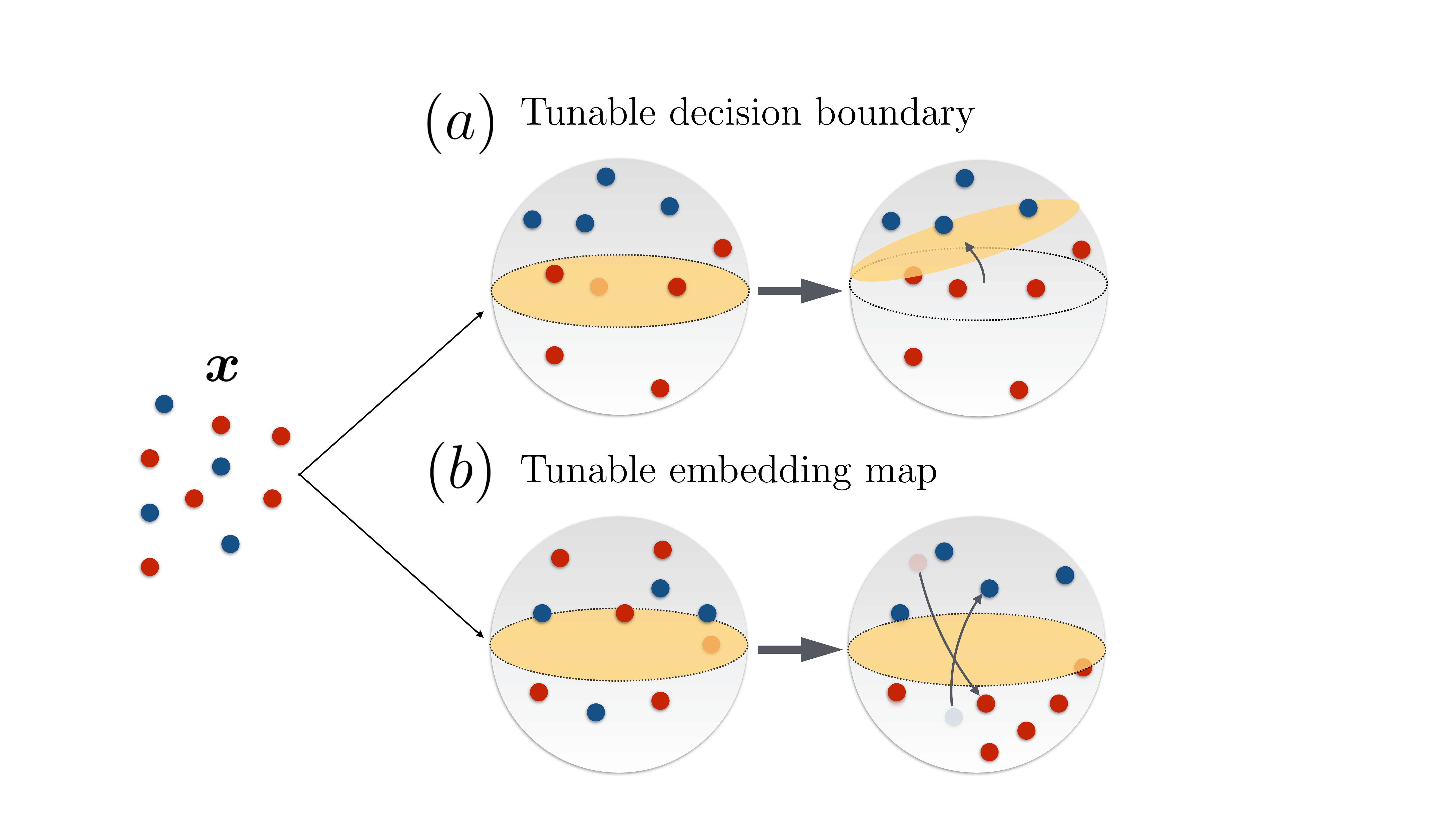}
    \caption{Graphical interpretation of (a) tunable decision boundaries and (b) tunable embedding kernels. In feature-map models, optimization can only provide the optimal observable, and performance is upper bounded by the feature map. Re-uploading models are capable of optimizing the data embedding to perform classification over arbitrary data sets. }
    \label{fig:sphere-data}
\end{figure} 

This notion is related to (spherical) $t$-designs. 
\begin{definition}[(Spherical) $t$-designs~\cite{DELSARTE1976230, ambainis2007quantum}]
    Let $S=\{ \ket\psi\},$ with $\ket\psi \in \mathbb C^N$ be a set of normalized quantum states, and let $\nu(\ket\psi)$ be the Haar measure over states. Then, $S$ is a spherical $t$-design if
    \begin{equation}
        \mathbb E_{\ket\psi \in S}\left[\left(\ket\psi\bra\psi\right)^{\otimes t}\right] = \int_{\ket\phi} d\nu(\phi) \left(\ket\phi\bra\phi\right)^{\otimes t}.
    \end{equation}
\end{definition}

A less restrictive definition of spherical $t$-designs is that of spherical $\hat{O}-$shadowed $t$-designs;
\begin{definition}[$\hat O$-shadowed $t$-design]
   A set of states forms a spherical $\hat{O}$-shadowed $t$-design if $\mathcal{A}_t^{(\hat{O})}(S) = 0$.
\end{definition}
It is important to emphasize that $S$ being $\hat O$-shadowed $t$-design is a necessary but not sufficient condition for $S$ to be a spherical $t$-design.
Furthermore, if $S$ is a $\hat O$-shadowed $t$-design for all positive integers $t$, then $S$ cannot be distinguished from the Haar-random distribution through the observable $\hat O$. 

The statistical moments $\muts{\hat O}$ can be estimated through Monte Carlo sampling over the set $S$. The anti-randomness can also be estimated, since it is possible to analytically compute $\muthaar{\hat O}$~\cite{bonet-monroig2024verifying}. For more details, we refer the reader to Appendix~\ref{appendix:variance_O} and ~\ref{app:centered_moments}. 

\subsection{Effects of randomness in variational quantum algorithms}
Now, we place the notion of average randomness in the context of variational quantum algorithms (VQAs), and show as an example particular to the observed phenomenon of BPs~\cite{mcclean2018barren}.
Extensive research has been devoted to VQAs since its inception~\cite{peruzzo2014variational,bharti2022noisy, cerezo2021variational}, due to its feasibility to be implemented in NISQ~\cite{preskill2018quantum} hardware, and potential to achieve quantum advantage in pre-fault-tolerant quantum computers.
The central piece of any VQA is the so-called parametrized quantum circuit (PQC); a set of quantum operations with classical knobs that can be tuned to navigate the space of accessible quantum states.
More formally,
\begin{equation}
    U(\vec \theta) = \prod_{i = 1}^M V_i U(\theta_i),
\end{equation}
where $V_i,$ and $ U(\theta_i)$ are unitary gates. The navigation of the space of solutions is done by minimization a cost function (i.e. a quantum observable),
\begin{equation}
    \mathcal L(\vec\theta, \hat O) = \bra{\psi(\bm\theta)}\hat O \ket{\psi(\bm\theta)}.
\end{equation}

A PQC defines a family of quantum states $S_\Theta$,
\begin{equation}
    S_\Theta = \{ \psi(\vec\theta) = U(\bm\theta) \ket 0 \vert \vec\theta \sim \Theta \},
\end{equation}
with $U(\vec\theta)$ the actual circuit, and $\Theta$ a distribution of the parameters.

The statistical properties of the optimization landscape induced by $\mathcal L(\vec\theta, \hat O)$ can be inferred from the first two shadowed statistical moments $\mu_t(\hat O, S_\Theta)$:
\begin{multline}
    \operatorname{Var}_{\theta \sim \Theta}\left[\mathcal L(\vec\theta, \hat O)\right] = \sigma^2(\hat O, S_\Theta) = \\  \mu_2 (\hat{O}, S_\Theta) - \mu_1^2 (\hat{O}, S_\Theta).
\end{multline}
Therefore, the notion of randomness is tightly linked to properties of $\mathcal L(\vec\theta, \hat O)$. 
This has been extensively studied in the context of BPs \cite{mcclean2018barren}, manifested as an exponential-in-qubits concentration of the aforementioned quantity, specifically:
\begin{equation}
    \sigma^2(\hat O, S_\Theta) \in e^{-\Omega(n)}. 
\end{equation}
Using the shadowed moments, we can upper-bound the variance of the loss by using the triangular inequality as
\begin{equation}
    \sigma^2(\hat O, S_\Theta) \leq \muthaar[2]{\hat O} + \mathcal{A}_2^{(\hat{O})}(S_\Theta).
\end{equation}
This observation recovers known results from the barren plateau literature~\cite{holmes2022connecting, arrasmith2021effect}, identifying $\mathcal{A}_2^{(\hat{O})}(S_\Theta)$ as the distance to $2$-designs, and arguing that $\muthaar[2]{\hat O} \in e^{-\Omega(n)}$ for $2$-designs. A detailed statement of this property is available in~Appendix \ref{appendix:variance_O}.
Taking advantage of the $\hat{O}-$shadowed $t$-moments, we have a compact and robust framework for analyzing the statistical properties of variational models.

\section{Data-induced randomness in QML classification tasks}\label{sec:av_randomness_qml}
This section contains the main result of our work, that is, the role of data-induced randomness in the performance QML classification tasks.
We analyze how the data embedding can affect the ability to classify quantum states into different classes.
Note that our results are independent of the training process of the models.

For simplicity, we consider the simple yet relevant example of a binary classification task with a quantum circuit.
However, this framework could naturally extend to multi-classification tasks.
In this problem, the data is inserted in the form of $(\vec x, y)$, with $y \in \{0, 1\}$, and $\vec x \in \mathbb R^m$, for an $m$-dimensional feature space.
The $\bm x$-form data is typically introduced into the quantum computer through a feature map, in the form of $U(\vec x) \in \mathcal{SU}(N)$.
The classification task is then reduced to performing a set of measurement on the data-induced states, \(\ket{\psi(\vec x)} = U(\vec x)\ket{0}^{\otimes n}\),
with a task-dependent observable $\hat O$.
As previously mentioned, this framework captures a large number of  QML models, including \textit{feature embedding}~\cite{havlicek2019supervised}, \textit{kernel methods}~\cite{jerbi2023quantum, kubler2021inductive, liu2021rigorous, rebentrost2014quantum, schuld2021supervised}, and the \textit{data re-uploading}~\cite{perez-salinas2020data, schuld2021effect, vidal2020input}.

From the perspective of average randomness, the set of states is given by 
\begin{equation}
    \mathcal X
    = \left\{ \ket{\psi(\vec x)} \right\}_{\vec x},
\end{equation}
where $\mathcal X$ represents the set of states generated as $\bm x$ runs over the dataset (e.g. the training or test set).

We consider $\hat O$ to be a projector, (i.e. its eigenvalues are $\lambda = \{0, 1\}$).
The outcomes are distributed according to the expected value
\begin{equation}\label{eq.observable_o}
    o(\vec x)
 = \bra{\psi(\vec x)} \hat O \ket{\psi(\vec x)}.
\end{equation}
The label of the classification assigned by the model depends on $o (\vec x)$ and a classification threshold $b\in (0,1)$.
Therefore, $y(\vec x) = 0$ if $o(\vec x)<b$ and $y(\vec x) = 1$ if $o(\vec x)\geq b$. 

\begin{figure}[t!]
    \centering
\includegraphics[width=1\linewidth]{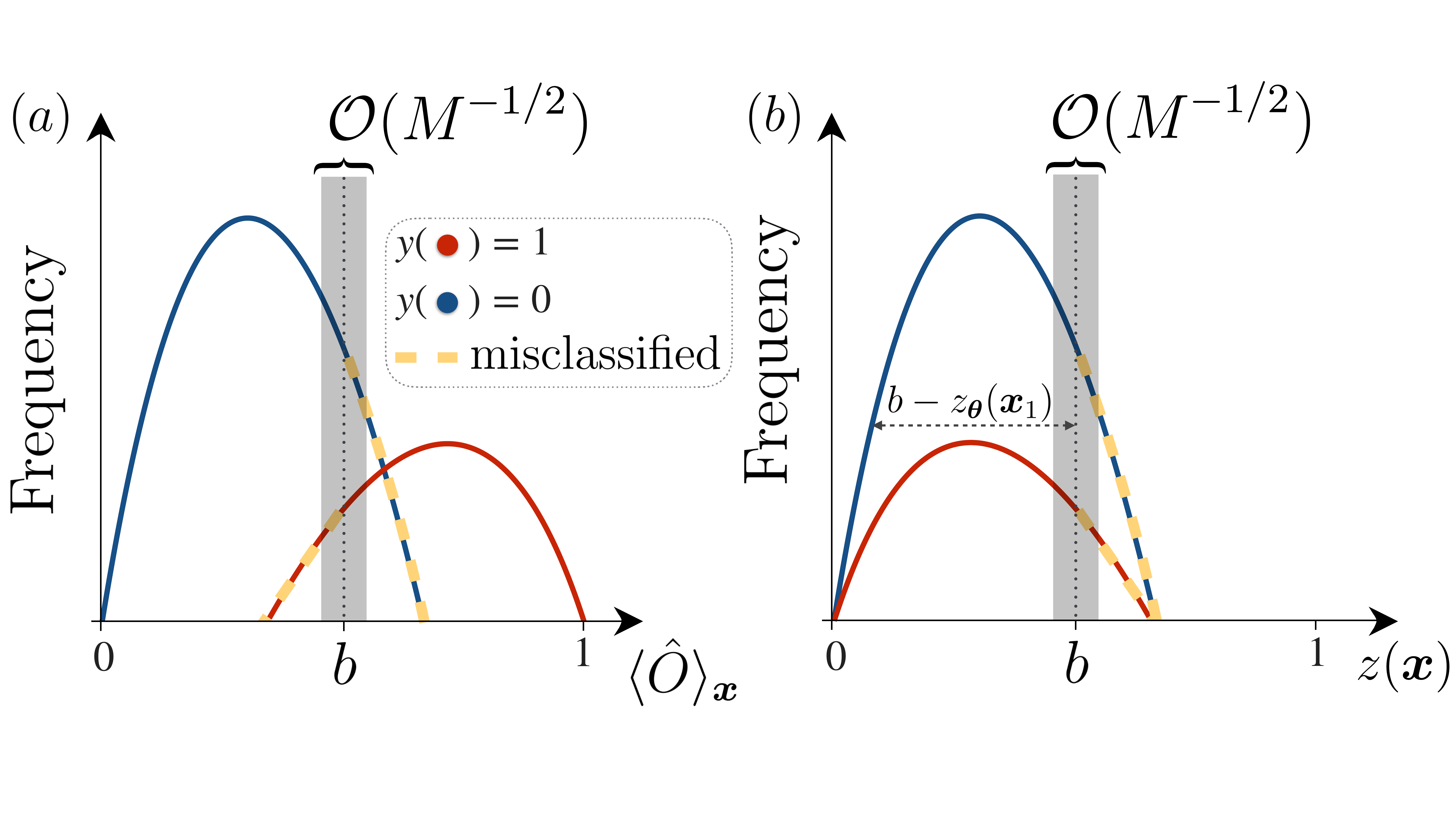}
    \caption{Illustration of the classification criteria and the definition of the class margin $\zx$.
    In both plots, the gray window indicates the region of data points $\mathcal O (M^{-1/2})$ that lie so close to the decision boundary that they cannot be resolved without requiring exponentially many resources as $n$ increases.
    $(a)$ Example of an expected value histogram for a binary classification problem.
    The yellow dashed line represents the misclassified points based on the criteria defined in the text.
    $(b)$ In this plot, data points with $\zx > b$ are  misclassified.
    We also depict the distance to the boundary, $b-\zx$, using a dashed line.}
    \label{fig:z_example}
\end{figure}

To streamline the analysis, we introduce a new variable, \textit{class margin} unifying both classes $\{0 , 1 \}$:
\begin{definition}[Class margin]\label{def.class-margin}
    Let $\mathcal X = \{ \ket{\psi(\vec x)}\}_{\vec x}$ be a set of states generated by a data encoding quantum circuit, where $\vec x$ represents the data.
    Let $\hat O$ be the observable used for classification, measured according to~\Cref{eq.observable_o},
    and let $b$ denote the classification threshold.
    The class margin $\zx$, is then defined as
    \begin{equation}
   \zx = \bra{\psi(\vec x)}   \hat Z^{(b)}_y  \ket{\psi(\vec x)}, 
   \label{eq:class_margin_def}
\end{equation}
where 
\begin{equation}\label{eq:observable_z}
    \hat Z^{(b)}_y = \left\{ \begin{matrix} \hat O & {\rm if} & y(\vec{x}) = 0 
        \\
        f(\hat O, b) & {\rm if} & y(\vec{x}) = 1,
        \end{matrix}\right.,
\end{equation}
where $y(\vec{x})$ is the true label associated to $\vec{x}$ and 
\begin{equation}
    f(\hat O, b) = 
    \left\{ \begin{matrix}
        1 - \frac{(1-b)}{b} \hat O & {\rm if} & 0 \leq o(\vec x) < b  
        \\
        \frac{b}{1-b} (\mathds{I}-\hat O) & {\rm if} & b \leq o(\vec x) \leq 1.
        \end{matrix}\right. 
\label{eq:f(O,b)}
\end{equation}
\end{definition}
A simple example is obtained by selecting $b = 1/2$. In this case, $f(x, 1/2) = 1-x$ if $y(\bm x) = 1$. 

The purpose of class margin is to measure the confidence of a correct classification by quantifying its distance from the decision boundary $b$, rather than an attempt to identify the class itself.
In fact, the class margin is a random variable that depends both on $\vec x$ and $y$, hence, it cannot be used as a predictor. 
However, ~\Cref{def.class-margin} allows for a succinct description of the relationship between data-induced randomness and performance of classification.
For a visual interpretation of class margin we have added~\Cref{fig:z_example}.
Later in the manuscript it will be shown that the statistical moments of the class margin play a crucial role in characterizing the model's performance. 

Here, class margin is purposefully defined to  cover only classification errors due to the data-embedding.
In practice, this translates to the inability to extrapolate the statistical properties of a trained model to test data.
This is contrast to the so-called Generalization Bounds, a family of analytical measures that capture the generalization performance of learning agents~\cite{caro2021encodingdependent, caro2023outofdistribution}.
Yet, a recent work investigated the use of margin-based generalization bounds in QML models \cite{hur2024understandinggeneralizationquantummachine}.

Since $\mathcal{X}$ is a randomly sampled set of states, the class margin $\zx$ is consequently a random variable.
In particular, we are interested in analyzing its statistical moments:
\begin{equation}
    \Ex{\zx^t} \equiv \mutx,
\end{equation}
with the goal of determining the properties of $\zx$ such that accurate classifications are achieved.

To this end, we first consider a fixed value of $\bm x$.
This data point is correctly classified if the corresponding class margin $\zx$ falls below the acceptance threshold $b$.
Additionally, because the classifier returns probabilistic outcomes, it is necessary for $\zx$ to be sufficiently bounded away from $b$ so that a modest number of copies of the state ($M$) will be enough to confidently determine that $\zx \leq b$.
This observation is formalized in the following result. 
\begin{restatable}{lemma}{leclassification}\label{le:classification}
     Consider the class margin $\zx$ for a given data point $\vec{x}$. Suppose the classifier performs $M$ independent measurements of $\zx$ for this data point. Then, for the classifier to correctly classify $\vec{x}$ with probability at least $1 - \delta$, it suffices that
    \begin{equation}
    \zx \leq b - \sqrt{\frac{\log(2/\delta)}{2 M}},
    \end{equation}
    where $b$ is the decision threshold.  
\end{restatable}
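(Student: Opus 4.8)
The plan is to treat \Cref{le:classification} as a concentration-of-measure statement and prove it with Hoeffding's inequality. First I would model the $M$ shots as independent, identically distributed single-shot outcomes $z_1,\dots,z_M$ of the class-margin observable $\hat Z^{(b)}_y$, each supported on a bounded interval and with common expectation $\mathbb E[z_i] = \zx$, and introduce the empirical estimator $\hat z = \frac1M\sum_{i=1}^M z_i$. By the construction in \Cref{def.class-margin}, the affine map $f(\hat O, b)$ is chosen precisely so that, in the infinite-sample limit, a data point is correctly classified if and only if $\zx < b$, for \emph{both} labels $y \in \{0,1\}$. Hence with finitely many shots the decision rule becomes $\hat z < b$, and the misclassification event is exactly $\{\hat z \ge b\}$.

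The core of the argument is then a one-line tail bound. Assuming the outcomes lie in an interval of unit length, the two-sided Hoeffding inequality gives
\begin{equation}
    \prob{\left| \hat z - \zx \right| \ge t} \le 2\,e^{-2Mt^2}.
\end{equation}
If $\zx \le b - t$, then the misclassification event satisfies the inclusion $\{\hat z \ge b\} \subseteq \{\hat z - \zx \ge t\} \subseteq \{|\hat z - \zx| \ge t\}$, so that $\prob{\hat z \ge b} \le 2\,e^{-2Mt^2}$. Demanding that this failure probability be at most $\delta$ and solving $2e^{-2Mt^2} = \delta$ for $t$ yields $t = \sqrt{\log(2/\delta)/(2M)}$, which is exactly the claimed sufficient condition $\zx \le b - \sqrt{\log(2/\delta)/(2M)}$. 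The factor of two under the logarithm is the price of the two-sided bound; a one-sided Hoeffding estimate would be tighter, but the two-sided version gives the clean stated form and still constitutes a valid sufficient condition.

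The step I expect to require the most care is verifying that the single-shot outcomes genuinely lie in a unit-length interval, since this is what fixes the constant $1/(2M)$ in the exponent. For the canonical choice $b = 1/2$ one has $f(x,1/2) = 1-x$, so $\hat Z^{(b)}_y$ keeps its eigenvalues in $\{0,1\}$ and Hoeffding applies with range exactly $1$; for general $b$ the map $f(\hat O, b)$ stretches the spectrum of $\hat O$ by a factor $b/(1-b)$ or $(1-b)/b$, and one must either track the rescaled interval or restrict to the representative symmetric case. Apart from this bookkeeping, the remaining manipulations — the event inclusion and the inversion of the exponential — are entirely routine.
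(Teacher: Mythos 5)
Your proof is correct and takes essentially the same route as the paper's: both reduce the $M$-shot decision to two-sided Hoeffding concentration of an empirical estimate of $\zx$ around its mean, use the event inclusion at the threshold $b$, and invert $2e^{-2Mt^2}=\delta$ to obtain the margin $\sqrt{\log(2/\delta)/(2M)}$. The range subtlety you flag for general $b$ is also present (and likewise glossed over) in the paper's own proof, which phrases the bound in terms of binomial counts of the projector $\hat O$ outcomes and implicitly works in the $b=1/2$ setting.
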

The proof is an immediate corollary of Hoeffding's bound for the binomial distribution, and can be found in~Appendix \ref{app:classification}.

The performance of a classifier is measured by the accuracy in the classification of the data points.
We will quantify this performance in terms of the statistical properties of the class margin $\zx$.
A first result from probability theory allows us to bound the classification accuracy.
\begin{restatable}{theorem}{lechebyshev}\label{le:chebyshev}
Consider a quantum classifier defined by the set $\mathcal X_{\bm\theta}$ and the observable $\hat Z^{(b)}_{y(x)}$. The classification is conducted with $M$ copies of the state for each $\bm x$. The probability of failure of classifying a random data point $\vec x$ is given by
\begin{multline}
    \operatorname{Prob}_F \left( \hat Z_y^{(b)}, \mathcal X\right) \leq
    \\  \frac{\sigmax[2]}{\left(b - \mutx[1] - \sqrt{\frac{\log(2/\delta)}{2 M}}\right)^2}, 
\end{multline}
which includes incorrect classifications and classes not resolved by the measurement uncertainty. 
\end{restatable}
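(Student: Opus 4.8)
The plan is to combine \Cref{le:classification} with Chebyshev's inequality applied to the class margin, viewed as a random variable over the randomly sampled data set $\mathcal X$. First I would characterize the failure event. By \Cref{le:classification}, a data point $\vec x$ is correctly classified with probability at least $1-\delta$ whenever $\zx \leq b - \sqrt{\log(2/\delta)/(2M)}$. The complementary event — where the classifier either genuinely misclassifies $\vec x$ or cannot resolve it from the boundary within $M$ measurements — is therefore contained in the one-sided tail $\{\zx > b - \sqrt{\log(2/\delta)/(2M)}\}$. This is the crucial translation: both sources of failure quoted in the statement collapse into a single threshold condition on $\zx$, because the observable $\hat Z_y^{(b)}$ in \Cref{def.class-margin} is constructed so that $\zx \in [0,1]$ with $\zx < b$ signalling a correct, resolvable label for either class.

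Next I would bound this tail probability. Treating $\zx$ as a random variable with mean $\mutx[1]$ and variance $\sigmax[2]$, I set the deviation to
\[
    k = b - \mutx[1] - \sqrt{\frac{\log(2/\delta)}{2M}},
\]
so that the failure event equals $\{\zx - \mutx[1] > k\}$. Assuming $k>0$, this one-sided tail is contained in the symmetric tail $\{|\zx - \mutx[1]| > k\}$, and Chebyshev's inequality gives
\[
    \prob{\zx - \mutx[1] > k} \leq \frac{\sigmax[2]}{k^2},
\]
which is exactly the claimed bound once $k$ is substituted back.

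The main obstacle is the first step rather than the probabilistic estimate: I must argue cleanly that the failure set is genuinely captured by the single inequality $\zx > b - \sqrt{\log(2/\delta)/(2M)}$, so that no failing data point escapes the tail. This rests entirely on the piecewise construction of $f(\hat O, b)$, which I would verify maps both labels onto a common margin in $[0,1]$ with correctness equivalent to $\zx < b$, while the subtraction of $\sqrt{\log(2/\delta)/(2M)}$ absorbs the finite-shot resolution from \Cref{le:classification}. A secondary point worth flagging is the positivity requirement $k>0$: when $\mutx[1] + \sqrt{\log(2/\delta)/(2M)} \geq b$ the right-hand side exceeds $1$ and the inequality is vacuous, so the bound is informative precisely when the data-induced mean margin sits safely below the corrected threshold — which is where the connection to data-induced randomness developed later in the paper becomes meaningful.
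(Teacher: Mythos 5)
Your proposal is correct and follows essentially the same route as the paper's own proof: identify the failure event with the one-sided tail $\left\{\zx \geq b - \sqrt{\log(2/\delta)/(2M)}\right\}$ via \Cref{le:classification}, then apply Chebyshev's inequality with the deviation $k = b - \mutx[1] - \sqrt{\log(2/\delta)/(2M)}$. Your explicit remarks that the one-sided tail is contained in the symmetric tail and that the bound is only informative when $k>0$ are minor refinements the paper leaves implicit, but they do not change the argument.
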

The proof can be found in~Appendix \ref{app:chebyshev}.
The previous result immediately imposes requirements on the $M$ needed to evaluate the classifier that depend on the first and second $\hat{Z}_y^{(b)}-$shadowed moments.
This provides a necessary condition for correct classification:
\begin{restatable}{corollary}{corsamples}\label{th.samples}
    Consider a quantum classifier defined by the set $\mathcal X_{\bm\theta}$ and the observable $\hat Z^{(b)}_{y(x)}$. The classifier correctly classifies a fraction of at least $1 - k$ of the data points with probability at least $1 - \delta$. The number of copies needed for optimal performance is bounded by
    \begin{equation}
        \frac{2M}{\log^2(2/\delta)} \geq \left( b - \mutx[1] - k^{-1} \sigmax \right)^{-2}.
    \end{equation}
\end{restatable}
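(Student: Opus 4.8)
The plan is to read \Cref{th.samples} as a direct inversion of \Cref{le:chebyshev}. The phrase ``correctly classifies a fraction of at least $1-k$'' is simply the demand that the failure probability be bounded by $k$, so I would substitute $\operatorname{Prob}_F(\hat Z_y^{(b)}, \mathcal X) \leq k$ into the bound already established and solve the resulting inequality for the number of copies $M$. Concretely, it suffices to require
\begin{equation}
    \frac{\sigmax[2]}{\left(b - \mutx[1] - \sqrt{\frac{\log(2/\delta)}{2M}}\right)^{2}} \leq k,
\end{equation}
since by \Cref{le:chebyshev} this guarantees that, at confidence $1-\delta$, at most a fraction $k$ of points are misclassified or left unresolved by the measurement uncertainty.

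First I would isolate the finite-sample resolution term $\sqrt{\log(2/\delta)/(2M)}$ that \Cref{le:classification} contributes through Hoeffding's bound. Cross-multiplying and taking the positive square root of both sides lower-bounds the effective margin by a multiple of $\sigmax$ scaled by an inverse power of $k$; moving the resolution term across, squaring, and taking reciprocals then delivers the stated lower bound on $M$. Each of these manipulations is elementary, so no conceptual difficulty arises once the substitution $\operatorname{Prob}_F \leq k$ is in place --- the corollary is really a rewriting of \Cref{le:chebyshev} with $M$ promoted to the unknown rather than the failure probability.

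The one step that genuinely needs care, and the main obstacle, is the positivity of the margin $b - \mutx[1] - \sqrt{\log(2/\delta)/(2M)}$. The Chebyshev estimate in \Cref{le:chebyshev} is only non-vacuous when this quantity is strictly positive, so I would restrict to the regime in which the mean class margin $\mutx[1]$ lies far enough below the threshold $b$ that, for large enough $M$, the resolution term cannot close the gap. Only there can one take square roots and invert without reversing the inequality, and only there does the resulting bound on $M$ carry operational meaning. I would also keep the two failure channels cleanly separated throughout: the variance $\sigmax[2]$ governs the fraction $k$ via Chebyshev, while the copy count $M$ and confidence $\delta$ govern the resolution via Hoeffding, and the corollary folds both into a single requirement, with $1-\delta$ the confidence at which the $1-k$ accuracy is attained.
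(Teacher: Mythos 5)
You read \Cref{th.samples} exactly as the paper does---impose $\operatorname{Prob}_F\left(\hat Z_y^{(b)}, \mathcal X\right) \leq k$ in \Cref{le:chebyshev} and solve for $M$ (the paper's entire proof is the remark that the corollary is a ``direct reformulation'' of that theorem)---and your insistence on the positivity of the margin $b - \mutx[1] - \sqrt{\log(2/\delta)/(2M)}$ is a caveat the paper leaves implicit, so your approach is the same and correct. One remark: carried out literally, the algebra yields $2M/\log(2/\delta) \geq \left( b - \mutx[1] - k^{-1/2}\,\sigmax \right)^{-2}$, i.e.\ a single power of the logarithm and $k^{-1/2}$ rather than the printed $\log^2(2/\delta)$ and $k^{-1}$; this discrepancy sits in the corollary as stated (an apparent typo in the paper), which your phrase ``an inverse power of $k$'' glosses over rather than resolves.
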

This equation comes from a direct reformulation of~\Cref{le:chebyshev}.
The interpretation of this result is a tension between the number of copies of the state $M$ required to conduct the classification and the fraction of misclassified points.
Following standard efficiency conventions, $M$ must scale polynomially in $n$.
Under this assumption, an efficient classification is possible only if  
\begin{align}
\label{eq:condition_mu1}
    b - \mutx[1] & \in \Omega\left( {\rm poly}^{-1} (n) \right) \\
    \sigmax[2] & \leq k \left( b - \mutx[1] \right) \nonumber \\ 
    & \qquad \in \mathcal O \left( {\rm poly}^{-1} (n) \right).
    \label{eq:condition_var}
\end{align}
The interpretation of these results is as follows;
in order to classify as many data points as possible, the average of the class margin $\mutx[1]$ must be at least at a distance $1/\operatorname{poly}(n)$ away from the margin $b$, and the variance must be as small as possible.
When the two conditions are met, a correct classification of the majority of the points is guaranteed beyond resolution accuracy.
At this point, we encourage the reader to revisit~\Cref{fig:z_example} to connect the mathematical and the graphical notions of $\zx$.

The conditions in~\cref{eq:condition_mu1} and~\eqref{eq:condition_var} are only satisfied when the set of states produced by the encoding deviates sufficiently from a Haar-random set of states.
This becomes evident in Appendix~\ref{appendix:variance_O}, where the variance of the observable vanishes exponentially in $n$ as the set of states approaches a shadowed $2$-design.
In such cases, the necessary conditions for classification are not met. 

The previous results only take into account the first and second statistical moments of $\zx$.
However, a stronger statement can be formulated under more restrictive conditions over higher order statistical moments;
\begin{restatable}{lemma}{lebernstein}
    \label{le:Bernstein}
     Consider the a quantum classifier defined by the set $\mathcal X_{\bm\theta}$ and the observable $\hat Z^{(b)}_{y}$. The classification is conducted with $M$ copies for each $\bm x$. If the classifier satisfies that
     \begin{equation}
        \mutbarx[t]^{1/t} \leq \sigmax[2] \frac{L}{e} t
        \label{eq:scaling_bernstein}
    \end{equation}
    for a positive constant $L$, then
    \begin{equation}
    \operatorname{Prob}_F \left( \hat Z_y^{(b)}, \mathcal X\right)\leq \exp\left( - \frac{k^2}{2\left(\sigma^2 + L  k \right) }\right), 
\end{equation}
where $k =\left[ b - \sqrt{\frac{\log(2 / \delta)}{2 M}} - \mu_1\right]$.
\end{restatable}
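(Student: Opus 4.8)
The plan is to prove this as a Bernstein-type concentration inequality through the Chernoff (exponential-moment) method, with the moment hypothesis~\eqref{eq:scaling_bernstein} playing the role of the classical Bernstein moment condition. First I would recast the failure event in centered form. By~\Cref{le:classification}, the point $\vec x$ is resolved correctly once $\zx \leq b - \sqrt{\log(2/\delta)/(2M)}$, so the failure event is exactly $\{\zx > b - \sqrt{\log(2/\delta)/(2M)}\}$. Writing $X = \zx - \mutx[1]$, which is centered with $\mathbb{E}[X]=0$ and $\mathbb{E}[X^2]=\sigmax[2]$, and recalling $k = b - \sqrt{\log(2/\delta)/(2M)} - \mutx[1]$, the failure probability is precisely $\operatorname{Prob}(X > k)$.

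Next I would apply the Chernoff bound $\operatorname{Prob}(X>k) \leq e^{-sk}\,\mathbb{E}[e^{sX}]$ for arbitrary $s>0$ and expand the exponential moment as a power series in the centered moments, $\mathbb{E}[e^{sX}] = 1 + \tfrac{s^2}{2}\sigmax[2] + \sum_{t\geq 3} \tfrac{s^t}{t!}\,\mathbb{E}[X^t]$, isolating the exact second-order (variance) term. The heart of the argument is to bound the higher-order tail of this series using hypothesis~\eqref{eq:scaling_bernstein}. The specific form $\mutbarx[t]^{1/t} \leq \sigmax[2]\,\tfrac{L}{e}\,t$ is engineered so that, after dividing the $t$-th term by $t!$ and invoking the Stirling-type inequality $t^t \leq e^t\, t!$, the factors of $e$ and $t^t$ cancel and the remaining series collapses into a geometric series in $s$. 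Combined with $1+u\leq e^u$, this produces an MGF bound of the Bernstein form $\mathbb{E}[e^{sX}] \leq \exp\!\big(\tfrac{\sigma^2 s^2}{2(1-Ls)}\big)$, valid on the range $0<s<1/L$.

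Finally I would optimize the exponent $-sk + \tfrac{\sigma^2 s^2}{2(1-Ls)}$ over $s\in(0,1/L)$. Choosing the clean (slightly sub-optimal but sufficient) value $s = k/(\sigma^2+Lk)$ gives $1-Ls = \sigma^2/(\sigma^2+Lk)$ and $sk = k^2/(\sigma^2+Lk)$, and substituting collapses the exponent to exactly $-\tfrac{k^2}{2(\sigma^2+Lk)}$, yielding the claimed tail bound. One checks $s<1/L$ holds automatically since $Lk < \sigma^2 + Lk$, so the choice is always admissible.

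The main obstacle I anticipate is the rigorous control of the moment series rather than the final optimization. Odd centered moments can be negative, so I must first pass to absolute moments via $\mathbb{E}[X^t]\leq \mathbb{E}[|X|^t]$ before applying the hypothesis; I must verify convergence, which is what restricts $s$ below $1/L$; and, most delicately, I must track the constants through the $L/e\cdot t$ scaling and the Stirling bound to confirm that they reproduce the Bernstein MGF envelope $\exp\!\big(\tfrac{\sigma^2 s^2}{2(1-Ls)}\big)$ rather than a weaker one. Conceptually, however, the route is the textbook Chernoff-plus-moment-condition derivation of Bernstein's inequality, with~\eqref{eq:scaling_bernstein} supplying exactly the sub-exponential moment growth that such an argument requires.
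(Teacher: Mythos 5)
Your proposal is correct in substance, but it follows a genuinely more self-contained route than the paper. The paper never runs the Chernoff argument itself: it states the classical Bernstein inequality as a black-box theorem, with the moment condition $\mathbb{E}\left[\vert X\vert^t\right] \leq \frac{1}{2}\,\mathbb{E}\left[X^2\right] L^{t-2}\, t!$, and then shows---using the same $t$-th-root manipulation and the same Stirling-type bound $(t!)^{1/t} \geq (2\pi t)^{1/t}\, t/e$ that you invoke, together with $\sigma \leq 1$ and $L \leq 1$ inherited from $\zx \in [0,1]$---that hypothesis~\eqref{eq:scaling_bernstein} implies this classical condition; the tail bound is then simply quoted, with $k = b - \sqrt{\log(2/\delta)/(2M)} - \mu_1$ identified via \Cref{le:classification} exactly as you do. You instead inline the proof of Bernstein's inequality: Chernoff bound, series expansion of the moment generating function, geometric summation, and the explicit optimizer $s = k/(\sigma^2 + Lk)$. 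What your route buys is non-trivial: the version of Bernstein's theorem quoted in the paper carries a range restriction $0 \leq k \leq \frac{1}{2L}\sigma$ and a deviation threshold normalized as $2k\sigma$, neither of which the paper verifies or reconciles when it applies the theorem; your direct optimization shows the choice of $s$ is always admissible (since $Lk < \sigma^2 + Lk$), so the stated bound holds for every $k \geq 0$ with no side condition, which quietly repairs that looseness. The one place where your sketch needs care is the constant-tracking you yourself flagged: the hypothesis gives $\bar\mu_t \leq (\sigma^2 L t/e)^t \leq (\sigma^2 L)^t\, t!$, so your series terms are $(s\sigma^2 L)^t$ rather than the $\frac{\sigma^2}{2} s^t L^{t-2}$ terms of the exact Bernstein envelope; the ratio between them is $2(\sigma^2)^{t-1}L^2$, which is indeed at most $1$ for $t \geq 3$ because a $[0,1]$-valued random variable has $\sigma^2 \leq 1/4$ and the paper's standing assumption $L \leq 1$ applies, so the envelope $\exp\left(\sigma^2 s^2/\left(2(1-Ls)\right)\right)$ is recovered and the final optimization goes through as you describe. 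The paper's own constant manipulations at this step are in fact sloppier than yours (it asserts, e.g., $\pi^{1/t} \geq \pi$), so your derivation is, if anything, the cleaner of the two.
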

The proof can be found in~Appendix \ref{app:berstein}.

\begin{restatable}{lemma}{lesubgaussian}\label{le:subgaussian}
     Consider a quantum classifier defined by the set $\mathcal X_{\bm\theta}$ and the observable $\hat Z^{(b)}_{y}$. The classification is conducted with $M$ copies for each $\bm x$. If the classifier satisfies that
\begin{equation}
    \mutbarx^{1/t} \leq \frac{L}{\sqrt{2 e}} \sqrt t
    \label{eq:scaling_subgaussian}
\end{equation}
for a positive constant $L$, then
\begin{equation}
     \operatorname{Prob}_F \left( \hat Z_y^{(b)}, \mathcal X\right) \leq \exp\left( - \frac{k^2}{3 L^2}\right), 
\end{equation}
where $k =\left[ b - \sqrt{\frac{\log(2 / \delta)}{2 M}} - \mu_1\right]$.
\end{restatable}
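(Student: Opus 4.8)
The plan is to follow the same two-step template used in the preceding lemmas: first turn the failure probability into a one-sided tail of the centered class margin, and then control that tail using the hypothesis \eqref{eq:scaling_subgaussian}, which is exactly the moment-growth signature of a sub-Gaussian variable. For the first step I would invoke \Cref{le:classification}: a point $\vec x$ is classified correctly with confidence $1-\delta$ whenever $\zx \leq b - \sqrt{\log(2/\delta)/(2M)}$, so the failure event is contained in $\{X > k\}$ with $X = \zx - \mutx[1]$ and $k = b - \sqrt{\log(2/\delta)/(2M)} - \mutx[1]$. This gives
\begin{equation}
\operatorname{Prob}_F\left(\hat Z_y^{(b)}, \mathcal X\right) \leq \operatorname{Prob}\left( X > k \right),
\end{equation}
reducing everything to a concentration statement for the centered variable $X$, whose $t$-th (absolute/even) moment is controlled by $\mutbarx[t]$.

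For the second step I would use the Chernoff--moment method. Bounding $\mathbb{E}|X|^t \leq (L/\sqrt{2e})^t t^{t/2}$ from \eqref{eq:scaling_subgaussian} and inserting $t! \geq (t/e)^t$ term by term, the series should telescope into a Gaussian-type generating function,
\begin{equation}
\mathbb{E}\,e^{\lambda X} = 1 + \sum_{t \geq 2} \frac{\lambda^t\, \mathbb{E}[X^t]}{t!} \leq \exp\left( c\, L^2 \lambda^2 \right),
\end{equation}
valid for all $\lambda > 0$ (convergence is automatic since $t^{t/2}/t! \lesssim (e/t)^{t/2}$ kills $\lambda^t$ for every fixed $\lambda$). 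Optimizing the Chernoff exponent then yields
\begin{equation}
\operatorname{Prob}(X > k) \leq \inf_{\lambda > 0} e^{-\lambda k}\, \mathbb{E}\,e^{\lambda X} \leq \exp\left( -\frac{k^2}{4 c L^2}\right),
\end{equation}
so that the choice $c = 3/4$ reproduces the stated $\exp(-k^2/(3L^2))$. As a sharpness check (and a quicker, if less uniform, route) one can instead apply Markov to $|X|^t$, $\operatorname{Prob}(X>k) \leq \mathbb{E}|X|^t/k^t \leq (L\sqrt{t}/(k\sqrt{2e}))^t$, whose continuous optimizer $t^\star = 2k^2/L^2$ gives the \emph{sharp} bound $\exp(-k^2/L^2)$; the weaker constant $3$ is precisely the slack one pays for a clean, regime-independent argument.

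The main obstacle is constant bookkeeping rather than structure. The coefficient $1/\sqrt{2e}$ in \eqref{eq:scaling_subgaussian} is tuned so that the Stirling substitution cleans up, but the residual work is to dominate the full moment series by a single Gaussian expression $\exp(cL^2\lambda^2)$ with $c$ small enough that $4c \leq 3$: the ideal value $c = 1/4$ (which would give $\exp(-k^2/L^2)$) must be relaxed to $c = 3/4$ to absorb the $\sqrt{2}$-type factors, the upper-Stirling slack on $s!$, and the low-order ($t=2,3$) terms of the sum, all of which I expect to handle by bounding the tail of the series geometrically for $t$ beyond a small threshold and the first few terms directly. One should also note that for odd $t$ the hypothesis is read with absolute moments, using $|\mathbb{E}[X^t]| \leq \mathbb{E}|X|^t$, so that only even moments $\mutbarx[t]$ are genuinely needed. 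This argument is the exact sub-Gaussian analogue of \Cref{le:Bernstein}, where the identical tail reduction is paired with the Bernstein moment growth \eqref{eq:scaling_bernstein} instead of \eqref{eq:scaling_subgaussian}.
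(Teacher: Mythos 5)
Your first step---reducing $\operatorname{Prob}_F$ to the one-sided tail $\operatorname{Prob}(X > k)$ for the centered margin $X = \zx - \mutx[1]$ via \Cref{le:classification}---is exactly the paper's reduction. The gap is in the second step, and it sits precisely where all the work is. The paper does not use the Chernoff/MGF method at all: it applies Markov's inequality to $\exp(X^2/C^2)$, so only \emph{even} moments of $X$ ever appear, and because the hypothesis is engineered in the form $\mathds{E}\left[|X|^t\right] \leq 2L^t\,\Gamma(t/2+1)$ (of which \eqref{eq:scaling_subgaussian} is a Stirling-type relaxation), the Taylor series of $\mathds{E}[\exp(X^2/C^2)]$ collapses to an exactly geometric series $1 + 2\sum_{t\geq 1}(L^2/C^2)^t$; demanding this be at most $2$ forces $C \geq \sqrt{3}L$ with no slack, no optimization over $\lambda$, and uniformly in $k$. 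In your route, by contrast, the entire burden is the claim $\mathds{E}\,e^{\lambda X} \leq e^{cL^2\lambda^2}$ with $c = 3/4$ for all $\lambda > 0$, and this is asserted (``should telescope''), not proved. Carrying out the term-by-term estimate you sketch leaves you with $1 + \sum_{t\geq 2}\left(\lambda^2 L^2 e/(2t)\right)^{t/2}$, a series in half-integer powers of $\lambda^2$ whose domination by a single Gaussian factor with small $c$ is delicate; the standard ways of closing it (e.g. the inequality $e^y \leq y + e^{y^2}$ followed by geometric summation of even moments) yield constants like $c \approx 2$, i.e. a tail $\exp\left(-k^2/(8L^2)\right)$, not the stated $\exp\left(-k^2/(3L^2)\right)$. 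So your structure does prove \emph{a} sub-Gaussian tail, but obtaining the lemma's constant $3$ this way is exactly the unfinished ``bookkeeping,'' and your sketch does not show it can be done.

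Your fallback route does not rescue this. Markov on $|X|^t$ optimized at $t^\star = 2k^2/L^2$ requires $t^\star$ to be an admissible moment order (an integer, at least $2$). When $k < L$---a regime where the claimed bound $e^{-k^2/(3L^2)} < 1$ is still nontrivial---one has $t^\star < 2$, and Chebyshev ($t=2$) gives $\mathds{E}[X^2]/k^2 \leq L^2/(e k^2)$, which exceeds $1$ once $k < L/\sqrt{e}$; no integer-moment Markov bound reaches the target there, so the ``sharp $\exp(-k^2/L^2)$'' conclusion is not uniformly valid. A smaller but real issue: your MGF expansion involves odd signed moments $\mathds{E}[X^t]$, while the hypothesis controls centered moments; you note the fix, but it needs an extra interpolation step (e.g. $\mathds{E}|X|^t \leq (\mathds{E}[X^{2t}])^{1/2}$), whereas the paper's $\exp(X^2/C^2)$ device avoids odd moments by construction. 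In short: right reduction, viable but genuinely different tail mechanism, and a concrete missing piece---the MGF bound with constant $3/4$---that the paper's own argument is specifically designed to circumvent.
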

The proof can be found in Appendix~\ref{Appendix:sub-gaussian}. 

The intuition behind these two lemmas is that, if the centered $t$-moments scale sufficiently slow, then we can directly bound the probability of failing in the classification of a data-set.
Note that the difference between~\Cref{eq:scaling_bernstein} and~\eqref{eq:scaling_subgaussian} is that the condition on the $t$-moments of the class margin are bounded by $t$ in the former and by $\sqrt{t}$ in the latter.
These results arise as an immediate consequence of vanishing tails in the distribution of $\zx$.

\section{Detecting data-induced randomness in QML}\label{sec:applications}

In this section, we provide three examples illustrating how the data-induced randomness, captured by the statistical moments of the class margin, reveals critical effects on the performance of classification tasks. 

\subsection{Discrete-Logarithm-Problem-based feature map}\label{sec.dlp}
We consider the task of classifying integer numbers into two classes depending on the solutions to the discrete logarithm problem (DLP).
This problem was the first example of quantum advantage in the domain of QML~\cite{liu2021rigorous}.
The key insight is to embed the classification task into a classical support vector machine algorithm whose kernel is computed with a fault-tolerant quantum computer.
Computing this kernel is possible in BQP (bounded-error quantum polynomial) time~\cite{shor1997polynomialtime}, and it is widely accepted not to be efficiently solvable via classical methods.
Our goal in this section is less ambitious than proving advantages.
We aim to reinterpret these results under the perspective of class margin.
 
We consider a large prime number $p$ and the multiplicative group $ \mathds{Z}_p^* = \{1,2,...,p-1\}$.
We choose a generator $g$ of $\mathds{Z}_p^*$, such that the powers of $g$ span the entire group,
\begin{equation}
   \mathds{Z}_p^* = \left\{g^k (\operatorname{mod } p)\vert k = 1,2,...,p-1\right\}.
\end{equation}
In this setting, every input $x \in \mathds{Z}_p^*$, has a labeling function $y_s(x) \in\{0, 1\}$ coming from a concept class $\mathcal{C} = \{y_s\}_{s\in \mathds{Z}_p^*}$.
Each labeling function is given by  
\begin{equation}
    y_s(x)= \begin{cases} 
    1, & \text { if } \log _g x \in\left[s, s+\frac{p-3}{2}\right] \\ 
    0, & \text { else}.\end{cases}
\end{equation}
The data is encoded into the Hilbert space through the feature map defined by 
\begin{equation}
    x \mapsto\ket{\psi(x)}=\frac{1}{\sqrt{2^k}} \sum_{i=0}^{2^k-1}\left|x \cdot g^i (\operatorname{mod } p)\right\rangle,
\label{eq:DLP_feature_map}
\end{equation}
with $k = n -c\log n$ for some constant $c$.
Hence, the set of states is given by 
\begin{equation}
    \mathcal X_g = \left\{\ket{\psi(x)} \right\}.
    \label{eq:set_states_DLP}
\end{equation}
In this case the only parameters is the generator $g$. 

To interpret this problem from the point-of-view of average randomness, we transform this kernel picture into measurements with respect to projectors~\cite{schuld2021supervised}.
For each $s\in \mathds{Z}_p^*$, or equivalently any concept class $y_s\in \mathcal{C}$, there exists two vectors of the form
\begin{equation}
\begin{split}
    |\psi_s^{(1)}\rangle  &= \frac{1}{\sqrt{(p-1)/2}}\sum_{i = 0}^{(p-3)/2}\ket{g^{s+i} (\operatorname{mod } p)}\\
    |\psi_s^{(0)}\rangle  &= \frac{1}{\sqrt{(p-1)/2}}\sum_{i = (p-3)/2}^{p-1}\ket{g^{s+i} (\operatorname{mod } p)}
\end{split}
\end{equation}
that define a hyperplane that splits the space $\mathcal{X}_g$ in two halves of equal dimension.
These hyperplanes have a large margin property (see~\Cref{appendix:variance_DLP} for details).
We can define the observable $\Zdlp$ as follows
\begin{equation}
    \Zdlp = \frac{\mathds{I}+( \Pi_1-\Pi_0)(-1)^{y_s(x)}}{2},
    \label{eq:Z_s_DLP}
\end{equation}
where $\Pi_0 = |\psi_s^{(0)}\rangle \langle\psi_s^{(0)}| $ and $\Pi_1 = |\psi_s^{(1)}\rangle \langle\psi_s^{(1)}|$.
For simplicity, we are omitting the $\vec x$ dependence on $\hat Z_s$.
Now, the class margin is defined by the expectation value on the set of states given by~\eqref{eq:set_states_DLP}.
Subsequently, the following statistical properties of the class margin are:
\begin{restatable}{lemma}{thmantirandomness}\label{th.antirandomness_dlp}
Consider the set of states given by the feature map in~\Cref{eq:DLP_feature_map} for $x \in \mathds{Z}_p^*$. Let $\Zdlp$ be defined as in~\Cref{eq:Z_s_DLP}. The scaling of $\Zdlp$-shadowed 1- and 2-average anti-randomness of this set of state is given by 
\begin{align}
    \mathcal A_1^{(\Zdlp)}(\mathcal{X}_g) \in & \Theta\left(\frac{1}{{\rm poly}(n)}\right) \\
    \mathcal A_2^{(\Zdlp)}(\mathcal{X}_g) \in & \Theta\left(\frac{1}{{\rm poly}(n)}\right)
\end{align}
\end{restatable}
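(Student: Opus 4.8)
The plan is to compute the Haar-averaged moments and the data-set moments of $\Zdlp$ separately, and then bound their differences. Because $\Pi_0$ and $\Pi_1$ are orthogonal rank-one projectors, a direct calculation gives $\operatorname{Tr}\Zdlp = N/2$ and $\operatorname{Tr}\Zdlp^2 = (N+2)/4$, where $N=\Theta(2^n)$ is the Hilbert-space dimension; these traces coincide for the $y=0$ and $y=1$ branches of~\Cref{eq:Z_s_DLP}, so the Haar reference is unambiguous. Inserting them into the Haar-moment formulas of Appendix~\ref{appendix:variance_O} yields $\mu_1(\Zdlp)=1/2$ and a Haar variance $\bar\mu_2(\Zdlp)=\Theta(N^{-2})\in e^{-\Omega(n)}$. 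The Haar reference is therefore exponentially suppressed and will not affect the final scaling.

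The substance of the argument is the data-set average. Writing $x=g^\omega$, the feature-map state $\ket{\psi(x)}$ is the uniform superposition over the arc of exponents $I_x=[\omega,\omega+2^k-1]$, while $\ket{\psi_s^{(1)}}$ and $\ket{\psi_s^{(0)}}$ are uniform superpositions over the complementary arcs $I_1,I_0$ of length $(p-1)/2$ that partition $\mathds{Z}_{p-1}$. Every overlap then collapses to an interval-intersection count, $\abs{\braket{\psi_s^{(c)}}{\psi(x)}}^2 = 2\abs{I_x\cap I_c}^2/[(p-1)\,2^k]$. I would split the data into a \emph{bulk}, where $I_x$ lies entirely inside one arc, and a \emph{boundary}, where $I_x$ straddles an arc endpoint; since $2^k\ll (p-1)/2$, each interval straddles at most one of the two endpoints. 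For bulk points the class margin is the constant $z(x)=\tfrac12-\tfrac\alpha2$ with $\alpha=2\cdot 2^k/(p-1)=\Theta({\rm poly}^{-1}(n))$ (using $k=n-c\log n$), whereas the boundary constitutes a fraction $\Theta(\alpha)$ of all points, on which $z(x)$ sweeps affinely and near-uniformly across $[\tfrac12-\tfrac\alpha2,\tfrac12+\tfrac\alpha2]$.

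Averaging, the bulk fixes the mean and gives $\mu_1(\Zdlp,\mathcal{X}_g)=\tfrac12-\Theta(\alpha)$, so that $\mathcal A_1^{(\Zdlp)}(\mathcal{X}_g)=\abs{\mu_1(\Zdlp,\mathcal{X}_g)-\tfrac12}=\Theta(\alpha)=\Theta({\rm poly}^{-1}(n))$, which establishes the first claim. For the variance the constant bulk contributes nothing, so the whole of $\sigma^2(\Zdlp,\mathcal{X}_g)$ is generated by the boundary: folding its $\Theta(\alpha)$ weight against the $\Theta(\alpha)$-wide spread of margins (both the bulk--boundary separation and the within-boundary spread enter at the same order) yields $\sigma^2(\Zdlp,\mathcal{X}_g)=\Theta(\alpha^3)$. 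Since the Haar variance is only $e^{-\Omega(n)}$, we obtain $\mathcal A_2^{(\Zdlp)}(\mathcal{X}_g)=\abs{\sigma^2(\Zdlp,\mathcal{X}_g)-\bar\mu_2(\Zdlp)}=\Theta({\rm poly}^{-1}(n))$, which establishes the second claim.

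I expect the delicate step to be the variance lower bound: certifying that the boundary contribution is genuinely $\Theta(\alpha^3)$ and neither cancels nor decays faster. This requires controlling the full distribution of $\abs{I_x\cap I_c}$ near the endpoints --- including the cyclic wraparound of $\mathds{Z}_{p-1}$ and the presence of two boundaries --- rather than just typical values. By contrast, the matching variance upper bound and the first-moment estimate reduce to routine interval-counting arguments.
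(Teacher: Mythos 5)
Your proposal is correct, but it takes a genuinely different route from the paper's proof. The paper does not compute any overlaps itself: it imports the large-margin property of the DLP feature map from Liu et al.~\cite{liu2021rigorous} --- a fraction $1-\Delta$ of points has class margin exactly $\frac{1-\Delta}{2}$ and the remaining fraction $\Delta$ has margin at most $\frac{1+\Delta}{2}$, with $\Delta = 2^{k+1}/p$ --- and then performs bound arithmetic to obtain $\frac{1-\Delta}{2}\leq \mu_1(\Zdlp,\mathcal X_g)\leq \frac{1-\Delta}{2}+\Delta^2$ and $\sigma^2(\Zdlp,\mathcal X_g)\leq \Delta^2$, comparing against Haar values computed from the spectrum $(1,0,\tfrac12)$ with multiplicities $(1,1,2^n-2)$. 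Your interval-intersection analysis re-derives that margin structure from first principles: your bulk points are the paper's fraction $1-\Delta$, your boundary points its fraction $\Delta$, and your $\alpha$ is its $\Delta$. What your route buys, beyond self-containedness, is the two-sided estimate $\sigma^2(\Zdlp,\mathcal X_g)=\Theta(\alpha^3)$. This matters: the $\Theta$ (rather than merely $O$) claim for $\mathcal A_2^{(\Zdlp)}(\mathcal X_g)$ requires a \emph{lower} bound keeping the data variance polynomially far from the exponentially small Haar variance, and the paper's proof as written only establishes the upper bound $\sigma^2\leq\Delta^2$, leaving the $\Omega$ direction implicit. Your affine, near-uniform sweep of $z(x)$ across the boundary points supplies exactly this missing direction, and the delicacy you flag is not a real obstruction: the variance is a sum of nonnegative terms, and roughly half of the boundary points sit at distance $\Omega(\alpha)$ from the mean, so no cancellation can occur. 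One minor discrepancy: you obtain a Haar variance $\Theta(N^{-2})$, while the paper's proof states $\bar\mu_2(\Zdlp)=\frac{1}{2^{n-1}+1}=\Theta(N^{-1})$; your value is in fact what the paper's own formula in Appendix~\ref{appendix:variance_O} yields for this spectrum, and since both are $e^{-\Omega(n)}$, nothing downstream changes either way.
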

These results indicate that the set of states $\mathcal X_ g$ are exactly $1/\operatorname{poly}(n)$ bounded away from Haar-random states when measured through $\Zdlp$.
The proof can be found in Appendix~\ref{app:antirandomness_dlp}. 
\begin{restatable}{theorem}{lemDLP}\label{le:DLP1}
    Consider the set of states given by the feature map in~\Cref{eq:DLP_feature_map} for $x \in \mathds{Z}_p^*$. Let $\Zdlp$ be defined as in~\Cref{eq:Z_s_DLP}. Then, the probability of misclassification is bounded by 
    \begin{equation}
    \operatorname{Prob}_F \left(  \Zdlp, \mathcal X_{g}\right) \in \mathcal O\left(\operatorname{poly}^{-1}(n)\right)
    \end{equation}
with a number of copies of the state $M \in \Theta(\operatorname{poly}(n))$.
\end{restatable}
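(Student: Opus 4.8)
The plan is to derive \Cref{le:DLP1} as an application of the Chebyshev-type bound in \Cref{le:chebyshev} to the DLP set of states $\mathcal X_g$ and the observable $\Zdlp$ of \Cref{eq:Z_s_DLP}, feeding it the moment estimates supplied by \Cref{th.antirandomness_dlp}. Explicitly, \Cref{le:chebyshev} gives
\begin{equation}
  \operatorname{Prob}_F\!\left(\Zdlp,\mathcal X_g\right) \leq \frac{\sigma^2(\Zdlp,\mathcal X_g)}{\left(b - \mu_1(\Zdlp,\mathcal X_g) - \sqrt{\frac{\log(2/\delta)}{2M}}\right)^2},
\end{equation}
so the task reduces to pinning down the first shadowed moment $\mu_1(\Zdlp,\mathcal X_g)$, the variance $\sigma^2(\Zdlp,\mathcal X_g)$, and a valid choice of $M$.

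First I would fix $b=1/2$ and compute the Haar-random reference moments of $\Zdlp$ in the $N=p-1$ dimensional encoding space. Because $\Pi_0,\Pi_1$ are orthogonal rank-one projectors, $\operatorname{tr}(\Zdlp)=N/2$ and $\operatorname{tr}(\Zdlp^2)=(N+2)/4$, so the Haar first moment equals $1/2$ while the Haar variance, by the formula recalled in Appendix~\ref{appendix:variance_O}, is $\mathcal O(1/N^2)\subseteq e^{-\Omega(n)}$. Feeding these into the definition of anti-randomness together with \Cref{th.antirandomness_dlp} yields $b-\mu_1(\Zdlp,\mathcal X_g)=\lvert \mu_1(\Zdlp,\mathcal X_g)-\tfrac12\rvert=\mathcal A_1^{(\Zdlp)}(\mathcal X_g)\in\Theta(\operatorname{poly}^{-1}(n))$, where the large-margin property of Appendix~\ref{appendix:variance_DLP} fixes the sign so the encoded states sit on the correct side of $b$, and $\sigma^2(\Zdlp,\mathcal X_g)=\bar\mu_2(\Zdlp)+\mathcal A_2^{(\Zdlp)}(\mathcal X_g)\in\Theta(\operatorname{poly}^{-1}(n))$, the exponentially small Haar piece being negligible. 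This is precisely the statement that the necessary conditions \eqref{eq:condition_mu1}--\eqref{eq:condition_var} hold here.

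With these scalings I would then choose $M\in\Theta(\operatorname{poly}(n))$ large enough that $\sqrt{\log(2/\delta)/(2M)}\leq \tfrac12\bigl(b-\mu_1(\Zdlp,\mathcal X_g)\bigr)$, which keeps $M$ polynomial while holding the denominator of \Cref{le:chebyshev} at order $\bigl(\mathcal A_1^{(\Zdlp)}(\mathcal X_g)\bigr)^2$; this is exactly the sample-count reformulation of \Cref{th.samples}. The failure probability is then controlled by the ratio $\operatorname{Prob}_F\!\left(\Zdlp,\mathcal X_g\right)\leq 4\,\mathcal A_2^{(\Zdlp)}(\mathcal X_g)/\bigl(\mathcal A_1^{(\Zdlp)}(\mathcal X_g)\bigr)^2$.

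The main obstacle is that \Cref{th.antirandomness_dlp} as stated records only the orders $\mathcal A_1,\mathcal A_2\in\Theta(\operatorname{poly}^{-1}(n))$, whereas the conclusion needs the sharper comparison $\mathcal A_2^{(\Zdlp)}(\mathcal X_g)\in\mathcal O\bigl((\mathcal A_1^{(\Zdlp)}(\mathcal X_g))^2\operatorname{poly}^{-1}(n)\bigr)$, i.e. the variance must vanish polynomially faster than the square of the margin. To close this I would track the explicit degrees from Appendix~\ref{app:antirandomness_dlp}: the projector overlaps $\langle\Pi_1\rangle,\langle\Pi_0\rangle$ are governed by $2^k/(p-1)\in\Theta(\operatorname{poly}^{-1}(n))$, which sets the margin degree, while $\sigma^2(\Zdlp,\mathcal X_g)$ is dominated by the small $\Theta(2^k/(p-1))$ fraction of boundary inputs whose encoding windows straddle the class boundary, supplying one extra polynomial factor of suppression. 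Establishing this boundary-fraction estimate, and hence the degree gap between $\mathcal A_2$ and $\mathcal A_1^2$, is the delicate step; once it holds, $\operatorname{Prob}_F\in\mathcal O(\operatorname{poly}^{-1}(n))$ with $M\in\Theta(\operatorname{poly}(n))$ follows at once.
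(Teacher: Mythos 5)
Your route is the same one the paper takes in \Cref{appendix:variance_DLP}: apply \Cref{le:chebyshev} with $b=1/2$, insert the DLP moment bounds from \Cref{app:antirandomness_dlp}, and choose $M\in\Theta(\operatorname{poly}(n))$ so the resolution term does not consume the margin. However, the step you flag as the ``main obstacle'' is not a cosmetic worry: it is genuinely necessary, and it is exactly the point where the paper's own proof fails. The paper carries only the bounds $\mu_1(\Zdlp,\mathcal X_g)\geq\frac{1-\Delta}{2}$ and $\sigma^2(\Zdlp,\mathcal X_g)\leq\Delta^2$, with $\Delta=2^{k+1}/p\in\Theta(\operatorname{poly}^{-1}(n))$. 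Since the margin obeys $\frac12-\mu_1(\Zdlp,\mathcal X_g)\leq\frac{\Delta}{2}$, a Chebyshev ratio built from $\sigma^2\leq\Delta^2$ is never smaller than the constant $4$, whatever $M$ is, so the bound is vacuous. Worse, the paper's choice $M=\frac{\log(2/\delta)}{2}\bigl(\Delta/2-\Delta^2+\Delta^{1/2}\bigr)^{-2}$ makes the resolution term $\sqrt{\log(2/\delta)/(2M)}=\Delta/2-\Delta^2+\Delta^{1/2}$ exceed the margin, so the deviation entering Chebyshev's inequality is $-\Delta^{1/2}<0$: the threshold lies below the mean, the inequality does not apply, and the apparently small final bound $\Delta^2/\Delta=\Delta$ arises only from squaring a negative quantity.

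Your boundary-fraction argument supplies precisely the missing polynomial factor. By the properties recalled in \Cref{app:antirandomness_dlp}, a fraction $1-\mathcal O(\Delta)$ of inputs have $\zx$ pinned exactly at $\frac{1-\Delta}{2}$, hence within $\mathcal O(\Delta^2)$ of $\mu_1(\Zdlp,\mathcal X_g)$, while the remaining $\mathcal O(\Delta)$ fraction of boundary inputs still satisfy $\zx\in\bigl[\frac{1-\Delta}{2},\frac{1+\Delta}{2}\bigr]$, a window of width $\Delta$. Computing the variance of this two-mass distribution directly (rather than combining the separate bounds on $\mu_1$ and $\mu_2$, which is where the paper loses the factor) gives
\begin{equation}
    \sigma^2(\Zdlp,\mathcal X_g)\;\leq\;\mathcal O(\Delta)\cdot\Delta^2+\mathcal O(\Delta^4)\;\in\;\mathcal O(\Delta^3),
\end{equation}
one polynomial factor below the squared margin, which is exactly your requirement $\mathcal A_2^{(\Zdlp)}(\mathcal X_g)\in\mathcal O\bigl((\mathcal A_1^{(\Zdlp)}(\mathcal X_g))^2\operatorname{poly}^{-1}(n)\bigr)$. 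Choosing $M$ so that $\sqrt{\log(2/\delta)/(2M)}\leq\Delta/4$, i.e.\ $M\in\Theta(\Delta^{-2})\subseteq\Theta(\operatorname{poly}(n))$, Chebyshev then yields $\operatorname{Prob}_F\bigl(\Zdlp,\mathcal X_g\bigr)\in\mathcal O(\Delta^3/\Delta^2)=\mathcal O(\Delta)\subseteq\mathcal O(\operatorname{poly}^{-1}(n))$. So, once the boundary-fraction estimate is written out, your proof is complete; it is in effect the corrected version of the paper's argument. (One small slip, immaterial here: $\sigma^2(\Zdlp,\mathcal X_g)=\bar\mu_2(\Zdlp)+\mathcal A_2^{(\Zdlp)}(\mathcal X_g)$ should be an inequality obtained from the triangle inequality, not an equality.)
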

This theorem ensures that we can perform a good classification. The proof of this lemma can be found in~Appendix \ref{appendix:variance_DLP}. 

The ability to classify with high probability is closely related to the set of states $\mathcal{X}_g$ being far from Haar-random states when viewed through $\Zdlp$.
Otherwise, issues on concentration properties arise as we show in~\Cref{appendix:variance_O}.

\subsection{On the role of observable}\label{sec.counterexample}
In this section, we present an ad-hoc classification task that allow us to showcase the use of class margin as a tool to quantify its classification power.
Additionally, we use such a task to study the impact on the choice of the observable on the overall performance.

Our toy model task consist on learning a parameter
$c\in \{0,1\}$ encoded in 
 a data-dependent quantum state through a feature map,
\begin{multline}\label{eq:FM_counterexample}
(c, \vec x) \mapsto  \\
\ket{\psi(c, \vec x)} = \bigotimes_{q = 0}^n \left(\sigma^{(z)}_k\right)^{c} \sum_{q = 0}^n \sqrt{x_k} \ket 0^{\otimes q} \otimes \ket 1^{\otimes n - q},
\end{multline}
with $\bm x\in \mathbb{R}^{n+1}$ the data-vector, and $\sigma^{(z)}_q$ the $Z-$Pauli matrix acting on the $q-$th qubit.
By specifying both $c$ and $\bm x$ distributions the set of quantum states is fixed,
\begin{equation}
    \xbd= \left\{ \ket{\psi(c, \vec x)} | c \sim \mathcal B, \vec x \sim \mathcal D \right\},
    \label{eq:set_of_states_counterexample}
\end{equation}
with $\mathcal B$ a symmetric binomial distribution and $\mathcal{D}$ a Dirichlet distribution~\cite{olkin1964multivariate} defined by $\vec\alpha_k = \frac{1}{2}\binom{n}{k}$.
A random Dirichlet variable $\vec x$ satisfies $\bm x \in [0, 1]^m$, and $\Vert \vec x \Vert_1 = 1$, which ensures the normalization condition of the encoding quantum states in~\Cref{eq:FM_counterexample}.

To learn $c = \{0, 1\}$, we propose two different observables:
\begin{equation}
    \hat{O}_Z = \sum_{q =1}^n \frac{\mathds{I}-\sigma^{(z)}_q}{2n}, 
    \label{eq:observable1}
\end{equation}
which effectively counts the (normalized) number of $1$'s of the quantum state,
and
\begin{equation}
    \hat{O}_X = \frac{\mathds{I}-\sigma^{x}_{\lfloor n/2\rfloor + 1}}{2},
    \label{eq:observable2}
\end{equation}
where $\lfloor n/2\rfloor$ is the largest integer less than or equal to $n/2$, and $\sigma^{(x)}_q$ is the $X$-Pauli matrix acting on the $q-$th qubit.

In what follows, we take advantage of the average anti-randomness~\Cref{eq:anti_randomness_def} metric to assess the inductive bias of both the observables $\hat O_Z$ and $\hat O_X$ in the proposed learning task.
By construction, the set of states $S$ is a $\hat O_Z$-shadowed $t$-design.
The reason behind this is the fact that, when considering Haar-random states $\ket\psi$, and evaluating them with a given observable $\hat O$, it yields an expectation value given by 
\begin{equation}
    \bra{\psi}\hat O \ket{\psi} = \vec\lambda \cdot \bm u.
\end{equation}
Here, $\vec u$ is a random variable drawn from a symmetric Dirichlet distribution, and $\vec\lambda$ are the eigenvalues of $\hat O$~\cite{bonet-monroig2024verifying}.
Aggregation properties of the Dirichlet distribution allow us to simplify the random variable $\vec u$ by reducing it to a lower-dimensional vector, $\vec u^\prime$, sampled from a Dirichlet distribution that accounts for the multiplicities $\vec {m _\lambda}$ of the eigenvalues of $\hat O$, i.e., $\vec u^\prime \sim \mathcal D (\bm {m_\lambda}/2)$. In our case, the observable $\hat O_Z$ has eigenvalues $\lambda$ with multiplicity $m_\lambda$, 
\begin{equation}
    (\lambda, m_\lambda) = \left( \frac{k}{n},\binom{n}{k}\right). 
\end{equation}

The family of states $\xbd$ is designed such that its amplitudes are distributed according to a Dirichlet distribution parameterized to match the multiplicities of $\hat{O}_Z$.
In this way, the statistical moments artificially mimic the Haar-random moments when observed through $\hat{O}_Z$, making the set $\xbd$ a $\hat O_Z$-shadowed $t$-design.
However, the set $\xbd$ is not a $t$-design.
First, this family of quantum states lacks complex phases, which in turn means that covering all elements in the Hilbert space is not possible.
Second, it spans only a restricted region of the Hilbert space, making it incompatible with a $t$-design for any observable other than $\hat{O}_Z$.
As an example, consider observables of the form $\Pi \hat O_Z \Pi^\dagger$, where $\Pi$ is an arbitrary permutation of the elements in the computational basis.
Measuring $\xbd$ with this permuted observable yields expectation values that differ significantly from the original observable with high probability.
The reason lies in the mismatch between the permuted eigenvalues of $\hat{O}_Z$ and the multiplicities encoded in the states.

To study this phenomenon we devise a set of numerical experiments to test the effect of permuting the observable $\hat{O}_Z$ on the average anti-randomness, as defined in~\Cref{eq:anti_randomness_def}.
The results in~\Cref{fig:counterexample}, for $n=8$ qubits, reveal that the anti-randomness of the original observable $\hat O_Z$ is consistent with zero within error bars.
This indicates that all its $t-$moments match those of a Haar-random set of states when measured with respect $\hat O_Z$, as we expected from our ad-hoc data-embedding.
However, introducing permutations to $\hat O_Z$ disrupts the alignment with the structure of the set of states, as we can see in~\Cref{fig:counterexample} with the purple, blue and orange lines.
When applying 1-, 5-, and 15-random permutations to the observable, the value $\mathcal A_t^{ (\hat O_Z)}(S)$ goes from statistically $0$ to higher values.
It is important to notice that the only meaningful comparison is over points with the same $t$, as they are related to the same statistical property.
A first take of this example is the fact that $\xbd$ appears random as seen through $\hat O_Z$, but in reality it is only an artifact of the data-embedding process.

\begin{figure}[t!]                   \centering
\includegraphics[width=1\columnwidth, angle=0]{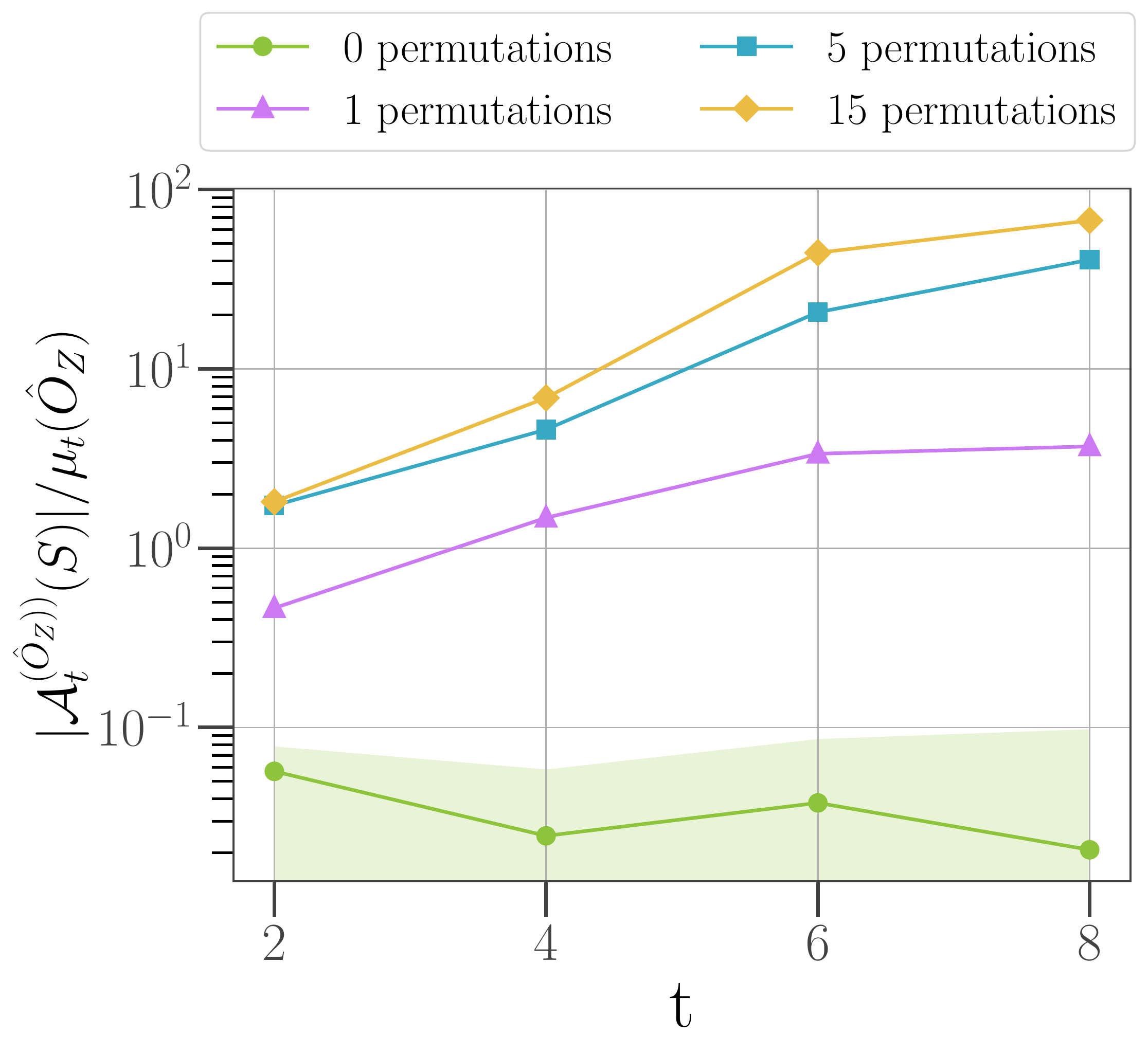}
\caption{Numerical estimation of the anti-randomness $\mathcal{A}_t(S, \hat{O}_Z)$  normalized with respect to $\hat {\mu_t}(\hat{O}_Z)$ and averaged over the set of states $\mathcal{X}_{\mathcal{B}, \mathcal{D}}$ defined in~\Cref{eq:set_of_states_counterexample} for $n=8$. The shaded areas represent the error bars. For the number of necessary samples needed to distinguish if the distribution is a $\hat{O_Z}-$shadowed $t$-design, see Reference~\cite{bonet-monroig2024verifying}. In this case, we tolerate an error of $\epsilon = 0.07$. For the permutation samples, we use $M_\Pi = 2n$. The green line corresponds to the moment computed with the original observable. The other lines correspond to the moments computed with the observable permuted $1, 5$ and $15$ times, respectively.
When, no permutations are applied to the observable, $\mathcal{A}_t(S, \hat{O}_Z)$ is close to zero.
Naively, one would interpret this result as the set of states  $\mathcal{X}_{\mathcal{B}, \mathcal{D}}$ being an $\hat{O}_Z-$shadowed $t$-design.
However, this is far from correct.
As soon as one applies permutations to the observable, the average randomness deviates from zero, and therefore, the family of states is not actually Haar-randomly distributed.}
\label{fig:counterexample}
\end{figure}

Specifically for our learning task, 
the observable $\hat O_Z$ is unable to perform classification, as it views set of states $\xbd$ as completely random.
Thus, $\hat O_Z$ is insensitive to the digit $c$, and the classification shows no inductive bias towards solving the problem.

Additionally, it is possible to show
\begin{align}
    \mu_1(\hat O_Z, \xbd) & = \frac{1}{2} \\
    \sigma^2(\hat O_Z, \xbd) & \in \mathcal O\left(2^{-n}\right).
\end{align}
The observable $\hat O_Z$ will heavily concentrate around its mean, therefore an exponential number of measurements will be required to distinguish between classes, and thus the observable will fail at its job.
See Appendix~\ref{appendix:variance_O} for the analytical formulas of the first moment and the variance averaged over $\hat{O}_Z$ $t$-designs.

Now, we focus on the role of $\hat O_X$ in the classification task on $\xbd$.
In analogy to~\Cref{eq:class_margin_def}, the corresponding random variable that defines the probability of failure is given by 
\begin{equation}
    z(\vec x) = \frac{1}{2} - \sqrt{\vec x_{\lfloor n/2\rfloor } \vec x_{\lceil n/2\rceil } }.
\end{equation}
Here, $\vec x$ is sampled from the Dirichlet distribution previously specified, allowing us to compute the statistical moments analytically.

The specific choice of the set of states and observable allows us to analytically derive the probability to misclassify data points, summarized in the following theorem.
\begin{restatable}{theorem}{thcounterexample}\label{th:counterexample}
    Given the feature map defined in~\Cref{eq:FM_counterexample} and the observable $\hat O_X$, the probability of failure in the classification scales as
    \begin{equation}
    \operatorname{Prob}_F \left( \hat O_X, \mathcal X_{\mathcal{B}, \mathcal{D}}\right)  \in \exp(- \Omega(n))
    \end{equation}
    for $M \in \mathcal O(\operatorname{poly}(n))$. 
\end{restatable}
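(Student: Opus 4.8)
The plan is to reduce the statement to the two-moment bound of~\Cref{le:chebyshev} and to evaluate the required moments of the class margin $z(\vec x) = \tfrac12 - \sqrt{\vec x_{\lfloor n/2\rfloor}\,\vec x_{\lceil n/2\rceil}}$ in closed form. Since $\vec x$ is drawn from the Dirichlet distribution $\mathcal D(\vec\alpha)$ with $\alpha_k = \tfrac12\binom{n}{k}$, every (fractional) moment of $z$ is a ratio of Gamma functions through the Dirichlet integral. The crucial structural fact is that the total concentration parameter $\alpha_0 = \sum_{k}\alpha_k = 2^{\,n-1}$ is exponentially large, so the Dirichlet vector concentrates exponentially tightly around its mean; this is what ultimately produces the exponentially small failure probability. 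I take the decision threshold to be $b=\tfrac12$, matching the definition of $z$.

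First I would compute $\mu_1 := \mathbb E[z(\vec x)]$. Writing $i=\lfloor n/2\rfloor$, $j=\lceil n/2\rceil$ for the two central indices and using the Dirichlet fractional moment
\begin{equation}
\mathbb E\!\left[x_i^{1/2}x_j^{1/2}\right]=\frac{1}{\alpha_0}\,\frac{\Gamma(\alpha_i+\tfrac12)}{\Gamma(\alpha_i)}\,\frac{\Gamma(\alpha_j+\tfrac12)}{\Gamma(\alpha_j)},
\end{equation}
together with the asymptotic $\Gamma(\alpha+\tfrac12)/\Gamma(\alpha)=\sqrt{\alpha}\,(1+O(\alpha^{-1}))$ and Stirling's estimate $\binom{n}{\lfloor n/2\rfloor}\sim 2^{n}\sqrt{2/(\pi n)}$, I obtain
\begin{equation}
b-\mu_1=\mathbb E\!\left[\sqrt{x_i x_j}\right]\sim\frac{\sqrt{\alpha_i\alpha_j}}{\alpha_0}\in\Theta\!\left(n^{-1/2}\right),
\end{equation}
so the mean of the class margin sits a distance $\Theta(n^{-1/2})\in\Omega(\mathrm{poly}^{-1}(n))$ below the boundary, satisfying~\eqref{eq:condition_mu1}.

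Next I would bound the variance. Because $\operatorname{Var}(z)=\operatorname{Var}\!\big(\sqrt{x_i x_j}\big)=\mathbb E[x_i x_j]-\mathbb E[\sqrt{x_i x_j}]^2$, with the ordinary cross moment $\mathbb E[x_i x_j]=\alpha_i\alpha_j/(\alpha_0(\alpha_0+1))$, the leading $\Theta(1/n)$ contributions of the two terms coincide and cancel, leaving only a remainder suppressed by an extra factor $\alpha_0^{-1}$. Hence $\sigma^2(\hat O_X,\xbd)\in\exp(-\Omega(n))$; equivalently, each central component satisfies $\operatorname{Var}(x_i)=\alpha_i(\alpha_0-\alpha_i)/(\alpha_0^2(\alpha_0+1))\in\exp(-\Omega(n))$ and a Lipschitz/delta-method argument transfers this bound to $\sqrt{x_i x_j}$. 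I expect this exact cancellation of the $\Theta(1/n)$ terms to be the main obstacle: it forces me to carry the next-order term in the Gamma-function asymptotics rather than stopping at leading order, and it is precisely what distinguishes a genuinely exponentially small variance from a merely inverse-polynomial one (compare the concentrating, useless observable $\hat O_Z$).

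Finally I would assemble the pieces through~\Cref{le:chebyshev}. Choosing $M\in\mathcal O(\mathrm{poly}(n))$ large enough (e.g. $M\in\Theta(n^{2})$) makes the resolution term $\sqrt{\log(2/\delta)/(2M)}$ of order $o(n^{-1/2})$, so that
\begin{equation}
k=b-\mu_1-\sqrt{\frac{\log(2/\delta)}{2M}}\in\Theta\!\left(n^{-1/2}\right),
\end{equation}
and therefore
\begin{equation}
\operatorname{Prob}_F\!\left(\hat O_X,\xbd\right)\le\frac{\sigma^2(\hat O_X,\xbd)}{k^2}\in n\cdot\exp(-\Omega(n))\subseteq\exp(-\Omega(n)).
\end{equation}
Note that the inverse-polynomial gap $k$ is overwhelmed by the exponentially small numerator, so Chebyshev already suffices and the sharper tail bounds of~\Cref{le:Bernstein,le:subgaussian} are not needed here.
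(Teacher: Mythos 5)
Your proposal is correct and follows essentially the same route as the paper's own proof: exact first and second moments of the class margin via the Dirichlet fractional-moment formula, a mean gap of order $\Theta(n^{-1/2})$ below $b=\tfrac12$, an exponentially small variance arising from the cancellation of the leading $\Theta(1/n)$ terms in $\mathds{E}[x_i x_j]-\mathds{E}[\sqrt{x_i x_j}]^2$, and a final application of \Cref{le:chebyshev} with $M\in\mathcal{O}(\mathrm{poly}(n))$. The only difference is technical rather than structural: the paper controls the Gamma-function ratios with the non-asymptotic Gautschi inequality (plus Stirling) instead of your asymptotic expansions, which is precisely the rigor needed to certify the cancellation you correctly identify as the main obstacle.
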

A proof of this theorem can be found in Appendix \ref{app:counterexample}. This theorem ensures that the observable $\hat{O}_X$ can classify correctly the two families of state. This example illustrates how the $\hat{O}-$shadowed $t$-moments can help us to gain insight into classification bias.
\begin{figure*}[t!]                   \centering
\includegraphics[width=1.8\columnwidth, angle=0, trim={3mm 0mm 0mm 0mm}, clip]{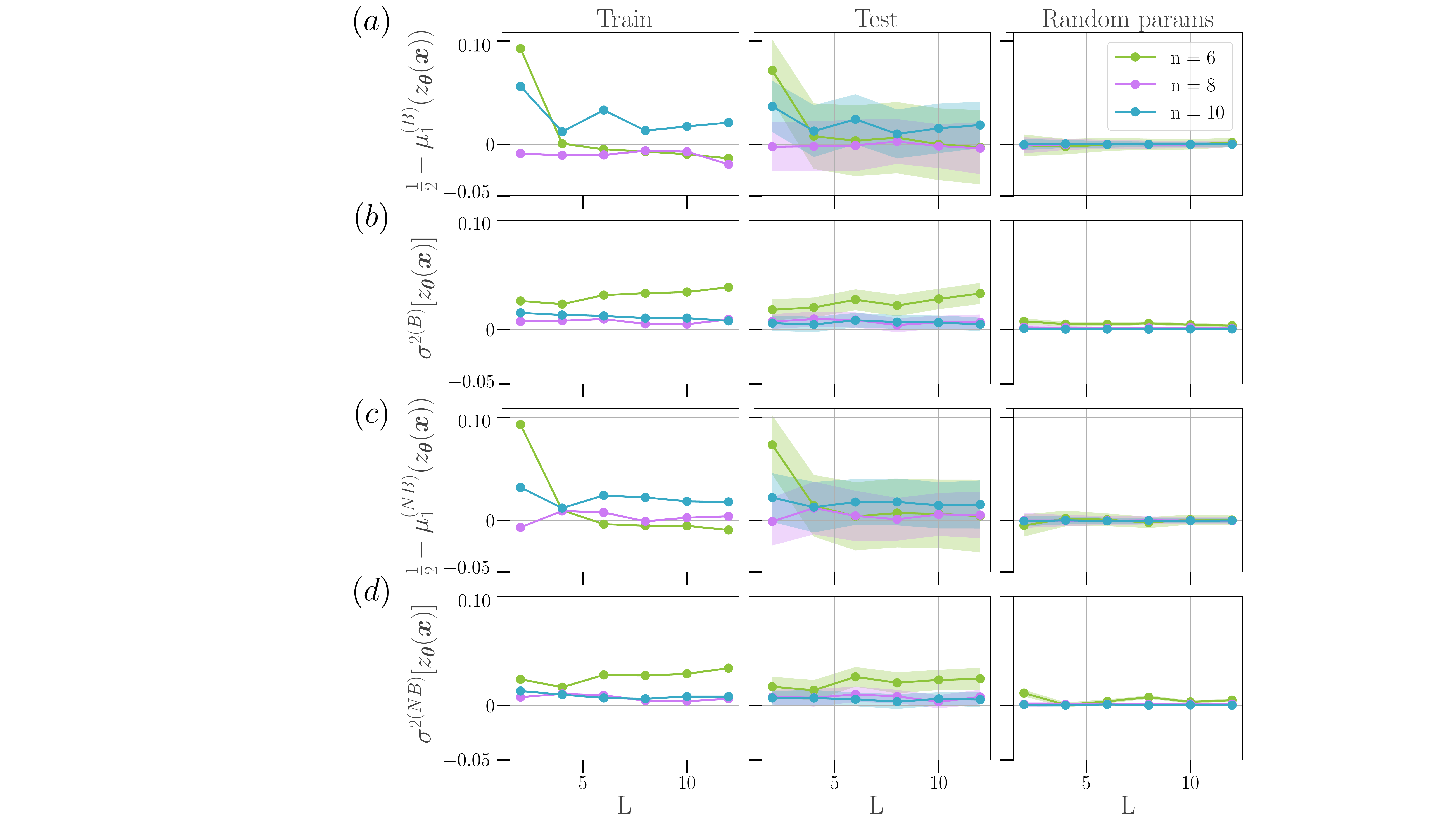}
\caption{Numerical computations for the statistical moments $\mu_1(\zxvar)$, $\sigma^2(\zxvar)$ for feature-map variational QML models, as a function of the number of layers. Results are shown for both the brick and non-brick ansatzes (see Appendix~\ref{Appendix:learning_problem} for details on the circuit). The first row shows the mean and variance over the training set, using optimized parameters obtained via  L-BFGS-B.
The absence of error bars in these figures is due to the fact that we use optimal parameters and a fixed data set, thus the statistical moments can be computed exactly.
The second row displays the mean and variance over the test set sampled from the data distribution. In the third row, parameters $\vec{\theta}$ are sampled randomly from a uniform distribution. The statistical moments are computed via Monte Carlo sampling, and the shaded areas represent the error bars. $(a)$ Mean shifted to $1/2$ and $(b)$ variance of $\zxvar$ for the brick feature map classifier. Mean shifted to $1/2$ and $(b)$ variance of $\zxvar$ for the non-brick feature map classifier.}
\label{fig:RvsL_fm}
\end{figure*}

\subsection{Variational QML models}\label{sec.variational}
Making use of the tools developed so far, we conclude our analysis with a numerical study of the data-encoding induced randomness of two variational QML models.
We remind the reader that we are not concerned about the training aspect of variational QML models, but rather on the potential randomness generated by the data encoding step.
Therefore, we use small models avoiding trainability issues, and evaluate the data-induced randomness after the optimal parameters have been found.
As a proof of concept, we compare linear classification models given by two feature maps~\cite{havlicek2019supervised} and a re-uploading procedure \cite{perez-salinas2020data}, which respectively correspond to~\Cref{fig:sphere-data} (a) and (b).

For the two feature-map models, the data is embedded into the quantum circuits through a fixed data-dependent unitary operation $W(\vec x)$ to yield $\ket{\psi(\vec x)} = W(\vec x) \ket 0$.
One can view the PQC, $U(\vec\theta)$, as a tunable change of basis defining the observable to perform classification (see~\Cref{fig:sphere-data} $(a)$).
The optimization is at most capable of finding an optimal measure for discriminating states from class $0$ and $1$, that is approximating the optimal -- or Helstrom -- measurement~\cite{helstrom1976quantum}.
However, the performance of the classification is upper-bounded by the performance of the feature map itself, which is not trainable. 

In the case of the re-uploading approach, the data is injected by interleaving data-encoding and trainable circuits.
The model can be interpreted as a trainable feature map, where the hyperplane that separates the data is fixed, but the mapping of the data into the quantum feature space is adjustable (see~\Cref{fig:sphere-data} $(b)$). 
Data re-uploading models are universal~\cite{perez-salinas2021one}, hence they are in principle capable of conducting any classification task. 
This property requires the ability to find the optimal parameters, which in practice is difficult due to the concentration of expectation values around the mean~\cite{barthe2023gradients}.
The statistical moments of the class margin provide a direct signal of this phenomenon.

The learning task addressed by both the feature-map and data re-uploading classifiers is a binary classification in two dimensions.
The loss function is
\begin{equation}
      \mathcal{L}(\vec{\theta}) = \sum_{\vec{x}\in X_{\text{train}}} \zxvar,
\end{equation}
where $\zxvar$ is the class margin, and the observable is given by $\hat{Z}_y = \frac{1}{2}(\mathds{I}-(-1)^{y(\vec x)}\sigma^{(z)})$, where $\sigma^{(z)} := \sigma_1^{(z)}\otimes \sigma_2^{(z)}\otimes...\otimes \sigma_n^{(z)}$.
Further details on the learning problem, the quantum circuits, and the optimization are given in~\Cref{Appendix:learning_problem}.

Variational models do not always offer the possibility of a theoretical analysis.
For this reason, we employ numerical experiments to apply class margins to these models and address the validity of our findings. 

\begin{figure*}[t!]                   \centering
\includegraphics[width=0.8\linewidth, angle=0]{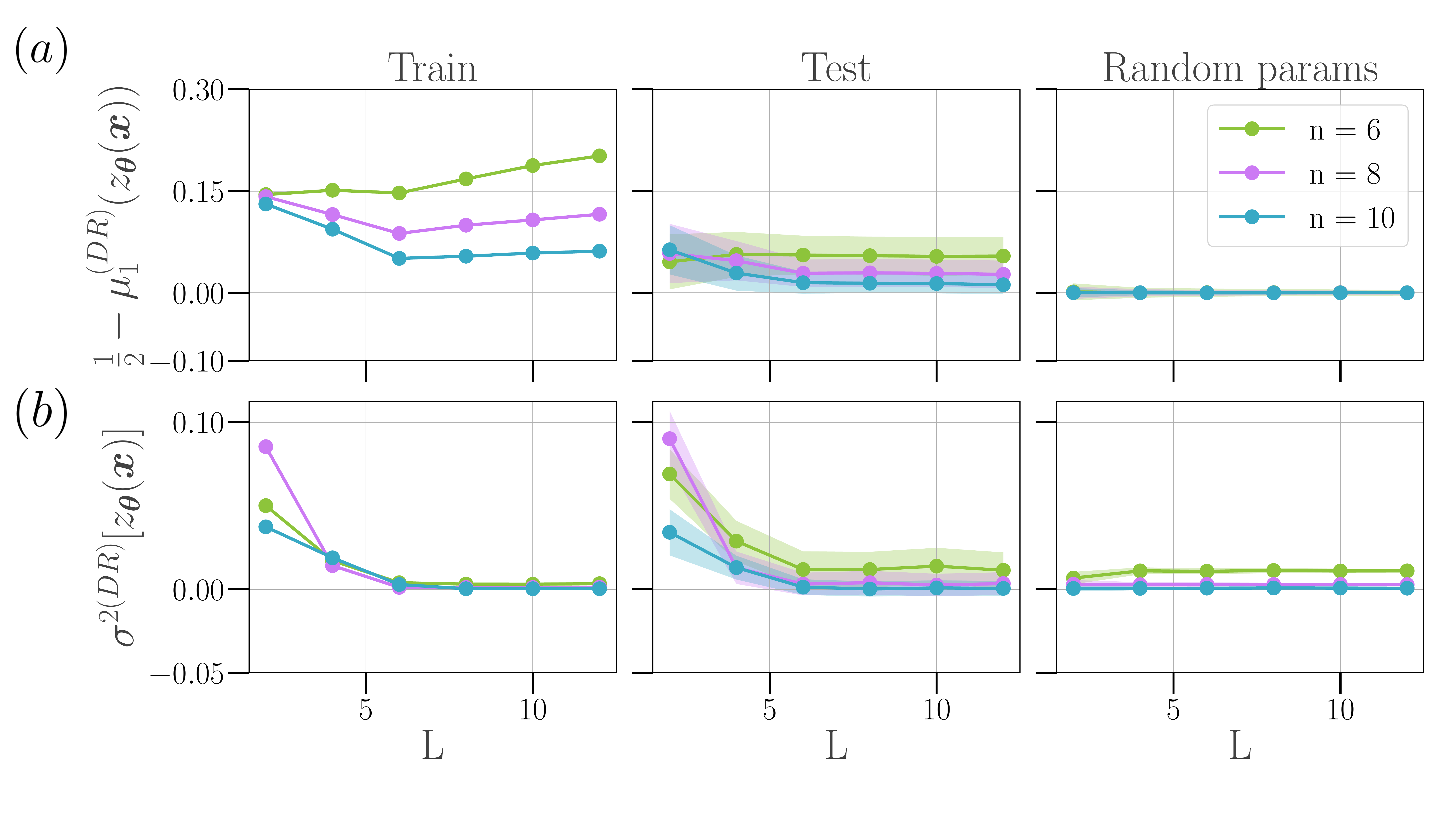}
\caption{Numerical computations for the statistical moments $\mu_1(\zxvar)$, $\sigma^2(\zxvar)$ for data re-uploading model, as a function of the number of layers. The first row shows the mean and variance over the training set, using optimized parameters obtained via L-BFGS-B.
In these plots, error bars are absent because the training set is equispaced.
In this case, Monte Carlo error does not apply, as we are not sampling from a random distribution. 
The second row displays the mean and variance over the test set sampled from the data distribution. In the third row, parameters $\vec{\theta}$ are sampled randomly from a uniform distribution. The statistical moments are computed via Monte Carlo sampling, and the shaded areas represent the error bars. $(a)$ Mean of $\zxvar$ shifted to $1/2$. $(b)$ Variance of $\zxvar$. }
\label{fig:RvsL_DR}
\end{figure*} 
  
\subsubsection*{Numerical results}
We examine the first and second moments of the class margin $\zxvar$ for both feature-map and data re-uploading variational QML models.
To provide a complete description of the model, we choose three different configurations: 1) optimized $\vec\theta$ with $\vec x$ sampled from the training set 2) optimized $\vec\theta$ with $\vec x$ sampled from the test set 3) averaging over randomly distributed $\vec\theta$ values with $\vec x$ sampled from the test set.
1) and 2) give information about how randomness affects model performance, while 3) serve as an analysis of the landscape.

In the feature-map case, we use a \textit{brick}- and \textit{non-brick} data-embedding circuits (see Appendix~\ref{Appendix:learning_problem} for details). 
The results of the numerical experiments are shown in~\Cref{fig:RvsL_fm}.

In the training set (left column), we see that $\mu_1(\zxvar)$ concentrates around $1/2$.
In contrast, $\sigma^2(\zxvar)$ approaches zero for both brick and non-brick feature maps.
In regards to the test set, we observe that both moments trend towards $0$ within error bars as the number of layers and qubit grows.
This can be connected to~\Cref{th.samples}, where we show that having values of $\zxvar$ not sufficiently bounded away from $1/2$ implies an inconsistency in classification.
Therefore, the combination of these two figures indicates a failure in the classification even with the optimal parameters obtained after training.
The results on the test set suggest that the model struggles to generalize effectively.
This aligns with reference~\cite{hur2024understandinggeneralizationquantummachine}, where it is shown that generalization capabilities of QML models are linked with the classification margin.

We validate the results against a randomly sampled set of parameters $\vec \theta$ shown in the third column of~\Cref{fig:RvsL_fm}.
All values fall within $0$, that is, the mean is close to $1/2$ and the variance is $0$.
Using the random parameters as validation, we can state that, as the number of qubits and/or layers increases in the training and test set, the model tends towards being random.
This implies that classification is unfeasible in this scenario, as no observable can effectively discriminate the embedded data. 

Next, we study the data re-uploading model, where the results are depicted in~\Cref{fig:RvsL_DR}.
The first striking trend is the appearance of two regimes in the training set results.
At shallow circuit depths, $L<6$, the values of $\mu_1(\zxvar)$ approach $1/2$ as the number as the layers is increased.
From $L\geq6$ , adding more layers improves the model’s classification performance, which is consistent with the the results in~\cite{perez-salinas2021one}.
However, the improvement in performance is faced with a lack of generalization to the test set.
This can be seen directly from the middle panels, where the centered around $1/2$ mean and the variance tend towards $0$ as $L$ grows.
As a validation step, we use again a data re-uploading process where the parameters are chosen from the uniform distribution.
The results are identical to the previous figure.
As $n$ increases, our numerical results show a slow trend toward the mean.
Even though, the model is universal, it might quickly face trainability issues, consistent with known results in BPs and kernel concentration~\cite{mcclean2018barren, cerezo2021cost, holmes2022connecting, thanasilp2024exponential}.
In fact, all of the above can be traced back to the so-called \textit{curse of dimensionality}, or in other words, the exponential dimension of the Hilbert space.

As a first final remark, our numerical studies confirm our theoretical analysis in \Cref{sec:av_randomness_qml} that a learning problem requiring solutions with uniformly distributed states might be problematic.
A second take from this numerical analysis is that when variational models are executed without strong biases they are inherently random.
This is an indication that both the model architecture and the problem formulation play crucial roles in the randomness and generalization power of the task.

\section{Conclusions}\label{sec.conclusions}
In summary, we have analytically studied the effect of data-induced randomness on the performance of QML models for binary classification tasks.
We have shown that successful classification tasks can only be achieved if the data-induced set of states exhibits limited randomness. 
In other words, the newly introduced metric \textit{class margin} must concentrate below the classification boundary within a distance $\Omega (1/\operatorname{poly}(n))$.
Furthermore, it provides a unified view of the following observations; uniformly exploring the space of quantum states faces the \textit{curse of dimensionality}, and common data embeddings for binary classification make the task impossible due to the concentration properties of the Haar measure.
The former is linked to trainability limitations in variational quantum algorithms~\cite{mcclean2018barren, holmes2022connecting}.
The latter is aligned with the concentration of kernels~\cite{thanasilp2024exponential} for expressive circuits.

Our general findings are strengthened by applying the framework to three examples.
First, we study a learning problem with provable quantum advantage based on the Discrete Logarithm Problem (DLP)~\cite{liu2021rigorous}.
The success of this algorithm lies in the feature map, which produces a set of states that are significantly distinct from a Haar-random distribution and are believed to be challenging to simulate classically. 
Then, we tackle a tailored task to highlight the effect of the observable on the classification.
In this example, we show that there exist classification tasks for which an observable fails with overwhelming probability, while another observable yields accurate descriptions, hence demonstrating that the choice of the correct observable is crucial.
Finally, we numerically compare variational QML models based on feature maps~\cite{havlicek2019supervised} and data re-uploading \cite{perez-salinas2020data}.
Re-uploading models encode data into quantum states in a flexible manner, thus outperforming feature-maps based models.
However, escaping from random sets of states becomes challenging as the size of the problem increases. 

For the particular case of variational QML models, class margin serves as a diagnostic tool for evaluating the validity of parameter-dependent embeddings.
For each individual configuration of parameters, one can perform Monte Carlo estimations on the relevant statistical moments, and predict the classification power of the model.
Therefore, class margin might be used as a comprehensive performance metric to be optimized during a training phase.

We anticipate that the results of this work will serve as motivation for the community to build new tools and techniques to study the performance of QML tasks. 
In particular, our findings indicate that useful QML methods should avoid data mappings that lead to distributions of states resembling $t$-designs when measured with the observable used for classification.
This insight should encourage the exploration alternative approaches, including applying QML models to highly structured problems, such as the hidden subgroup problems~\cite{wakeham2024inference}.
We expect that progress in this field will contribute to unveil the potential of quantum computing for learning problems. Merging the tools here proposed with quantum advantage analysis will shed light on the applicability of QML.

\acknowledgements
The authors would thank Richard Kueng for pointing them towards Bernstein's inequalities, Matthias C. Caro for his help connecting class margin to generalization bounds, Kristan Temme for insightful perspectives, and Patrick Emonts and Artur Garcia-Saez for feedback on the manuscript.
The authors would like to thank Carlo Beenakker, Jordi Tura, Vedran Dunjko, and Alba Cervera-Lierta for their support on this project.
The authors extend their gratitude to all members of aQa Leiden and BSC's Quantic group for fruitful discussions.
B. C. acknowledges funding from the Spanish Ministry for Digital Transformation and of Civil Service
of the Spanish Government through the QUANTUM ENIA project call - Quantum Spain, EU through
the Recovery, Transformation and Resilience Plan – NextGenerationEU within the framework of the
Digital Spain 2026.
This work was supported by the Dutch National Growth Fund (NGF), as part of the Quantum Delta NL programme, and also funded by the European Union under Grant Agreement 101080142 and the project EQUALITY.
This work was also partially supported by the Dutch Research Council (NWO/OCW), as part of the Quantum Software Consortium programme (project number 024.003.03).
This publication is part of the ``Quantum Inspire - the Dutch Quantum Computer in the Cloud" project (with Project No. NWA.1292.19.194) of the NWA research program ``Research on Routes by Consortia (ORC)", which is funded by the Netherlands Organization for Scientific Research (NWO).

\bibliography{refs.bib}

\begin{thebibliography}{46}
\expandafter\ifx\csname natexlab\endcsname\relax\def\natexlab#1{#1}\fi
\expandafter\ifx\csname bibnamefont\endcsname\relax
  \def\bibnamefont#1{#1}\fi
\expandafter\ifx\csname bibfnamefont\endcsname\relax
  \def\bibfnamefont#1{#1}\fi
\expandafter\ifx\csname citenamefont\endcsname\relax
  \def\citenamefont#1{#1}\fi
\expandafter\ifx\csname url\endcsname\relax
  \def\url#1{\texttt{#1}}\fi
\expandafter\ifx\csname urlprefix\endcsname\relax\def\urlprefix{URL }\fi
\providecommand{\bibinfo}[2]{#2}
\providecommand{\eprint}[2][]{\url{#2}}

\bibitem[{\citenamefont{Shor}(1997)}]{shor1997polynomialtime}
\bibinfo{author}{\bibfnamefont{P.~W.} \bibnamefont{Shor}}, \bibinfo{journal}{SIAM Journal on Computing} \textbf{\bibinfo{volume}{26}}, \bibinfo{pages}{1484} (\bibinfo{year}{1997}), ISSN \bibinfo{issn}{0097-5397}.

\bibitem[{\citenamefont{Feynman}(1982)}]{feynman1982simulating}
\bibinfo{author}{\bibfnamefont{R.~P.} \bibnamefont{Feynman}}, \bibinfo{journal}{International Journal of Theoretical Physics} \textbf{\bibinfo{volume}{21}}, \bibinfo{pages}{467} (\bibinfo{year}{1982}), ISSN \bibinfo{issn}{1572-9575}.

\bibitem[{\citenamefont{Huang et~al.}(2021)\citenamefont{Huang, Broughton, Mohseni, Babbush, Boixo, Neven, and McClean}}]{huang2021power}
\bibinfo{author}{\bibfnamefont{H.-Y.} \bibnamefont{Huang}}, \bibinfo{author}{\bibfnamefont{M.}~\bibnamefont{Broughton}}, \bibinfo{author}{\bibfnamefont{M.}~\bibnamefont{Mohseni}}, \bibinfo{author}{\bibfnamefont{R.}~\bibnamefont{Babbush}}, \bibinfo{author}{\bibfnamefont{S.}~\bibnamefont{Boixo}}, \bibinfo{author}{\bibfnamefont{H.}~\bibnamefont{Neven}}, \bibnamefont{and} \bibinfo{author}{\bibfnamefont{J.~R.} \bibnamefont{McClean}}, \bibinfo{journal}{Nature Communications} \textbf{\bibinfo{volume}{12}}, \bibinfo{pages}{2631} (\bibinfo{year}{2021}), ISSN \bibinfo{issn}{2041-1723}, \eprint{2011.01938}.

\bibitem[{\citenamefont{Liu et~al.}(2021)\citenamefont{Liu, Arunachalam, and Temme}}]{liu2021rigorous}
\bibinfo{author}{\bibfnamefont{Y.}~\bibnamefont{Liu}}, \bibinfo{author}{\bibfnamefont{S.}~\bibnamefont{Arunachalam}}, \bibnamefont{and} \bibinfo{author}{\bibfnamefont{K.}~\bibnamefont{Temme}}, \bibinfo{journal}{Nature Physics} \textbf{\bibinfo{volume}{17}}, \bibinfo{pages}{1013} (\bibinfo{year}{2021}), ISSN \bibinfo{issn}{1745-2481}.

\bibitem[{\citenamefont{Molteni et~al.}(2024)\citenamefont{Molteni, Gyurik, and Dunjko}}]{molteni2024exponential}
\bibinfo{author}{\bibfnamefont{R.}~\bibnamefont{Molteni}}, \bibinfo{author}{\bibfnamefont{C.}~\bibnamefont{Gyurik}}, \bibnamefont{and} \bibinfo{author}{\bibfnamefont{V.}~\bibnamefont{Dunjko}}, \emph{\bibinfo{title}{Exponential quantum advantages in learning quantum observables from classical data}} (\bibinfo{year}{2024}), \eprint{arXiv:2405.02027}.

\bibitem[{\citenamefont{Gyurik and Dunjko}(2023)}]{gyurik2023exponential}
\bibinfo{author}{\bibfnamefont{C.}~\bibnamefont{Gyurik}} \bibnamefont{and} \bibinfo{author}{\bibfnamefont{V.}~\bibnamefont{Dunjko}}, \emph{\bibinfo{title}{Exponential separations between classical and quantum learners}} (\bibinfo{year}{2023}), \eprint{arXiv:2306.16028}.

\bibitem[{\citenamefont{Wakeham and Schuld}(2024)}]{wakeham2024inference}
\bibinfo{author}{\bibfnamefont{D.}~\bibnamefont{Wakeham}} \bibnamefont{and} \bibinfo{author}{\bibfnamefont{M.}~\bibnamefont{Schuld}}, \emph{\bibinfo{title}{Inference, interference and invariance: How the quantum fourier transform can help to learn from data}} (\bibinfo{year}{2024}), \eprint{arXiv:2409.00172}.

\bibitem[{\citenamefont{{Gil-Fuster} et~al.}(2024)\citenamefont{{Gil-Fuster}, Gyurik, {P{\'e}rez-Salinas}, and Dunjko}}]{gil-fuster2024relation}
\bibinfo{author}{\bibfnamefont{E.}~\bibnamefont{{Gil-Fuster}}}, \bibinfo{author}{\bibfnamefont{C.}~\bibnamefont{Gyurik}}, \bibinfo{author}{\bibfnamefont{A.}~\bibnamefont{{P{\'e}rez-Salinas}}}, \bibnamefont{and} \bibinfo{author}{\bibfnamefont{V.}~\bibnamefont{Dunjko}}, \emph{\bibinfo{title}{On the relation between trainability and dequantization of variational quantum learning models}} (\bibinfo{year}{2024}), \eprint{arXiv:2406.07072}.

\bibitem[{\citenamefont{Schuld}(2021)}]{schuld2021supervised}
\bibinfo{author}{\bibfnamefont{M.}~\bibnamefont{Schuld}}, \emph{\bibinfo{title}{Supervised quantum machine learning models are kernel methods}} (\bibinfo{year}{2021}), \eprint{arXiv:2101.11020}.

\bibitem[{\citenamefont{Havl{\'i}{\v c}ek et~al.}(2019)\citenamefont{Havl{\'i}{\v c}ek, C{\'o}rcoles, Temme, Harrow, Kandala, Chow, and Gambetta}}]{havlicek2019supervised}
\bibinfo{author}{\bibfnamefont{V.}~\bibnamefont{Havl{\'i}{\v c}ek}}, \bibinfo{author}{\bibfnamefont{A.~D.} \bibnamefont{C{\'o}rcoles}}, \bibinfo{author}{\bibfnamefont{K.}~\bibnamefont{Temme}}, \bibinfo{author}{\bibfnamefont{A.~W.} \bibnamefont{Harrow}}, \bibinfo{author}{\bibfnamefont{A.}~\bibnamefont{Kandala}}, \bibinfo{author}{\bibfnamefont{J.~M.} \bibnamefont{Chow}}, \bibnamefont{and} \bibinfo{author}{\bibfnamefont{J.~M.} \bibnamefont{Gambetta}}, \bibinfo{journal}{Nature} \textbf{\bibinfo{volume}{567}}, \bibinfo{pages}{209} (\bibinfo{year}{2019}), ISSN \bibinfo{issn}{1476-4687}.

\bibitem[{\citenamefont{{P{\'e}rez-Salinas} et~al.}(2020)\citenamefont{{P{\'e}rez-Salinas}, {Cervera-Lierta}, {Gil-Fuster}, and Latorre}}]{perez-salinas2020data}
\bibinfo{author}{\bibfnamefont{A.}~\bibnamefont{{P{\'e}rez-Salinas}}}, \bibinfo{author}{\bibfnamefont{A.}~\bibnamefont{{Cervera-Lierta}}}, \bibinfo{author}{\bibfnamefont{E.}~\bibnamefont{{Gil-Fuster}}}, \bibnamefont{and} \bibinfo{author}{\bibfnamefont{J.~I.} \bibnamefont{Latorre}}, \bibinfo{journal}{Quantum} \textbf{\bibinfo{volume}{4}}, \bibinfo{pages}{226} (\bibinfo{year}{2020}).

\bibitem[{\citenamefont{McClean et~al.}(2018)\citenamefont{McClean, Boixo, Smelyanskiy, Babbush, and Neven}}]{mcclean2018barren}
\bibinfo{author}{\bibfnamefont{J.~R.} \bibnamefont{McClean}}, \bibinfo{author}{\bibfnamefont{S.}~\bibnamefont{Boixo}}, \bibinfo{author}{\bibfnamefont{V.~N.} \bibnamefont{Smelyanskiy}}, \bibinfo{author}{\bibfnamefont{R.}~\bibnamefont{Babbush}}, \bibnamefont{and} \bibinfo{author}{\bibfnamefont{H.}~\bibnamefont{Neven}}, \bibinfo{journal}{Nature Communications} \textbf{\bibinfo{volume}{9}}, \bibinfo{pages}{4812} (\bibinfo{year}{2018}), ISSN \bibinfo{issn}{2041-1723}.

\bibitem[{\citenamefont{Holmes et~al.}(2022)\citenamefont{Holmes, Sharma, Cerezo, and Coles}}]{holmes2022connecting}
\bibinfo{author}{\bibfnamefont{Z.}~\bibnamefont{Holmes}}, \bibinfo{author}{\bibfnamefont{K.}~\bibnamefont{Sharma}}, \bibinfo{author}{\bibfnamefont{M.}~\bibnamefont{Cerezo}}, \bibnamefont{and} \bibinfo{author}{\bibfnamefont{P.~J.} \bibnamefont{Coles}}, \bibinfo{journal}{PRX Quantum} \textbf{\bibinfo{volume}{3}}, \bibinfo{pages}{010313} (\bibinfo{year}{2022}).

\bibitem[{\citenamefont{Sim et~al.}(2019)\citenamefont{Sim, Johnson, and {Aspuru-Guzik}}}]{sim2019expressibility}
\bibinfo{author}{\bibfnamefont{S.}~\bibnamefont{Sim}}, \bibinfo{author}{\bibfnamefont{P.~D.} \bibnamefont{Johnson}}, \bibnamefont{and} \bibinfo{author}{\bibfnamefont{A.}~\bibnamefont{{Aspuru-Guzik}}}, \bibinfo{journal}{Advanced Quantum Technologies} \textbf{\bibinfo{volume}{2}}, \bibinfo{pages}{1900070} (\bibinfo{year}{2019}), ISSN \bibinfo{issn}{2511-9044}.

\bibitem[{\citenamefont{Cerezo et~al.}(2021{\natexlab{a}})\citenamefont{Cerezo, Sone, Volkoff, Cincio, and Coles}}]{cerezo2021cost}
\bibinfo{author}{\bibfnamefont{M.}~\bibnamefont{Cerezo}}, \bibinfo{author}{\bibfnamefont{A.}~\bibnamefont{Sone}}, \bibinfo{author}{\bibfnamefont{T.}~\bibnamefont{Volkoff}}, \bibinfo{author}{\bibfnamefont{L.}~\bibnamefont{Cincio}}, \bibnamefont{and} \bibinfo{author}{\bibfnamefont{P.~J.} \bibnamefont{Coles}}, \bibinfo{journal}{Nature Communications} \textbf{\bibinfo{volume}{12}}, \bibinfo{pages}{1791} (\bibinfo{year}{2021}{\natexlab{a}}), ISSN \bibinfo{issn}{2041-1723}.

\bibitem[{\citenamefont{Larocca et~al.}(2023)\citenamefont{Larocca, Ju, Garc{\'\i}a-Mart{\'\i}n, Coles, and Cerezo}}]{larocca2023theory}
\bibinfo{author}{\bibfnamefont{M.}~\bibnamefont{Larocca}}, \bibinfo{author}{\bibfnamefont{N.}~\bibnamefont{Ju}}, \bibinfo{author}{\bibfnamefont{D.}~\bibnamefont{Garc{\'\i}a-Mart{\'\i}n}}, \bibinfo{author}{\bibfnamefont{P.~J.} \bibnamefont{Coles}}, \bibnamefont{and} \bibinfo{author}{\bibfnamefont{M.}~\bibnamefont{Cerezo}}, \bibinfo{journal}{Nature Computational Science} \textbf{\bibinfo{volume}{3}}, \bibinfo{pages}{542} (\bibinfo{year}{2023}).

\bibitem[{\citenamefont{Larocca et~al.}(2022)\citenamefont{Larocca, Czarnik, Sharma, Muraleedharan, Coles, and Cerezo}}]{larocca2022diagnosing}
\bibinfo{author}{\bibfnamefont{M.}~\bibnamefont{Larocca}}, \bibinfo{author}{\bibfnamefont{P.}~\bibnamefont{Czarnik}}, \bibinfo{author}{\bibfnamefont{K.}~\bibnamefont{Sharma}}, \bibinfo{author}{\bibfnamefont{G.}~\bibnamefont{Muraleedharan}}, \bibinfo{author}{\bibfnamefont{P.~J.} \bibnamefont{Coles}}, \bibnamefont{and} \bibinfo{author}{\bibfnamefont{M.}~\bibnamefont{Cerezo}}, \bibinfo{journal}{Quantum} \textbf{\bibinfo{volume}{6}}, \bibinfo{pages}{824} (\bibinfo{year}{2022}), ISSN \bibinfo{issn}{2521-327X}, \eprint{2105.14377}.

\bibitem[{\citenamefont{{Bonet-Monroig} et~al.}(2024)\citenamefont{{Bonet-Monroig}, Wang, and {P{\'e}rez-Salinas}}}]{bonet-monroig2024verifying}
\bibinfo{author}{\bibfnamefont{X.}~\bibnamefont{{Bonet-Monroig}}}, \bibinfo{author}{\bibfnamefont{H.}~\bibnamefont{Wang}}, \bibnamefont{and} \bibinfo{author}{\bibfnamefont{A.}~\bibnamefont{{P{\'e}rez-Salinas}}}, \emph{\bibinfo{title}{Verifying randomness in sets of quantum states via observables}} (\bibinfo{year}{2024}), \eprint{arXiv:2404.16211}.

\bibitem[{\citenamefont{Bartlett et~al.}(1996)\citenamefont{Bartlett, Long, and Williamson}}]{bartlett1996fatshattering}
\bibinfo{author}{\bibfnamefont{P.~L.} \bibnamefont{Bartlett}}, \bibinfo{author}{\bibfnamefont{P.~M.} \bibnamefont{Long}}, \bibnamefont{and} \bibinfo{author}{\bibfnamefont{R.~C.} \bibnamefont{Williamson}}, \textbf{\bibinfo{volume}{52}}, \bibinfo{pages}{434} (\bibinfo{year}{1996}), ISSN \bibinfo{issn}{0022-0000}.

\bibitem[{\citenamefont{Vapnik}(2000)}]{vapnik2000nature}
\bibinfo{author}{\bibfnamefont{V.~N.} \bibnamefont{Vapnik}}, \emph{\bibinfo{title}{The {{Nature}} of {{Statistical Learning Theory}}}} (\bibinfo{publisher}{Springer}, \bibinfo{year}{2000}), ISBN \bibinfo{isbn}{978-1-4419-3160-3 978-1-4757-3264-1}.

\bibitem[{\citenamefont{Cerezo et~al.}(2021{\natexlab{b}})\citenamefont{Cerezo, Arrasmith, Babbush, Benjamin, Endo, Fujii, McClean, Mitarai, Yuan, Cincio et~al.}}]{cerezo2021variational}
\bibinfo{author}{\bibfnamefont{M.}~\bibnamefont{Cerezo}}, \bibinfo{author}{\bibfnamefont{A.}~\bibnamefont{Arrasmith}}, \bibinfo{author}{\bibfnamefont{R.}~\bibnamefont{Babbush}}, \bibinfo{author}{\bibfnamefont{S.~C.} \bibnamefont{Benjamin}}, \bibinfo{author}{\bibfnamefont{S.}~\bibnamefont{Endo}}, \bibinfo{author}{\bibfnamefont{K.}~\bibnamefont{Fujii}}, \bibinfo{author}{\bibfnamefont{J.~R.} \bibnamefont{McClean}}, \bibinfo{author}{\bibfnamefont{K.}~\bibnamefont{Mitarai}}, \bibinfo{author}{\bibfnamefont{X.}~\bibnamefont{Yuan}}, \bibinfo{author}{\bibfnamefont{L.}~\bibnamefont{Cincio}}, \bibnamefont{et~al.}, \bibinfo{journal}{Nature Reviews Physics} \textbf{\bibinfo{volume}{3}}, \bibinfo{pages}{625} (\bibinfo{year}{2021}{\natexlab{b}}).

\bibitem[{\citenamefont{Bharti et~al.}(2022)\citenamefont{Bharti, {Cervera-Lierta}, Kyaw, Haug, {Alperin-Lea}, Anand, Degroote, Heimonen, Kottmann, Menke et~al.}}]{bharti2022noisy}
\bibinfo{author}{\bibfnamefont{K.}~\bibnamefont{Bharti}}, \bibinfo{author}{\bibfnamefont{A.}~\bibnamefont{{Cervera-Lierta}}}, \bibinfo{author}{\bibfnamefont{T.~H.} \bibnamefont{Kyaw}}, \bibinfo{author}{\bibfnamefont{T.}~\bibnamefont{Haug}}, \bibinfo{author}{\bibfnamefont{S.}~\bibnamefont{{Alperin-Lea}}}, \bibinfo{author}{\bibfnamefont{A.}~\bibnamefont{Anand}}, \bibinfo{author}{\bibfnamefont{M.}~\bibnamefont{Degroote}}, \bibinfo{author}{\bibfnamefont{H.}~\bibnamefont{Heimonen}}, \bibinfo{author}{\bibfnamefont{J.~S.} \bibnamefont{Kottmann}}, \bibinfo{author}{\bibfnamefont{T.}~\bibnamefont{Menke}}, \bibnamefont{et~al.}, \bibinfo{journal}{Reviews of Modern Physics} \textbf{\bibinfo{volume}{94}}, \bibinfo{pages}{015004} (\bibinfo{year}{2022}).

\bibitem[{\citenamefont{Schuld et~al.}(2021)\citenamefont{Schuld, Sweke, and Meyer}}]{schuld2021effect}
\bibinfo{author}{\bibfnamefont{M.}~\bibnamefont{Schuld}}, \bibinfo{author}{\bibfnamefont{R.}~\bibnamefont{Sweke}}, \bibnamefont{and} \bibinfo{author}{\bibfnamefont{J.~J.} \bibnamefont{Meyer}}, \bibinfo{journal}{Physical Review A} \textbf{\bibinfo{volume}{103}}, \bibinfo{pages}{032430} (\bibinfo{year}{2021}), ISSN \bibinfo{issn}{2469-9926, 2469-9934}, \eprint{2008.08605}.

\bibitem[{\citenamefont{Cortes and Vapnik}(1995)}]{cortes1995supportvector}
\bibinfo{author}{\bibfnamefont{C.}~\bibnamefont{Cortes}} \bibnamefont{and} \bibinfo{author}{\bibfnamefont{V.}~\bibnamefont{Vapnik}}, \bibinfo{journal}{Machine Learning} \textbf{\bibinfo{volume}{20}}, \bibinfo{pages}{273} (\bibinfo{year}{1995}), ISSN \bibinfo{issn}{1573-0565}.

\bibitem[{\citenamefont{Sch{\"o}lkopf et~al.}(2001)\citenamefont{Sch{\"o}lkopf, Herbrich, and Smola}}]{scholkopf2001generalized}
\bibinfo{author}{\bibfnamefont{B.}~\bibnamefont{Sch{\"o}lkopf}}, \bibinfo{author}{\bibfnamefont{R.}~\bibnamefont{Herbrich}}, \bibnamefont{and} \bibinfo{author}{\bibfnamefont{A.~J.} \bibnamefont{Smola}}, in \emph{\bibinfo{booktitle}{International conference on computational learning theory}} (\bibinfo{organization}{Springer}, \bibinfo{year}{2001}), pp. \bibinfo{pages}{416--426}.

\bibitem[{\citenamefont{Delsarte}(1976)}]{DELSARTE1976230}
\bibinfo{author}{\bibfnamefont{P.}~\bibnamefont{Delsarte}}, \bibinfo{journal}{Journal of Combinatorial Theory, Series A} \textbf{\bibinfo{volume}{20}}, \bibinfo{pages}{230} (\bibinfo{year}{1976}), ISSN \bibinfo{issn}{0097-3165}.

\bibitem[{\citenamefont{Ambainis and Emerson}(2007)}]{ambainis2007quantum}
\bibinfo{author}{\bibfnamefont{A.}~\bibnamefont{Ambainis}} \bibnamefont{and} \bibinfo{author}{\bibfnamefont{J.}~\bibnamefont{Emerson}}, \emph{\bibinfo{title}{Quantum t-designs: T-wise independence in the quantum world}} (\bibinfo{year}{2007}), \eprint{quant-ph/0701126}.

\bibitem[{\citenamefont{Peruzzo et~al.}(2014)\citenamefont{Peruzzo, McClean, Shadbolt, Yung, Zhou, Love, {Aspuru-Guzik}, and O'Brien}}]{peruzzo2014variational}
\bibinfo{author}{\bibfnamefont{A.}~\bibnamefont{Peruzzo}}, \bibinfo{author}{\bibfnamefont{J.}~\bibnamefont{McClean}}, \bibinfo{author}{\bibfnamefont{P.}~\bibnamefont{Shadbolt}}, \bibinfo{author}{\bibfnamefont{M.-H.} \bibnamefont{Yung}}, \bibinfo{author}{\bibfnamefont{X.-Q.} \bibnamefont{Zhou}}, \bibinfo{author}{\bibfnamefont{P.~J.} \bibnamefont{Love}}, \bibinfo{author}{\bibfnamefont{A.}~\bibnamefont{{Aspuru-Guzik}}}, \bibnamefont{and} \bibinfo{author}{\bibfnamefont{J.~L.} \bibnamefont{O'Brien}}, \bibinfo{journal}{Nature Communications} \textbf{\bibinfo{volume}{5}}, \bibinfo{pages}{4213} (\bibinfo{year}{2014}), ISSN \bibinfo{issn}{2041-1723}.

\bibitem[{\citenamefont{Preskill}(2018)}]{preskill2018quantum}
\bibinfo{author}{\bibfnamefont{J.}~\bibnamefont{Preskill}}, \bibinfo{journal}{Quantum} \textbf{\bibinfo{volume}{2}}, \bibinfo{pages}{79} (\bibinfo{year}{2018}).

\bibitem[{\citenamefont{Arrasmith et~al.}(2021)\citenamefont{Arrasmith, Cerezo, Czarnik, Cincio, and Coles}}]{arrasmith2021effect}
\bibinfo{author}{\bibfnamefont{A.}~\bibnamefont{Arrasmith}}, \bibinfo{author}{\bibfnamefont{M.}~\bibnamefont{Cerezo}}, \bibinfo{author}{\bibfnamefont{P.}~\bibnamefont{Czarnik}}, \bibinfo{author}{\bibfnamefont{L.}~\bibnamefont{Cincio}}, \bibnamefont{and} \bibinfo{author}{\bibfnamefont{P.~J.} \bibnamefont{Coles}}, \bibinfo{journal}{Quantum} \textbf{\bibinfo{volume}{5}} (\bibinfo{year}{2021}).

\bibitem[{\citenamefont{Jerbi et~al.}(2023)\citenamefont{Jerbi, Fiderer, Poulsen~Nautrup, K{\"u}bler, Briegel, and Dunjko}}]{jerbi2023quantum}
\bibinfo{author}{\bibfnamefont{S.}~\bibnamefont{Jerbi}}, \bibinfo{author}{\bibfnamefont{L.~J.} \bibnamefont{Fiderer}}, \bibinfo{author}{\bibfnamefont{H.}~\bibnamefont{Poulsen~Nautrup}}, \bibinfo{author}{\bibfnamefont{J.~M.} \bibnamefont{K{\"u}bler}}, \bibinfo{author}{\bibfnamefont{H.~J.} \bibnamefont{Briegel}}, \bibnamefont{and} \bibinfo{author}{\bibfnamefont{V.}~\bibnamefont{Dunjko}}, \bibinfo{journal}{Nature Communications} \textbf{\bibinfo{volume}{14}}, \bibinfo{pages}{517} (\bibinfo{year}{2023}), ISSN \bibinfo{issn}{2041-1723}.

\bibitem[{\citenamefont{K{\"u}bler et~al.}(2021)\citenamefont{K{\"u}bler, Buchholz, and Sch{\"o}lkopf}}]{kubler2021inductive}
\bibinfo{author}{\bibfnamefont{J.~M.} \bibnamefont{K{\"u}bler}}, \bibinfo{author}{\bibfnamefont{S.}~\bibnamefont{Buchholz}}, \bibnamefont{and} \bibinfo{author}{\bibfnamefont{B.}~\bibnamefont{Sch{\"o}lkopf}}, \emph{\bibinfo{title}{The {{Inductive Bias}} of {{Quantum Kernels}}}} (\bibinfo{year}{2021}), \eprint{arXiv:2106.03747}.

\bibitem[{\citenamefont{Rebentrost et~al.}(2014)\citenamefont{Rebentrost, Mohseni, and Lloyd}}]{rebentrost2014quantum}
\bibinfo{author}{\bibfnamefont{P.}~\bibnamefont{Rebentrost}}, \bibinfo{author}{\bibfnamefont{M.}~\bibnamefont{Mohseni}}, \bibnamefont{and} \bibinfo{author}{\bibfnamefont{S.}~\bibnamefont{Lloyd}}, \bibinfo{journal}{Physical Review Letters} \textbf{\bibinfo{volume}{113}}, \bibinfo{pages}{130503} (\bibinfo{year}{2014}).

\bibitem[{\citenamefont{Vidal and Theis}(2020)}]{vidal2020input}
\bibinfo{author}{\bibfnamefont{J.~G.} \bibnamefont{Vidal}} \bibnamefont{and} \bibinfo{author}{\bibfnamefont{D.~O.} \bibnamefont{Theis}}, \emph{\bibinfo{title}{Input {{Redundancy}} for {{Parameterized Quantum Circuits}}}} (\bibinfo{year}{2020}), \eprint{arXiv:1901.11434}.

\bibitem[{\citenamefont{Caro et~al.}(2021)\citenamefont{Caro, {Gil-Fuster}, Meyer, Eisert, and Sweke}}]{caro2021encodingdependent}
\bibinfo{author}{\bibfnamefont{M.~C.} \bibnamefont{Caro}}, \bibinfo{author}{\bibfnamefont{E.}~\bibnamefont{{Gil-Fuster}}}, \bibinfo{author}{\bibfnamefont{J.~J.} \bibnamefont{Meyer}}, \bibinfo{author}{\bibfnamefont{J.}~\bibnamefont{Eisert}}, \bibnamefont{and} \bibinfo{author}{\bibfnamefont{R.}~\bibnamefont{Sweke}}, \bibinfo{journal}{Quantum} \textbf{\bibinfo{volume}{5}}, \bibinfo{pages}{582} (\bibinfo{year}{2021}).

\bibitem[{\citenamefont{Caro et~al.}(2023)\citenamefont{Caro, Huang, Ezzell, Gibbs, Sornborger, Cincio, Coles, and Holmes}}]{caro2023outofdistribution}
\bibinfo{author}{\bibfnamefont{M.~C.} \bibnamefont{Caro}}, \bibinfo{author}{\bibfnamefont{H.-Y.} \bibnamefont{Huang}}, \bibinfo{author}{\bibfnamefont{N.}~\bibnamefont{Ezzell}}, \bibinfo{author}{\bibfnamefont{J.}~\bibnamefont{Gibbs}}, \bibinfo{author}{\bibfnamefont{A.~T.} \bibnamefont{Sornborger}}, \bibinfo{author}{\bibfnamefont{L.}~\bibnamefont{Cincio}}, \bibinfo{author}{\bibfnamefont{P.~J.} \bibnamefont{Coles}}, \bibnamefont{and} \bibinfo{author}{\bibfnamefont{Z.}~\bibnamefont{Holmes}}, \bibinfo{journal}{Nature Communications} \textbf{\bibinfo{volume}{14}}, \bibinfo{pages}{3751} (\bibinfo{year}{2023}), ISSN \bibinfo{issn}{2041-1723}.

\bibitem[{\citenamefont{Hur and Park}(2024)}]{hur2024understandinggeneralizationquantummachine}
\bibinfo{author}{\bibfnamefont{T.}~\bibnamefont{Hur}} \bibnamefont{and} \bibinfo{author}{\bibfnamefont{D.~K.} \bibnamefont{Park}}, \emph{\bibinfo{title}{Understanding generalization in quantum machine learning with margins}} (\bibinfo{year}{2024}), \eprint{arXiv: 2411.06919}.

\bibitem[{\citenamefont{Olkin and Rubin}(1964)}]{olkin1964multivariate}
\bibinfo{author}{\bibfnamefont{I.}~\bibnamefont{Olkin}} \bibnamefont{and} \bibinfo{author}{\bibfnamefont{H.}~\bibnamefont{Rubin}}, \bibinfo{journal}{The Annals of Mathematical Statistics} \textbf{\bibinfo{volume}{35}}, \bibinfo{pages}{261 } (\bibinfo{year}{1964}).

\bibitem[{\citenamefont{Helstrom}(1976)}]{helstrom1976quantum}
\bibinfo{author}{\bibfnamefont{C.~W.} \bibnamefont{Helstrom}}, \emph{\bibinfo{title}{Quantum Detection and Estimation Theory}}, no. \bibinfo{number}{v. 123} in \bibinfo{series}{Mathematics in Science and Engineering} (\bibinfo{publisher}{Academic Press}, \bibinfo{address}{New York}, \bibinfo{year}{1976}), ISBN \bibinfo{isbn}{978-0-12-340050-5}.

\bibitem[{\citenamefont{{P{\'e}rez-Salinas} et~al.}(2021)\citenamefont{{P{\'e}rez-Salinas}, {L{\'o}pez-N{\'u}{\~n}ez}, {Garc{\'i}a-S{\'a}ez}, {Forn-D{\'i}az}, and Latorre}}]{perez-salinas2021one}
\bibinfo{author}{\bibfnamefont{A.}~\bibnamefont{{P{\'e}rez-Salinas}}}, \bibinfo{author}{\bibfnamefont{D.}~\bibnamefont{{L{\'o}pez-N{\'u}{\~n}ez}}}, \bibinfo{author}{\bibfnamefont{A.}~\bibnamefont{{Garc{\'i}a-S{\'a}ez}}}, \bibinfo{author}{\bibfnamefont{P.}~\bibnamefont{{Forn-D{\'i}az}}}, \bibnamefont{and} \bibinfo{author}{\bibfnamefont{J.~I.} \bibnamefont{Latorre}}, \bibinfo{journal}{Physical Review A} \textbf{\bibinfo{volume}{104}}, \bibinfo{pages}{012405} (\bibinfo{year}{2021}), ISSN \bibinfo{issn}{2469-9926, 2469-9934}.

\bibitem[{\citenamefont{Barthe and {P{\'e}rez-Salinas}}(2023)}]{barthe2023gradients}
\bibinfo{author}{\bibfnamefont{A.}~\bibnamefont{Barthe}} \bibnamefont{and} \bibinfo{author}{\bibfnamefont{A.}~\bibnamefont{{P{\'e}rez-Salinas}}}, \emph{\bibinfo{title}{Gradients and frequency profiles of quantum re-uploading models}} (\bibinfo{year}{2023}), \eprint{arXiv:2311.10822}.

\bibitem[{\citenamefont{Thanasilp et~al.}(2024)\citenamefont{Thanasilp, Wang, Cerezo, and Holmes}}]{thanasilp2024exponential}
\bibinfo{author}{\bibfnamefont{S.}~\bibnamefont{Thanasilp}}, \bibinfo{author}{\bibfnamefont{S.}~\bibnamefont{Wang}}, \bibinfo{author}{\bibfnamefont{M.}~\bibnamefont{Cerezo}}, \bibnamefont{and} \bibinfo{author}{\bibfnamefont{Z.}~\bibnamefont{Holmes}}, \bibinfo{journal}{Nature Communications} \textbf{\bibinfo{volume}{15}}, \bibinfo{pages}{5200} (\bibinfo{year}{2024}).

\bibitem[{\citenamefont{S.N.Bernstein}(1924)}]{Bernstein}
\bibinfo{author}{\bibnamefont{S.N.Bernstein}}, \bibinfo{journal}{Ann. Sci. Inst. Sav. Ukraine} \textbf{\bibinfo{volume}{4}} (\bibinfo{year}{1924}).

\bibitem[{\citenamefont{Bercu et~al.}(2015)\citenamefont{Bercu, Delyon, and Rio}}]{bercu2015concentration}
\bibinfo{author}{\bibfnamefont{B.}~\bibnamefont{Bercu}}, \bibinfo{author}{\bibfnamefont{B.}~\bibnamefont{Delyon}}, \bibnamefont{and} \bibinfo{author}{\bibfnamefont{E.}~\bibnamefont{Rio}}, \emph{\bibinfo{title}{Concentration {{Inequalities}} for {{Sums}} and {{Martingales}}}}, {{SpringerBriefs}} in {{Mathematics}} (\bibinfo{publisher}{Springer International Publishing}, \bibinfo{year}{2015}), ISBN \bibinfo{isbn}{978-3-319-22099-4}.

\bibitem[{\citenamefont{Wendel}(1948)}]{Wendel}
\bibinfo{author}{\bibfnamefont{J.~G.} \bibnamefont{Wendel}}, \bibinfo{journal}{The American Mathematical Monthly} \textbf{\bibinfo{volume}{55}}, \bibinfo{pages}{563} (\bibinfo{year}{1948}), ISSN \bibinfo{issn}{00029890, 19300972}.

\bibitem[{\citenamefont{Gautschi}(1959)}]{Gautschi}
\bibinfo{author}{\bibfnamefont{W.}~\bibnamefont{Gautschi}}, \bibinfo{journal}{Journal of Mathematics and Physics} \textbf{\bibinfo{volume}{38}}, \bibinfo{pages}{77} (\bibinfo{year}{1959}).

\end{thebibliography}

\onecolumngrid
\newpage

\appendix

\section{Analytic expression of the variance for Haar random-states }
\label{appendix:variance_O}
In this section, we derive an analytic expression for the variance of a given observable $\hat{O}$ when the family of states $S = \{|\psi({\vec{x}})\rangle\}$ forms, at least, an $\hat{O}-$shadowed $2-$design. Recall that a set forming a $2-$design is an $\hat{O}-$shadowed $2-$design for all $\hat{O}$, but an $\hat{O}-$shadowed $2-$design is not necessarily a $2-$design. The variance of $\langle \psi ({\vec{x}})|\hat{O}|\psi({\vec{x}})\rangle$ is given by 
\begin{equation}
    \sigma^2(\hat O, S) = \mu_2(\hat{O}, S) - \left( \mu_1(\hat{O}, S) \right)^2, 
    \label{eq:Var_O}
\end{equation}
We can compute the first moment as 
\begin{equation}
    \mu_1(\hat{O}, S) = \mathds{E}_S[\langle \psi ({\vec{x}})|\hat{O}|\psi({\vec{x}})\rangle] =  \mathds{E}_S[\vec{\lambda}\cdot \vec{u}] = \sum_{i = 1}^G \lambda_i \mathds{E}_S[x_i] = \frac{1}{\alpha_0} \sum_{i = 1}^G \lambda_i \alpha_i ,
    \label{eq:first_moment_Haar}
\end{equation}
where in the second equality we have used the results in Ref~\cite{bonet-monroig2024verifying}, with $\vec{\lambda} = (\lambda_1, \lambda_2, ...\lambda_G)$ being the vector of the $G$ different eigenvalues of $\hat{O}$ and $\vec{u}$ is a random variable sampled according to a Dirichlet distribution with parameter $\vec{\alpha} = \frac{\vec{m}}{2}$, being $\vec{m} = (m_1, m_2, ...m_G)$ the vector of the multiplicities associated with each eigenvalue. In the last equality, we have used that, if $\vec{u}\sim \text{Dir}(\vec{\alpha})$, then $E[u_i] = \alpha_i/\alpha_0$, being $\alpha_0 = \sum_{i = 1}^G \alpha_i$. Now, for the second moment: 
\begin{align}
    \mu_2(\hat{O}, S) = &\mathds{E}_S[\langle \psi ({\vec{x}})|\hat{O}|\psi({\vec{x}})\rangle^2] =  \mathds{E}_S\left[ (\vec{\lambda}\cdot \vec{u})^2 \right]= \mathds{E}_S\left[ \sum_{i = 1}^G \lambda_i u_i \right]^2 = \mathds{E}_S\left[ \sum_{i,j = 1}^G \lambda_i \lambda_j u_i u_j\right] = \\  &\sum_{\substack{i,j=1 \\ i\neq j }}^G \lambda_i \lambda_j \mathds{E}_S[u_i u_j] + \sum_{i}^G \lambda_i^2 \mathds{E}_S[u_i ^2] = \sum_{\substack{i,j=1 \\ i\neq j }}^G \lambda_i \lambda_j\frac{\alpha_i \alpha_j }{\alpha_0 (\alpha_0 +1)} + \sum_{i = 1}^G \frac{\lambda_i^2 \alpha_i (\alpha_i +1)}{\alpha_0 (\alpha_0+1)},
\end{align}
where we have used that $\mathds{E}[u_i u_j] = \frac{\alpha_i \alpha_j}{\alpha_0 (\alpha_0 +1)}$ for $u_i \neq u_j$ and $\mathds{E}[u_i^k] = \frac{\alpha_i (\alpha_i+1)...(\alpha_i+k-1)}{\alpha_0 (\alpha_0+1)...(\alpha_0+k-1)}$. Putting all together in Eq.~\eqref{eq:Var_O}, we end up having 
\begin{align}
\label{eq:variance_Haar_states}
   \sigma^2(\hat O, S) = \sum_{\substack{i,j=1 \\ i\neq j }}^G \lambda_i \lambda_j\frac{\alpha_i \alpha_j }{\alpha_0 (\alpha_0 +1)} + \sum_{i = 1}^G \frac{\lambda_i^2 \alpha_i (\alpha_i +1)}{\alpha_0 (\alpha_0+1)} - \sum_{i,j = 1}^G \lambda_i \lambda_j \frac{\alpha_i \alpha_j}{\alpha_0^2 } = \\\sum_{\substack{i,j=1 \\ i\neq j }}^G \lambda_i \lambda_j\frac{\alpha_i \alpha_j }{\alpha_0} \left( \frac{1}{\alpha_0 +1} - \frac{1}{\alpha_0} \right) + \sum_{i = 1}^G \frac{\lambda_i^2 \alpha_i }{\alpha_0}\left( \frac{\alpha_i +1}{\alpha_0+1}- \frac{\alpha_i}{\alpha_0}\right) = \\ \sum_{i = 1}^G \frac{\lambda_i^2 \alpha_i }{\alpha_0 ( \alpha_0+1)} - \sum_{i,j=1}^G \frac{\lambda_i \lambda_j \alpha_i \alpha_j}{\alpha_0^2 (\alpha_0 +1)} .
\end{align}
In particular, we derive the scaling of the variance when the observable $\hat{O}$ is a projector:
\begin{align}
    \sigma^2(\hat O, S) =&\sum_{i = 1}^G \frac{\lambda_i^2 m_i }{2^{n} ( 2^{n-1}+1)} - \sum_{i,j=1}^G \frac{\lambda_i \lambda_j m_i m_j}{2^{2n} (2^{n-1} +1)} =\\&\frac{1}{2^n (2^{n-1}+1)}\left(\sum_{i = 1}^G {\lambda_i^2 m_i }- \frac{1}{2^n}\sum_{i,j=1}^G  \lambda_i \lambda_j m_i m_j\right) \leq \\&\frac{1}{2^n (2^{n-1}+1)}\sum_{i = 1}^G {\lambda_i^2 m_i } \leq \frac{2^n}{2^n (2^{n-1}+1)}\in \mathcal{O}(2^{-n}).
    \label{eq:variance_haar_random_1}
\end{align}
Where we have used that $\alpha_i = m_i/2$ in $\sum_{i = 1}^G \alpha_i = 2^{n-1}$, and $\lambda_i \in \{0,1\}$. Therefore, the variance of a projector averaged over an $\hat{O}-$-shadowed 2-design family of states $S=\{|\psi\rangle\}$ is bounded by 
    \begin{equation}
       \sigma^2(\hat O, S) \in  e^{-\Omega(n)}.
    \end{equation}
This aligns with the results obtained in reference~\cite{thanasilp2024exponential} for expressive kernels.

\section{Analytical expression for the centered $\hat{O}$-shadowed $t$-moments}
\label{app:centered_moments}
In this section, we derive the analytic expression for the centered $\hat{O}$-shadowed $t$-moments for arbitrary $t$. They are defined as follows: 
\begin{equation}
    \mutbars{\hat O} = \Es{\left(\bra\psi \hat{O}\ket\psi - \mu_1(\hat O, S)\right)^t}.
\end{equation}
Now, let's derive an analytical formula for computing them: 
\begin{align}
   \mutbars{\hat O} = \Es{\sum_{k = 0}^t \binom{t}{k} \bra\psi \hat{O}\ket\psi ^k (-1)^{t-k}\mu_1(\hat O, S)^{t-k}} = \sum_{k = 0}^t\binom{t}{k}(-1)^{t-k}\mu_1^{t-k}(\hat O, S)\muts[k]{\hat O},
\end{align}
Notice that we have an analytical expression for $\muts{\hat O}$~\cite{bonet-monroig2024verifying}. Putting all together, we obtain 
\begin{align}
    \mutbars{\hat O} = \sum_{k = 0}^t\binom{t}{k}(-1)^{t-k}\mu_1(\hat O, S)^{t-k}\sum_{\substack{\boldsymbol{l} \in \mathbb{N}^G \\\|\boldsymbol{l}\|_1=k}}\binom{k}{\boldsymbol{l}}\left(\prod_{i=1}^G \lambda_i^{l_i}\right) \frac{\Gamma\left(\alpha_0\right)}{\Gamma\left(\alpha_0+k\right)} \prod_{i=1}^N \frac{\Gamma\left(\alpha_i+l_i\right)}{\Gamma\left(\alpha_i\right)}
\end{align}
where the sum of the index $\vec l$ runs over all possible non-negative integers $l_1, l_2,...l_G$ with $k = l_1+l_2+...l_G$. Recall that $\alpha_i = m_i/2$ being $m_i$ the multiplicity of the eigenvalue $\lambda_i$ of the observable $\hat O$ and $\alpha_0 = \sum_{i = 1}^G \alpha_i = 2^{n-1}$. We have also used that the first $\hat{O}-$shadowed moment has a simple expression given by $\mu_1(\hat O, S) = \sum_{i = 1}^G \lambda_i \alpha_i /\alpha_0$ (see Appendix~\ref{appendix:variance_O}).

\section{Proof of~\Cref{le:classification}}
\label{app:classification}
\leclassification*
In our classification model, the output of the quantum computation is retrieved through a measurement of the observable $\hat O$. We assume that $\hat O$ is a projector in our model, hence the outcomes are $\{0, 1\}$. The probability of classifying a point in the incorrect class $y'(\vec x)$ is given by the value $\zx$ and its comparisson with the threshold $b$. Hence, the probability of classifying $\vec x$ correctly is given by a binomial distribution with average $1 - \zx$. 

Hoeffding's inequality for binomial distribution implies an exponential-in-samples accuracy in the estimation of $\zx$. Consider $k(\vec x)$ as the number of outcomes of the incorrect class, i.e., the number of times we measure $o(\bm x) = 0$ when $y(\bm x) = 1$ (and viceversa) with a single shoot. Then,  
\begin{equation}
    \prob{\left\vert \frac{k(\vec x)}{M} - \zx\right\vert \geq \epsilon} \leq 2 \exp\left( -2 \epsilon^2 M\right).
\end{equation}

We are interested in determining whether $\zx < 1/2$, in other words, in determining if the classification of $\vec x$ is correct. Our classification will be correct (with certain probability) if our estimation $k(\vec x) / M \leq \frac{1}{2} - \epsilon$, with $\epsilon$ being the error in the estimation depending on the number of samples.  
Considering a confidence level  $\delta$, we can state that, with probability $1 - \delta$,
\begin{equation}
    \left\vert  k(\vec x)/M - \zx \right\vert \leq \sqrt{\frac{\log(2/\delta)}{2 M}},
\end{equation}
and we recover the well-known result of the scaling of the error $\epsilon \in \mathcal O \left(\frac{1}{\sqrt{M}}\right)$. 

Therefore, an estimation of $\zx$ with $M$ measurements allows for a distinction of three categories. With probability at least $1 - \delta$
\begin{align}
    & {\rm if \quad }\frac{k(\vec x)}{M} \leq b - \sqrt{\frac{\log(2/\delta)}{2 M}}  \implies \zx < 1/2 \\
    &{\rm if\quad }\frac{k(\vec x)}{M} \geq b + \sqrt{\frac{\log(2/\delta)}{2 M}}  \implies \zx > 1/2 \\
    &{\rm if \quad}\left\vert\frac{k(\vec x)}{M} - b  \right\vert \leq\sqrt{\frac{\log(2/\delta)}{2 M}} \implies \zx \approx 1/2.
\end{align}
The last condition indicates  that is impossible to distinguish whether $\zx > 1/2$ or $\zx \leq 1/2$. Hence, a quantum classifier with $M$ samples is capable of correctly classify a data point $\vec{x}$, with probability $1 - \delta$ if
\begin{equation}
    \zx \leq b - \sqrt{\frac{\log(2/\delta)}{2 M}}, 
\end{equation}
yielding the desired result. 
\qed
\section{Proof of \Cref{le:chebyshev}}\label{app:chebyshev}

To begin the proof, we first define failure in a classification task, which includes two cases: when the sample is misclassified, and when the sample is close enough to the decision boundary that the measurements used to determine the class do not yield a conclusive result. This allows us to identify
\begin{equation}
    \operatorname{Prob}_F\left( \hat Z_y^{(b)}, \mathcal X\right) \equiv 
    \prob{\zx \geq b - \sqrt{\frac{\log(2 / \delta)}{2M}}}, 
\end{equation}
where the success probability is at least $1 - \delta$, according to~\Cref{le:classification}. The next step is to use Chebyshev's inequality, stated as follows. Let $X$ be a random variable with variance $\sigma^2$. Then
\begin{equation}
    \prob{\vert X - \mathds{E}[X]\vert \geq k} \leq \frac{\sigma^2}{k^2}. 
\end{equation}

We just need to identify terms in Chebyshev's inequality to find
\begin{equation}
    \prob{\zx \geq b - \sqrt{\frac{\log(2 / \delta)}{2M}}} \leq \frac{\sigmax[2]}{\left(b - \mutx[1] - \sqrt{\frac{\log(2/\delta)}{2 M}}\right)^2}, 
\end{equation}
\qed

\section{Proof of~\Cref{le:Bernstein}}\label{app:berstein}
\lebernstein*

We begin by stating the following result, known as one of Berstein's inequalities. 
\begin{theorem}[Berstein's inequality~\cite{Bernstein, bercu2015concentration}]
    Let $X_1,...,X_n$ be zero-mean independent random variables. If, for every $X_i$ for $i\in \{1,...,n\}$ and $t \geq 2$, there exists a positive constant $L$ such that
    \begin{equation}
        \mathbb E\left[ \vert X_i \vert^t\right] \leq \frac{1}{2} \mathbb E\left[X_i^2\right] L^{t - 2} t!, 
    \end{equation}
    then 
    \begin{equation}
        \prob {\sum_{i= 1}^n X_i\geq 2 k \sqrt{\sum_{i= 1}^n \mathds{E}\left[X_i^2\right]}}  < \exp(-\frac{k^2}{2 \left( \sum_{i= 1}^n \mathds{E}\left[X_i^2\right] + L k\right)}) \quad {\rm for }\quad  0 \leq k \leq \frac{1}{2L}\sqrt{\sum_{i= 1}^n \mathds{E}\left[X_i^2\right]}.
    \end{equation}
\end{theorem}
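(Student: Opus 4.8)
The plan is to read~\eqref{eq:scaling_bernstein} as a sub-exponential moment condition on the centered class margin and to insert it into the moment form of Bernstein's inequality stated above, specialised to $n=1$. First I would recycle the reformulation from the proof of~\Cref{le:chebyshev}: a failure occurs exactly when $\zx \geq b - \sqrt{\log(2/\delta)/(2M)}$, so that
\begin{equation}
    \operatorname{Prob}_F\left(\hat Z_y^{(b)}, \mathcal X\right) = \prob{X \geq k}, \qquad X := \zx - \mu_1,
\end{equation}
with $k = b - \sqrt{\log(2/\delta)/(2M)} - \mu_1$ precisely the quantity in the statement. Then $X$ is a single zero-mean random variable with $\mathds{E}[X^2] = \sigma^2$, and the whole problem collapses to a one-sided tail bound $\prob{X \geq k}$ for this variable.

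The crux is verifying Bernstein's moment hypothesis $\mathds{E}[|X|^t] \leq \tfrac12\,\mathds{E}[X^2]\,L^{t-2}\,t!$ from~\eqref{eq:scaling_bernstein}. Raising~\eqref{eq:scaling_bernstein} to the $t$-th power gives $\mathds{E}[|X|^t] \leq \left(\sigma^2 L t/e\right)^t$, and I would then trade this power-type growth for the factorial form through the elementary bound $t! \geq (t/e)^t$. The specific constants in~\eqref{eq:scaling_bernstein} — the factor $L/e$ and the linear-in-$t$ scaling — are tailored exactly so that the $(t/e)^t$ produced by Stirling matches the $t^t$, leaving the residual $\sigma^2$- and $L$-dependence to be absorbed into $L^{t-2}$ after a redefinition of the constant $L$ and using $\sigma^2 \leq 1$ (which holds because $\hat Z_y^{(b)}$ has spectrum contained in $[0,1]$). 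For odd $t$, where $\mutbarx[t]$ is a signed quantity, I would interpret the hypothesis as a bound on the absolute centered moment $\mathds{E}[|X|^t]$, or equivalently control the odd moments by the neighbouring even ones through monotonicity of $L^p$-norms.

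With the moment condition established, the conclusion is a direct application of the quoted Bernstein inequality with $n=1$, so that $\sum_i \mathds{E}[X_i^2] = \sigma^2$. Matching the deviation level to $k$ yields
\begin{equation}
    \prob{X \geq k} \leq \exp\left(-\frac{k^2}{2(\sigma^2 + Lk)}\right),
\end{equation}
which is the claim. I would close by noting that the bound is only meaningful when $k>0$, i.e. $b - \sqrt{\log(2/\delta)/(2M)} > \mu_1$ — the same margin-positivity condition required for the classification to be resolvable at all — and that $k$ must lie within the admissibility window of the inequality.

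I expect the main obstacle to be the constant bookkeeping in the second paragraph: showing that~\eqref{eq:scaling_bernstein} genuinely reproduces Bernstein's factorial moment condition means carefully tracking the Stirling gap between $t^t$ and $t!$, reconciling the $\sigma^2$ on the right of~\eqref{eq:scaling_bernstein} with both the $\mathds{E}[X^2]=\sigma^2$ prefactor and the $L^{t-2}$ term in Bernstein's hypothesis, and handling $\mathds{E}[|X|^t]$ versus the signed moment $\mutbarx[t]$ for odd $t$. Once the moment condition is secured, the remainder is a one-line substitution into a standard inequality.
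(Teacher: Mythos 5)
Your proposal does not prove the statement under review; it proves a different result while assuming the statement. The theorem you were handed is Bernstein's inequality itself, and the step you describe as ``a direct application of the quoted Bernstein inequality with $n=1$'' is the entire content of that theorem. What you have actually written out---identifying the failure event as in the proof of \Cref{le:chebyshev}, centering $X = \zx - \mu_1$, verifying the factorial moment hypothesis from \eqref{eq:scaling_bernstein} via Stirling, and substituting $k = b - \sqrt{\log(2/\delta)/(2M)} - \mu_1$---is a proof of the downstream \Cref{le:Bernstein}, and indeed it closely mirrors the paper's argument in Appendix~\ref{app:berstein} (including the bookkeeping you flag: $\sigma \le 1$, $L \le 1$, the $\sqrt{2\pi t}$ term in Stirling absorbing the prefactor $\tfrac12$, and reading $\mutbarx[t]$ as an absolute centered moment). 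But that lemma is a consequence of the theorem, not equivalent to it, so as a proof of the stated theorem your argument is circular. Note that the paper itself supplies no proof of this theorem either---it is imported from \cite{Bernstein, bercu2015concentration}---so a blind proof must supply the classical argument from scratch.

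The missing ingredient is the Chernoff/moment-generating-function argument. For $0 \le \lambda < 1/L$, the hypothesis $\mathbb{E}\left[\vert X_i\vert^t\right] \le \tfrac12\,\mathbb{E}\left[X_i^2\right] L^{t-2} t!$ lets you sum a geometric series in the Taylor expansion (the linear term vanishes since $\mathbb{E}[X_i]=0$):
\begin{equation}
    \mathds{E}\left[e^{\lambda X_i}\right] \le 1 + \sum_{t \ge 2} \frac{\lambda^t\, \mathds{E}\left[\vert X_i \vert^t\right]}{t!} \le 1 + \frac{\lambda^2\, \mathds{E}\left[X_i^2\right]}{2} \sum_{t \ge 2} (\lambda L)^{t-2} = 1 + \frac{\lambda^2\, \mathds{E}\left[X_i^2\right]}{2(1-\lambda L)} \le \exp\left( \frac{\lambda^2\, \mathds{E}\left[X_i^2\right]}{2(1-\lambda L)}\right).
\end{equation}
Independence multiplies these bounds over $i$, Markov's inequality then gives $\prob{\sum_{i=1}^n X_i \ge a} \le e^{-\lambda a} \prod_{i=1}^n \mathds{E}\left[e^{\lambda X_i}\right]$ with the deviation level $a = 2k\sqrt{\sum_{i=1}^n \mathds{E}\left[X_i^2\right]}$ of the statement, and choosing $\lambda$ suitably (of the form $\lambda = a / \left(\sum_{i=1}^n \mathds{E}\left[X_i^2\right] + L a\right)$, up to the normalization conventions of the quoted version) produces the claimed exponential bound. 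The admissibility window $0 \le k \le \frac{1}{2L}\sqrt{\sum_{i=1}^n \mathds{E}\left[X_i^2\right]}$, which your proposal mentions only in passing, is not cosmetic: it is exactly what keeps the optimizing $\lambda$ strictly inside the radius of convergence $\lambda L < 1$ of the geometric series above. None of this machinery appears in your write-up, so the proposal, while a faithful reconstruction of the paper's proof of \Cref{le:Bernstein}, leaves the statement actually in question unproved.
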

We consider $\zx$ as our only random variable, hence $n = 1$. . Since this variable has a non-zero mean, we use the corresponding standardized moments to apply Bernstein's inequality:
\begin{equation}
    \mathds{E}\left[\left\vert \zx - \mu_1\right\vert^t\right]\leq \frac 12 \sigma^2 L^{t-2}t! \ .
    \label{eq:modified_Brenstein_condition}
\end{equation}
For simplicity of the notation, we have defined $\mu_1:= \mu_1(\hat Z, \mathcal X)$ and $\sigma:= \sigma(\hat Z, \mathcal X)$. We can further relax the condition in Eq.~\eqref{eq:modified_Brenstein_condition} as follows. First, we consider the $t$-th square rooth of Eq.~\eqref{eq:modified_Brenstein_condition}. 
\begin{equation}
    \left(\mathds{E}\left[\left\vert \zx - \mu_1\right\vert^t\right]\right)^{1/t}\leq \frac{1}{2^{1/t}}\sigma^{2/t}L^{1- 2/t}(t!)^{1/t}.
    \label{eq:Br_condition_1/t}
\end{equation}
Now, noting that trivially $\sigma\leq 1$ and $L\leq1$ because of the range of $z(\bm x)\in [0,1]$ and that the following inequality holds for $t\geq 1$ 
\begin{equation}
    (t!)^{1/t}\geq (2 \pi t)^{1/t}\frac{t}{e},
\end{equation}
we can find a lower bound for the right hand side of Eq.~\eqref{eq:Br_condition_1/t}:
\begin{equation}
    \frac{1}{2^{1/t}}\sigma^{2/t}L^{1- 2/t}(t!)^{1/t}\frac t e \geq \sigma^2 L (\pi t)^{1/t} \frac t e \geq \sigma^2 L \pi \frac t e .
    \label{eq:lower_bound_brenstein}
\end{equation}
in the first inequality we have used that $\sigma^{2/t}\geq \sigma^2$ and $L^{1-2/t}\geq L$ considering that $t\geq 1$
and $\sigma,L \leq 1$. We have also used Stirling's approximation to bound
\begin{equation}
    (t!)^{1/t} \geq (2\pi t)^{1/t} \frac{t}{e}, 
\end{equation}
being $e$ Euler's constant. For the second inequality in Eq.~\eqref{eq:lower_bound_brenstein}, we have used that $\pi^{1/t}\geq \pi$ and that $t^{1/t}t\leq t$ considering $t\geq 1$.  Putting all together, we can relax the condition for Berstein's inequality. If the following holds
\begin{equation}
    \left(\mathds{E}\left[\left\vert \zx - \mu_1\right\vert^t\right]\right)^{1/t} \leq \sigma^2 L \pi \frac t e,
\end{equation}
then~\Cref{eq:Br_condition_1/t} applies.

Finally, we need to consider the classification task. For the data point $\vec x$ to be misclassified or not-determined, we need that
\begin{equation}
    \zx \geq b - \sqrt{\frac{\log(2 / \delta)}{2 M}}. 
\end{equation}
Therefore, the probability of failure in the classification is given by
\begin{equation}
    \prob{\zx \geq b - \sqrt{\frac{\log(2 / \delta)}{2 M}}} \leq \exp\left( - \frac{k^2}{2\left(\sigma^2 + L  k \right) }\right), 
\end{equation}
where we have taken $k =\left[ b - \sqrt{\frac{\log(2 / \delta)}{2 M}} - \mu_1\right]$, yielding the desired result. \qed



\section{Proof of~\Cref{le:subgaussian}}\label{Appendix:sub-gaussian}
\lesubgaussian*
We begin by defining the sub-gaussianity condition. 
\begin{definition}[Sub-gaussianity]
A zero-mean random variable $X$ is sub-gaussian if there exists $C > 0$ such that 
\begin{equation}
    \prob{\vert X \vert \geq k} \leq 2 \exp\left( \frac{-k^2}{C^2}\right).
    \label{eq:definition_subgaussian}
\end{equation}
\end{definition}
The condition 
\begin{equation}\label{eq.gamma_subgaussian}
    \mathds E\left[ \vert X \vert^t\right] \leq 2 L^{t} \Gamma\left(  \frac{t}{2} + 1 \right),
\end{equation}
for $p$ a positive constant, is equivalent to the condition in~\Cref{eq:definition_subgaussian}, as we will show. By Markov's inequality, for all sub-gaussian variables $X$
\begin{equation}\label{eq.markov}
    \prob{\vert X \vert \geq k}= \prob{\exp\left( \frac{X^2}{C^2}\right) \geq \exp\left( \frac{k^2}{C^2}\right)} \leq \mathds E\left[ \exp\left( \frac{X^2}{C^2}\right) \right] \exp\left( \frac{-k^2}{C^2}\right) \leq 2 \exp\left( \frac{-k^2}{C^2}\right).
\end{equation}
Therefore, we just need to show
\begin{equation}
    \mathds E\left[ \vert X \vert^t\right] \leq 2 L^{p} \Gamma\left(  \frac{t}{2} + 1 \right) \implies
    \mathds E\left[ \exp\left( \frac{X^2}{C^2}\right) \right] \leq 2.
\end{equation}
To do so, we expand by Taylor
\begin{equation}
    \mathds E\left[ \exp\left( \frac{X^2}{C^2}\right) \right] = 1 + \sum_{t = 0}^\infty \frac{\mathds E\left[ X^{2t} \right]}{C^{2t} t!} \leq  1 + \sum_{t = 0}^\infty \frac{2 L^{2t} \Gamma\left(t + 1\right)}{C^{2t} t!} = 1 + 2 \sum_{t = 1}^\infty \left( \frac{L^2}{C^2}\right)^t. 
\end{equation}
The last sum can be identified as the Taylor expansion of $f(x) = (1 - x)^{-1}$, thus
    \begin{equation}
        \mathds E\left[ \exp\left( \frac{X^2}{C^2}\right) \right] \leq \frac{2}{1 - \frac{L^2}{C^2}} - 1.
    \end{equation}
Connecting the previous result to~\Cref{eq.markov} we just need to impose
\begin{equation}
    \frac{2}{1 - \frac{L^2}{C^2}} - 1 \leq 2 \Rightarrow C \geq \sqrt 3 L.
\end{equation}
Hence 
\begin{equation}
    \mathds{E}\left[ \vert X \vert^t\right] \leq 2 L^{p} \Gamma\left(  \frac{t}{2} + 1 \right) \Rightarrow \prob{\vert X \vert \geq k} \leq 2 \exp\left( \frac{-k^2}{3 L^2}\right).
\end{equation}

Now, following the steps of~Appendix \ref{app:berstein}, we recall
\begin{equation}
    2^{1/t} \geq 1
\end{equation}
and 
\begin{equation}
    \Gamma\left( \frac{t}{2} + 1\right)^{1/t} \geq (\pi t)^{1/2t} \sqrt{\frac{t}{2e}} \geq \frac{t}{2e}.
\end{equation}
We can thus relax the condition in~\Cref{eq.gamma_subgaussian} to state that 
\begin{equation}
    \mutx^{1/t} \leq \frac{L}{\sqrt{2 e}} \sqrt t,
\end{equation}
which implies
\begin{equation}
    \prob{\zx \geq b - \sqrt{\frac{\log(2 / \delta)}{2 M}}} \leq \exp\left( - \frac{k^2}{3 L^2}\right), 
\end{equation}
where we have taken $k =\left[ b - \sqrt{\frac{\log(2 / \delta)}{2 M}} - \mu_1\right]$, yielding the desired result. \qed

\section{Proof of \Cref{th.antirandomness_dlp}}\label{app:antirandomness_dlp}
\thmantirandomness*

\begin{proof}
We just need to compute the average anti-randomness for $t = 1$ and $ t= 2$. 

We first make use of the DLP classification problem. In this problem, we define two hyperplanes that exist for every concept class $y_s\in \mathcal{C}$: 
\begin{equation}
        \begin{split}
            |\psi_s^{(1)}\rangle  &= \frac{1}{\sqrt{(p-1)/2}}\sum_{i = 0}^{(p-3)/2}\ket{g^{s+i}}\\
            |\psi_s^{(0)}\rangle  &= \frac{1}{\sqrt{(p-1)/2}}\sum_{i = (p-1)/2}^{p-1}\ket{g^{s+i}},
        \end{split}
\end{equation}
        which define two projectors $\Pi_0 = |\psi_s^{(0)}\rangle \langle\psi_s^{(0)}| $ and $\Pi_1 = |\psi_s^{(1)}\rangle \langle\psi_s^{(1)}|$ with the following properties~\cite{liu2021rigorous}:
        \begin{itemize}
            \item $\langle \psi(x)|\Pi_1| \psi(x) \rangle = \Delta$, for a fraction $1-\Delta$ of $x$ such that $y(x)=1$.
            \item $\langle \psi(x)|\Pi_1| \psi(x) \rangle = 0$, for a fraction $1-\Delta$ of $x$ such that $y(x)=0$.
            \item $\langle \psi(x)|\Pi_1| \psi(x) \rangle \leq \Delta$, for a fraction $\Delta$ of $x$ such that $y(x)=1$.
            \item $\langle \psi(x)|\Pi_0| \psi(x) \rangle \leq \Delta$, for a fraction $\Delta$ of $x$ such that $y(x)=1$.
        \end{itemize}
        We have used the quantity  
        \begin{equation}
            \Delta =\frac{2^{k+1}}{p} \in \Theta(1/\text{poly}(n)), \quad \text{with } k = n - c\log n,
        \end{equation}
        being $c$ a constant. We are interested in the observable $\hat Z_ s$, which in this scenario is defined by 
\begin{equation}
    \Zdlp = \frac{\mathds{I}+( \Pi_0-\Pi_1)(-1)^{y_s(x)}}{2}.
    \label{eq:Z_s_DLP_app}
\end{equation}
For a given $|\psi(x)\rangle \in \mathcal{X}_g$, the expectation value of $\hat Z_s$ in this state will give a smaller value than $1/2$ if the classification is correct and higher than $1/2$ if the classification is incorrect. Notice that the symmetry of this problem allows us to treat both classes analogously. 
        
We can now bound $\mu_1(\hat Z_s, \mathcal X_g)$ as 
\begin{align}
\label{eq:upper_bound_mu1_Zs}
    \mu_1(\hat Z_s, \mathcal X_g) & \leq (1 - \Delta) \frac{1 - \Delta}{2} + \Delta \frac{1 + \Delta}{2} = \frac{1 - \Delta}{2} + \Delta^2 \\ 
    \mu_1(\hat Z_s, \mathcal X_g) & \geq \frac{1 - \Delta}{2}.
\label{eq:lower_bound_mu1_Zs}
\end{align}

On the other hand, we can compute $\mu_1(\Zdlp)$ over Haar-random states making use of~\Cref{eq:first_moment_Haar} and considering the symmetry in the eigenspace of $\Zdlp$: 
\begin{equation}
    \mu_1(\Zdlp) = \frac{1}{2^{n-1}}\left(\frac 12 + \frac{2^n-2}{4}\right) = \frac 12,
\end{equation}
where we take into account that the eigenvalues of $\Zdlp$ are $\vec \lambda = (1, 0, 1/2)$ with multiplicities $\vec m = (1,1,2^n -2)$. With this, we can express the anti-randomness for $t= 1$ as
\begin{equation}
    \mathcal A_1^{(\Zdlp)}(\mathcal{X}_g) = \left\vert\frac{1}{2}-\mu_1(\hat Z_s, \mathcal X_g)\right\vert,
\end{equation}
which we can bound as 
\begin{equation}
     \frac{\Delta}{2} - \Delta^2\leq  \mathcal A_1^{(\Zdlp)}(\mathcal{X}_g) \leq \frac{\Delta}{2} ,
\end{equation}
which ensures us that the 1-antirandomness scales as $\Theta(1/{\rm poly}(n))$, thus our set of states is polynomially bounded-away from the first $\Zdlp-$shadowed Haar-random moment. 

The second moment can be upper bounded as
\begin{equation}
    \mu_2(\hat Z_s, \mathcal X_g) \leq (1 - \Delta) \left(\frac{1 - \Delta}{2} \right)^2 + \Delta \left(\frac{1 + \Delta}{2} \right)^2 = \left(\frac{1 - \Delta}{2} \right)^2 + \Delta^2,  \label{eq:bound_var_DLP}
\end{equation}
hence 
\begin{equation}
    \sigma^2(\Zdlp, \mathcal{X}_g) =  \mu_2(\hat Z_s, \mathcal X_g)-  \mu_1(\hat Z_s, \mathcal X_g)^2\leq \Delta^2.
\end{equation}
Now, for the average anti-randomness for $t=2$ we compute the second standardized moment $\mutbarhaar[2]{\Zdlp}$. In~\Cref{eq:variance_Haar_states}, we have derived an expression for the variance, which takes the following simple form when considering the eigenbasis of $\Zdlp$: 
\begin{equation}
    \sigma^2(\Zdlp) = \frac{1}{2^{n-1}+1}.
\end{equation}
Now, for the $t = 2$ average anti-randomness: 
\begin{equation}
    \mathcal A_2^{(\Zdlp)}(\mathcal{X}_g) = \left\vert  \frac{1}{2^{n-1}+1}- \sigma^2(\Zdlp, \mathcal X_g)\right\vert \in \Theta\left( \operatorname{poly}^{-1}\right).
\end{equation}
This result finishes the proof. 
\end{proof}

\section{Proof of \Cref{le:DLP1}}
\label{appendix:variance_DLP}
\lemDLP*
\begin{proof}
For this proof, we just need to use the bounds in\Cref{eq:lower_bound_mu1_Zs} and \eqref{eq:bound_var_DLP}. These bounds applied to~\Cref{le:chebyshev} allow us to bound the probability of failing in the classification as
\begin{equation}
    \operatorname{Prob}_F \left(  \Zdlp, \mathcal X_{g}\right)  \leq \frac{\Delta^2}{\left( \frac{\Delta}{2} - \Delta^2 - \sqrt{\frac{\log(2 / \delta)}{2 M}}\right)^2} = \frac{1}{\left( \frac{1}{2} - \Delta - \sqrt{\frac{\log(2 / \delta)}{2 M \Delta^2}}\right)^2}.
\end{equation}
Choosing the number of measurements $M = \log (2/\delta) / 2 \left(\Delta / 2 - \Delta^2 + \Delta^{1 / 2} \right)^{-2}\in \Theta(\operatorname{poly}(n))$ we can obtain 
\begin{equation}
    \operatorname{Prob}_F \in \mathcal O\left(\operatorname{poly}^{-1}(n)\right)
\end{equation}
\end{proof}

\section{Proof of \Cref{th:counterexample}}\label{app:counterexample}
\thcounterexample*
\begin{proof}
We have the following expectation value: 
\begin{equation}
    z(\vec x) = \frac{1}{2} - \sqrt{\vec x_{\lfloor n/2\rfloor } \vec x_{\lceil n/2\rceil } },
\end{equation}
and we want to compute its $t$-moments, this is $\mathds E [z(\vec x)^t]$. Let's go step by step: 
\begin{align}
    \mathds E [z(\vec x)^t] = &\mathds E \left[\left(\frac{1}{2} - \sqrt{\vec x_{\lfloor n/2\rfloor } \vec x_{\lceil n/2\rceil } }\right)^t \right] = \mathds E \left[\sum_{k=0}^t (-1)^k \binom{t}{k} \left(\frac{1}{2}\right)^{t-k} \left(\vec x_{\lfloor n/2\rfloor} \vec x_{\lceil n/2\rceil }\right) ^{k/2}\right] =\\& \sum_{k=0}^t (-1)^k \binom{t}{k} \left(\frac{1}{2}\right)^{t-k} \mathds E \left[\left(\vec x_{\lfloor n/2\rfloor} \vec x_{\lceil n/2\rceil }\right) ^{k/2}\right].
    \label{eq:t_exp_value_Dirichlet}
\end{align}
Recalling that the vector $\vec x$ follows a Dirichlet distribution with parameter $\alpha_i = \frac 12 \binom n i $ and $i\in \{0,1,..,n\}$, we apply the the following expression for the $t$-moments of a Dirichlet distribution: 
\begin{equation}
    \mathds E \left[\prod_{i = 0}^k \vec x_i ^{\beta_i}\right] = \frac{\Gamma\left(\sum_{i = 0}^k \alpha_i\right)}{\Gamma \left[ \sum_{i = 0}^k \left(\alpha_i + \beta_i \right)\right]} \prod_{i = 0}^k\frac{\Gamma\left(\alpha_i + \beta_i \right)}{\Gamma(\alpha_i)}.
\end{equation}
Comparing this expression with the last term in~\Cref{eq:t_exp_value_Dirichlet}, we identify that $\beta_i = 0$ for all $i$ except $\beta_{\lfloor n/2\rfloor} = \beta_{\lceil n/2\rceil} = k/2$. Also taking into consideration that $\alpha_i = \frac 12 \binom n i $, we can express the $t$-moment as 
\begin{align}
    \mathds E [z(\vec x)^t] ) = &\sum_{k=0}^t (-1)^k \binom{t}{k} \left(\frac{1}{2}\right)^{t-k} \frac{\Gamma(2^{n - 1})}{\Gamma(2^{n - 1} + k)} \frac{\Gamma\left( \frac 12 \binom{n}{\lceil n/2\rceil} + k/2\right)}{\Gamma\left(\frac 12 \binom{n}{\lceil n/2\rceil}\right)}\frac{\Gamma\left( \frac 12 \binom{n}{\lfloor n/2\rfloor} + k/2\right)}{\Gamma\left(\frac 12 \binom{n}{\lfloor n/2\rfloor}\right)} =\\ &\sum_{k=0}^t (-1)^k \binom{t}{k} \left(\frac{1}{2}\right)^{t-k} \frac{\Gamma(2^{n - 1})}{\Gamma(2^{n - 1} + k)} \left(\frac{\Gamma\left( \frac 12 \binom{n}{\lceil n/2\rceil} + k/2\right)}{\Gamma\left(\frac 12 \binom{n}{\lceil n/2\rceil}\right)}\right)^2,
\end{align}
where we have considered that when $n$ is odd (our initial assumption), then $\binom{n}{\lceil n/2\rceil} = \binom{n}{\lfloor n/2\rfloor}$. 

Now, we compute mean and variance.
\begin{equation}
\label{eq:counterexample_moment1_app}
     \mathds E [z(\vec x)] = \frac 12 -  \frac{\Gamma(2^{n - 1})}{\Gamma(2^{n - 1} + 1)} \left(\frac{\Gamma\left( \frac 12 \binom{n}{\lfloor n/2\rfloor} + 1/2\right)}{\Gamma\left(\frac 12 \binom{n}{\lfloor n/2 \rfloor }\right)}\right)^2 = \frac 12 -\frac{1}{2^{n-1}} \left(\frac{\Gamma\left( \frac 12 \binom{n}{\lfloor n/2\rfloor} + 1/2\right)}{\Gamma\left(\frac 12 \binom{n}{\lfloor n/2 \rfloor }\right)}\right)^2, 
\end{equation}
where we have used that $\Gamma(x+1) = x \Gamma(x)$. For the second moment, we have 
\begin{align}
     \mathds E [z(\vec x)^2] = &\mathds E\left[ \left(\frac{1}{2} - \sqrt{\vec x_{\lfloor n/2\rfloor } \vec x_{\lceil n/2\rceil } }\right)^2\right] = \frac{1}{4}-  \mathds{E}\left[ \sqrt{\vec x_{\lfloor n/2\rfloor }\vec x_{\lceil n/2\rceil }}\right] + \mathds{E}\left[ \vec x_{\lfloor n/2\rfloor }\vec x_{\lceil n/2\rceil }\right]  = \\&
     \frac{1}{4} - \frac{\Gamma(2^{n - 1})}{\Gamma(2^{n - 1} + 1)} \left(\frac{\Gamma\left( \frac 12 \binom{n}{\lceil n/2\rceil} + 1/2\right)}{\Gamma\left(\frac 12 \binom{n}{\lceil n/2\rceil}\right)}\right)^2 + \frac{\Gamma(2^{n - 1})}{\Gamma(2^{n - 1} + 2)} \left(\frac{\Gamma\left( \frac 12 \binom{n}{\lceil n/2\rceil} + 1\right)}{\Gamma\left(\frac 12 \binom{n}{\lceil n/2\rceil}\right)}\right)^2 = \\& 
     \frac 14 - \frac{1}{2^{n-1}}\left(\frac{\Gamma\left( \frac 12 \binom{n}{\lceil n/2\rceil} + 1/2\right)}{\Gamma\left(\frac 12 \binom{n}{\lceil n/2\rceil}\right)}\right)^2 + \frac{\frac{1}{2^2}\binom{n}{\lceil n/2\rceil}^2}{(2^{n-1}+1 )2^{n-1}}.
\end{align}
With this, we can compute the variance:
\begin{align}
    \operatorname{Var} [z(\vec x)] = \mathds E [z(\vec x)^2]- \mathds E [z(\vec x)]^2 = \frac{\frac{1}{2^2}\binom{n}{\lceil n/2\rceil}^2}{(2^{n-1}+1 )2^{n-1}}- \left(\frac{1}{2^{n-1}}\right)^2 \left(\frac{\Gamma\left( \frac 12 \binom{n}{\lfloor n/2\rfloor} + 1/2\right)}{\Gamma\left(\frac 12 \binom{n}{\lfloor n/2 \rfloor }\right)}\right)^4.
\end{align}
Next, we want to bound $\mu_1$ to determine its scaling. In particular, we want to obtain a lower bound for the margin (how much it deviates from $1/2$). We are going to use the following inequality, often refereed as Gautschi's inequality~\cite{Wendel, Gautschi}:
\begin{equation}
    1  \geq \frac{\Gamma(x+s)}{\Gamma(x)x^s} \geq \left(\frac{x}{x+s}\right)^{1-s},
\end{equation}
for $x>0$ and for $s\in(0,1)$. In particular, we are going to use it for $s = 1/2$. Applying the inequality to~\Cref{eq:counterexample_moment1_app} followed by the Stirling approximation $\binom{n}{\lfloor n/2\rfloor} \approx \frac{2^n}{\sqrt{\pi n/2}}$, we have 
\begin{equation}
    \mathds E [z(\vec x)] \leq \frac 12 -  \frac{1}{2^{n-1}}\frac{\frac{1}{4}\binom{n}{\lceil n/2 \rceil }^2}{\frac{1}{2}\binom{n}{\lceil n/2 \rceil }+ \frac{1}{2}}\approx \frac 12 - \frac{2\sqrt 2}{\sqrt{\pi n}}.
\end{equation}
Therefore, we can express 
\begin{equation}
    \frac 12 -  \mathds E [z(\vec x)] \geq \frac{\sqrt{2}}{\sqrt{\pi n}}   .
\end{equation}
For the variance, we can use again Gautschi's inequality together with the triangular inequality, to bound it as follows
\begin{equation}
    \operatorname{Var} [z(\vec x)] \leq \frac{1}{2^2}\binom{n}{\lceil n/2\rceil}^2 \left( \frac{1}{2^{2(n - 1)}} - \frac{1}{2^{n - 1} (2^{n - 1} + 1)}\right) = \binom{n}{\lceil n/2\rceil}^2 \left( \frac{1}{2^{2n} (2^{n - 1} + 1)}\right).
\end{equation}
Using Stirling again, we find
\begin{equation}
    \operatorname{Var} [z(\vec x)] \leq \frac{2}{\pi (2^{n-1} + 1)}.
\end{equation}
Making use of \Cref{le:chebyshev} we can bound the probability of misclassification as
\begin{equation}
    \operatorname{Prob}_F \left( \hat O_X, \mathcal X_{\mathcal{B}, \mathcal{D}}\right) \leq \frac{2^{-n}}{2 \pi} \left( \sqrt{\frac{8}{\pi n}} - \sqrt{\frac{\log(2 / \delta)}{2 M}}\right)^{-2}, 
\end{equation}
which implies that the probability of failure scales as
\begin{equation}
    \operatorname{Prob}_F \left( \hat O_X, \mathcal X_{\mathcal{B}, \mathcal{D}}\right)\in \exp\left(-\Omega(n)\right), 
\end{equation}
for $M \in \Omega(n)$.
\end{proof}
For completeness, we provide approximations in the case where $t \ll 2^n$. We can use Stirling's approximation $\Gamma(x + a) \approx \Gamma(x) x^a$, for $a \ll x$: 
\begin{align}
    \mathds E [z(\vec x)^t] ) \approx &\sum_{k=0}^t (-1)^k \binom{t}{k} \left(\frac{1}{2}\right)^{t-k} \left(\frac{1}{2^{n-1}}\right)^k \left[\frac 12\binom{n}{\lfloor n/2\rfloor}\right]^{k} =
    \left(\frac 12- \frac{1}{2^n}\binom{n}{\lfloor n/2\rfloor}\right)^t.
\end{align}
If we assume that $n$ is large, then we can approximate $\binom{n}{\lfloor n/2\rfloor} \approx \frac{2^n}{\sqrt{\pi n/2}}$: 
\begin{align}
    \mathds E [z(\vec x)^t] \approx \left(\frac 12- \frac{1}{\sqrt{\pi n/2}}\right)^t = \left(\frac{1}{2}\right)^t \left(1- \frac{2\sqrt{2}}{\sqrt{\pi n}}\right)^t, 
\end{align}
implying that the random variable $z(\vec x)$ concentrates sufficiently to apply \Cref{le:Bernstein} and \Cref{le:subgaussian}.

\section{Details on the experiments}
\label{Appendix:learning_problem}
In this section, we introduce the two QML models used as examples to analyze their data-induced randomness. Additionally, we describe the two-dimensional classification learning problem selected for this study.

The first model we employ is a feature-map classifier inspired by Ref.~\cite{havlicek2019supervised}. The classifier consists of two main components: a fixed feature map $W(\vec x)$ and a variational circuit $U(\vec\theta)$. Hence, our set of states is
\begin{equation}
    \xf = \biggl\{ \ket{\psi_{\vec\theta}(\vec x)} \equiv U(\vec\theta)W(\vec x) \ket 0 \biggl\}_x.
\end{equation}
The particular choice of $W(\vec x)$ and $U(\vec\theta)$ is inspired by Ref.~\cite{havlicek2019supervised}. We extend the feature-maps proposed to multi-qubit scenarios as described in~\Cref{fig:circuits} $(a)$ and $(b)$. The variational circuit that we apply after the feature map is a hardware efficient ansatz~\cite{cerezo2021cost},
\begin{equation}
    W(\vec{\theta}) = \prod_{m = 1}^L \left( \bigotimes _{k = 1}^nR_y(\theta_{mk}) \prod_{i,j\in \mathcal{I}} \operatorname{CNOT}_{i,j} \right),
\end{equation}
where $\operatorname{CNOT}_{i,j}$ is the controlled not gate between qubits $i$ and $j$, and $\mathcal{I}$ is a set of indices. In our case, $\mathcal{I}$ is the set of indices with first neighbour connectivity and periodic boundary conditions. 

The second model under consideration is the data re-uploading~\cite{perez-salinas2020data}, in which the encoding and the training process are interleaved in the quantum circuit. The set of states is now 
\begin{equation}\label{eq:reuploading}
    \xru = \left\{ \ket{\psi_{\vec\theta}(\vec x)} \equiv \prod_{l = 1}^L U(\vec\theta_l, x) \ket 0 \right\}_x.
\end{equation}
where $\vec\theta_l$ are the trainable parameters. 
In our experiments, the encoding and trainable gates are interleaved in a layer-wise structure, see~\Cref{eq:reuploading}. In particular, we use an ansatz given by~\Cref{fig:circuits} $(c)$. This block constitutes a layer, and we repeat it $L$ times.

\begin{figure}
    \centering
    \includegraphics[width=1\linewidth]{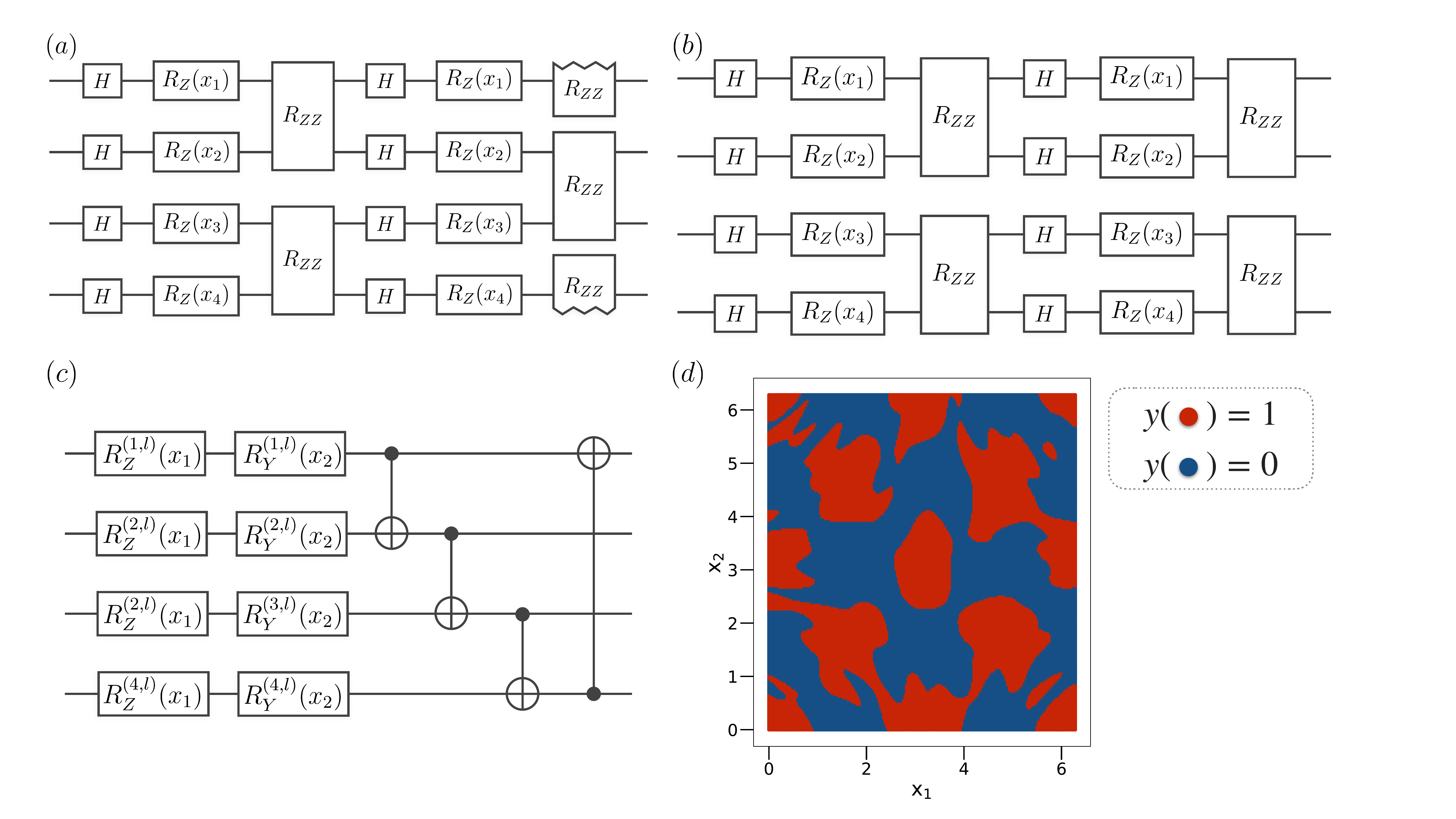}
    \caption{Quantum circuits for encoding the dataset $\vec{x}$ and classification pattern. The gate $R_{ZZ}(x_i, x_j)$ is defined as $R_{ZZ}\left(2(\pi - x_i)(\pi - x_j)\right)$, where $i$ and $j$ denote the $i$-th and $j$-th qubits, respectively. $({a}) $ Brick-layer encoding. $({b})$ Non-brick-layer encoding. Circuit that represents a single layer $l$ of the data re-uploading model used in this work. We have defined the rotation gates $R_Z(n,l)(x_1) = R_Z(\theta^{n,l}_1 x_1 + \theta^{n,l}_2)$ and $R_Y(n,l)(x_1) = R_Y(\theta^{n,l}_3 x_2 + \theta^{n,l}_4)$. $(c)$ A single layer of the data re-uploading circuit utilized in our study. $(d)$ The classification pattern that the model aims to learn.
 }
    \label{fig:circuits}
\end{figure}

The learning problem that we use in both the feature map and data re-uploading classifiers is a two-dimensional classification problem. We have a training set given by $\{\mathcal{X}, \mathcal{Y}\}$, where $\mathcal{X}$ is the two-dimensional data set  $(x_1, x_2)$ and $\mathcal{Y}$ their associated labels, which can take values $y\in \{1,0\}$. The goal of the learning algorithms is to find a function $g: \mathcal{X}\rightarrow \mathcal{Y}$ that correctly labels most of the input data. In our case, this function is the sign of the expectation value. In particular, the observable that we use is $\sigma^{(z)} := \sigma_1^{(z)}\otimes \sigma_2^{(z)}\otimes...\otimes \sigma_n^{(z)}$, being $\sigma_i^{(z)}$ the $Z-$Pauli matrix acting on the $i-$th qubit. 

The observable that we use in the loss function is  $\hat{Z}_y = \frac{1}{2}(\mathds{I}-y(\vec x)\sigma^{(z)})$, where $y(\vec x)$ is the correct label associated with $\vec x$. Therefore, the loss function is given by 
\begin{equation}
      \mathcal{L}(\vec{\theta}) = \sum_{\vec{x}\in X_{\text{train}}} \langle \psi(\vec{x}, \vec{\theta})|\hat{Z}_y|\psi(\vec{x}, \vec{\theta} )\rangle.
\end{equation}
The predicted label is given by  $y' = \text{sign}\left( \langle \psi(\vec{x}, \vec{\theta} )|\sigma^{(z)}|\psi(\vec{x}, \vec{\theta} )\rangle \right)$. For data points with a true label $y(\vec x)= 1$ ($y(\vec x)) = 0)$, the loss function rewards the expectation value of $\sigma^{(z)}$ being positive (negative). Ideally, the algorithm learns the separating hyperplane in the feature-map model and the optimal data mapping in the data re-uploading model to achieve accurate classification. We employ the gradient descent-based L-BFGS-B optimization algorithm to minimize the loss function. 

In both models, the random variable used for signaling correctly or wrongly classified data points is 
\begin{equation}
   \zx = \frac 12 \langle \psi_{\vec\theta}(\vec x) \vert \left( \mathds{I} - y(\bm x) \sigma^{(z)}\right) \vert \psi_{\vec\theta}(\vec x)\rangle,
\end{equation}
where $y(\vec x)$ is the true data-label associated to $\vec x$. 

Finally, we discuss how we design the classification pattern that we want the models to learn. We could have used a regular pattern, like points inside or outside of the circuit. Instead, we chose to work with the pattern that is proposed in Ref. \cite{havlicek2019supervised}. The dataset is created synthetically according to the following quantity: if $\langle E(\vec{x_i})| V^\dagger \sigma^{(z)} V |E(\vec{x_i})\rangle>0$, then $y(\vec x_ i)= 1$, and $y(\vec x_i)=0$ otherwise. We have defined the encoding vector $|E(\vec{x_i})\rangle$ as the resulting state of applying the  feature map (Fig. \ref{fig:circuits} $(a)$) to the initial state. We choose $V$ to be a random unitary matrix sampled from $SU(2^n)$. 
For simplicity, we generate a single dataset for $n= 2$ and use it consistently across all models. The classification pattern is illustrated in~\Cref{fig:circuits} $(d)$. For every choice of the unitary $V$, we create a different dataset.

\end{document}